\newcommand{\nats}{\mathbb{N}}
\newcommand{\nnints}{\mathbb{Z}_{\ge0}}
\newcommand{\reals}{\mathbb{R}}
\newcommand{\nnreals}{\mathbb{R}_{\ge0}}
\newcommand{\mathone}{\mathds{1}} 
\renewcommand{\epsilon}{\varepsilon}
\newcommand{\calA}{\mathcal{A}}
\newcommand{\calD}{\mathcal{D}}
\newcommand{\calG}{\mathcal{G}}
\newcommand{\calH}{\mathcal{H}}
\newcommand{\calM}{\mathcal{M}}
\newcommand{\calR}{\mathcal{R}}
\newcommand{\calS}{\mathcal{S}}
\newcommand{\calT}{\mathcal{T}}
\newcommand{\calU}{\mathcal{U}}
\newcommand{\calX}{\mathcal{X}}
\newcommand{\calY}{\mathcal{Y}}
\newcommand{\E}{\operatorname{\mathbb{E}}}
\newcommand{\V}{\operatorname{\mathbb{V}}}
\newcommand{\bmf}{\mathbf{f}}
\newcommand{\bmm}{\mathbf{m}}
\newcommand{\bmPhi}{\mathbf{\Phi}}
\newcommand{\bmPsi}{\mathbf{\Psi}}
\newcommand{\hf}{\hat{f}}
\newcommand{\hPhi}{\hat{\Phi}}
\newcommand{\hPsi}{\hat{\Psi}}
\newcommand{\tc}{\tilde{c}}
\newcommand{\tf}{\tilde{f}}
\newcommand{\tn}{\tilde{n}}
\newcommand{\bmhf}{\mathbf{\hf}}
\newcommand{\bmhPhi}{\mathbf{\hPhi}}
\newcommand{\bmhPsi}{\mathbf{\hPsi}}
\newcommand{\la}{\langle}
\newcommand{\ra}{\rangle}
\newcommand{\pkd}{{\textsf{pk}_\textsf{d}}}
\newcommand{\pks}{{\textsf{pk}_\textsf{s}}}
\newcommand{\LNF}{\textsf{LNF}}
\newcommand{\CH}{\textsf{CH}}
\newcommand{\FME}{\textsf{FME}}
\newcommand{\plain}{\textsf{plain}}
\newcommand{\KV}{\textsf{KV}}
\newcommand{\PS}{\textsf{PS}}
\newcommand{\GMGA}{G_f^{\textsf{max}}}
\newcommand{\GMTGAPhi}{G_\Phi^{\textsf{max}}}
\newcommand{\GMTGAPsi}{G_\Psi^{\textsf{max}}}
\def\ProposalL{\textsf{Proposal (large $l$)}}
\def\ProposalS{\textsf{Proposal (small $l$)}}
\def\ProposalA{\textsf{Proposal*}}
\def\BC{\textsf{BC20}}
\def\CM{\textsf{CM22}}
\def\LWYL{\textsf{LWY22-Large}}
\def\Geo{\textsf{Geo}}
\newcommand{\colorB}[1]{\textcolor{black}{#1}}
\newtheorem{definition}{Definition}
\newtheorem{theorem}{Theorem}
\newtheorem{lemma}{Lemma}
\newif\ifconferenceon\conferenceonfalse
\newcommand{\conference}[1]{#1}
\newcommand{\arxiv}[1]{}
\newcommand{\conference}[1]{}
\newcommand{\arxiv}[1]{#1}
\begin{document}
\title{Augmented Shuffle Differential Privacy Protocols for Large-Domain Categorical and Key-Value Data}

\author{\IEEEauthorblockN{Takao Murakami}
	\IEEEauthorblockA{ISM/AIST/RIKEN AIP\\
		tmura@ism.ac.jp}
	\and
	\IEEEauthorblockN{Yuichi Sei}
	\IEEEauthorblockA{UEC\\
		seiuny@uec.ac.jp}
	\and
	\IEEEauthorblockN{Reo Eriguchi}
	\IEEEauthorblockA{AIST\\
		eriguchi-reo@aist.go.jp}}

\IEEEoverridecommandlockouts
\makeatletter\def\@IEEEpubidpullup{6.5\baselineskip}\makeatother
\conference{\IEEEpubid{\parbox{\columnwidth}{
		Network and Distributed System Security (NDSS) Symposium 2026\\
		23 - 27 February 2026 , San Diego, CA, USA\\
		ISBN 979-8-9919276-8-0\\  
		https://dx.doi.org/10.14722/ndss.2026.231124\\
		www.ndss-symposium.org
}
\hspace{\columnsep}\makebox[\columnwidth]{}}}

\maketitle

\begin{abstract}
Shuffle DP (Differential Privacy) protocols  
provide high accuracy and privacy by introducing a shuffler who randomly shuffles data 
in a distributed system. 
However, most shuffle DP protocols are vulnerable to two attacks: collusion attacks by the data collector and users and data poisoning attacks. 
A recent study addresses this issue by introducing an augmented shuffle DP protocol, where users do not add noise and the shuffler performs random sampling and dummy data addition. 
However, it focuses on frequency estimation over categorical data with a small domain and 
cannot be applied to 
a large domain due to prohibitively high communication and computational costs. 

In this paper, we 
fill this gap by introducing a novel augmented shuffle DP protocol called the FME (Filtering-with-Multiple-Encryption) protocol. 
Our FME protocol uses a hash function to filter out unpopular items and then accurately calculates frequencies for popular items. 
To perform this 
within one round of interaction between users and the shuffler, our protocol carefully 
communicates within a system using multiple encryption. 
We also apply our FME protocol to more advanced KV (Key-Value) statistics estimation with an additional technique to reduce bias. 
For both categorical and KV data, we prove that our protocol provides computational DP, high robustness to the above two attacks, accuracy, and efficiency. 
We show the effectiveness of our proposals through comparisons with twelve existing protocols. 
\end{abstract}

\IEEEpeerreviewmaketitle

\section{Introduction}
\label{sec:intro}
DP (Differential Privacy)~\cite{DP} has been widely adopted by industry~\cite{Erlingsson_CCS14,Ding_NIPS17,Thakurta_USPatent17} and government agencies~\cite{Drechsler_JASA23} 
to perform data analysis 
while protecting individual privacy. 
DP has been originally studied in the central model, where a single data collector holds all users' personal data and adds noise to the statistical results. 
Although central DP enables accurate data analysis, all personal data can be leaked by 
data breach incidents~\cite{data_breach_2024}. 
LDP (Local DP)~\cite{Kasiviswanathan_FOCS08,Wang_NDSS20,Ye_NDSS25} addresses this issue by 
adding noise to each user's personal data before sending it to the data collector. 
However, LDP suffers from low accuracy, as it needs to add a lot of noise to each user's personal data. 

The shuffle model of DP~\cite{Bittau_SOSP17,Erlingsson_SODA19,Cheu_EUROCRYPT19,Balle_CRYPTO19,Girgis_CCS21,Feldman_FOCS21,Feldman_SODA23} has been recently studied to achieve high accuracy and privacy. 
It assumes a distributed system involving an intermediate server called the \textit{shuffler} 
and typically works as follows. 
Each user adds noise to her input data and sends an encrypted version of the noisy data to the shuffler. 
The shuffler randomly shuffles the noisy data and sends it to the data collector. 
Finally, the data collector decrypts the shuffled 
data. 
Under the assumption that the shuffler does not collude with the data collector, the random shuffling amplifies privacy. 
Specifically, DP strongly protects privacy when the privacy budget $\epsilon$ is small (e.g., $\epsilon \leq 1$), and the shuffling significantly reduces $\epsilon$. 
Thus, the shuffle model achieves the same value of $\epsilon$ as the local model with less noise, i.e., higher accuracy. 

However, most shuffle DP protocols have the following two vulnerabilities. 
First, they are vulnerable to \textit{collusion attacks by the data collector and users} (or \textit{collusion with users})~\cite{Wang_PVLDB20,Murakami_SP25}. 
Specifically, the data collector can obtain noisy data of some users by colluding with them or compromising their accounts. 
In this case, the data collector can reduce the number of shuffled values and thereby reduce the effect of shuffling. 
As a result, the actual value of $\epsilon$ can be significantly increased (e.g., from around $1$ to $8$; see Section~\ref{sub:collusion}). 
Second, most shuffle protocols are vulnerable to \textit{data poisoning attacks}~\cite{Murakami_SP25,Cao_USENIX21,Wu_USENIX22}, which inject fake users and carefully craft data sent from the fake users to manipulate the statistical results. 

A recent study \cite{Murakami_SP25} addresses these two issues by introducing an \textit{augmented} shuffle model, where the shuffler performs additional operations, such as random sampling and adding dummies, before shuffling. 
Specifically, \cite{Murakami_SP25} proposes the LNF (Local-Noise-Free) protocol, in which each user sends her (encrypted) input data to the shuffler without adding noise, and the shuffler performs random sampling, adding dummies, and shuffling. 
The key idea of this protocol is to prevent malicious users' behavior by adding noise on the shuffler side rather than the user side. 
The protocol can also be easily implemented using any PKE (Public Key Encryption) scheme, such as RSA and ECIES. 
It is shown in \cite{Murakami_SP25} that the LNF protocol is robust to both collusion with users and poisoning from users while also providing higher accuracy than other shuffle protocols. 

Unfortunately, the LNF protocol in \cite{Murakami_SP25} is limited to 
a simple frequency estimation task with a small domain and cannot be applied to data analysis with a large domain. 
Specifically, both the communication and computational costs of the LNF protocol are linear in the number $d$ of items and are prohibitively large when $d$ is large. 
For example, our experimental results show that the LNF protocol would require about $100$ Terabits of communications and $3$ years of run time to calculate a frequency distribution when $d=10^9$ (see Section~\ref{sec:exp}). 
The LNF protocol also cannot be applied to a more complicated task, such as KV (Key-Value) statistics estimation~\cite{Ye_SP19,Ye_TDSC23,Gu_USENIX20}, with large $d$ for the same reason. 

In this work, we fill this gap by 
introducing a novel protocol for large-domain data. 
We first focus on frequency estimation over categorical data with a large domain and consider a simple extension of the LNF protocol 
that reduces the domain size using a hash function common to all users. 
We analyze the theoretical properties of this protocol and show that it suffers from low accuracy due to hash collision. 
We also show that this issue cannot be addressed by introducing a different hash function for each user (or each user group assigned in advance), as dummies need to be added for each hash function. 

To achieve high accuracy and efficiency, we propose a novel protocol called the \textit{FME (Filtering-with-Multiple-Encryption) protocol}. 
\colorB{Below, we explain its technical overview.}

\subsection{\colorB{Technical Overview}}

\noindent{\textbf{\colorB{Our Protocol.}}}~~In our FME protocol, we use a hash function to filter out unpopular items with low (or zero) frequencies rather than to calculate frequencies. 
Then, we accurately calculate frequencies for the selected items. 
This can be realized by introducing a two-round interaction between users and the shuffler, where each user sends her hash value for in the first round and her input data or a symbol ``$\bot$'' representing an unselected item in the second round. 
However, the two-round interaction significantly reduces the usability, as each user must respond to the shuffler twice. 
Moreover, it needs \textit{synchronization} \cite{Imola_CCS22} in that the shuffler must wait for all users' responses before shuffling in each round. 
Thus, the two-round protocol would not be practical for many systems.

We overcome this issue by 
\textit{replacing unselected items with $\bot$ on the data collector side} and \textit{introducing multiple encryption} \cite{AppliedCrypto_book}. 
Specifically, 
our FME protocol achieves \textit{one round of interaction} between users and the shuffler as follows. 
Each user sends her hash value and input data simultaneously to the shuffler. 
The shuffler performs augmented shuffling for the hash values and sends them and the corresponding input data to the data collector. 
The data collector filters items based on the shuffled hash values, replaces unselected items in the input data with $\bot$, and sends them back to the shuffler. 
Finally, the shuffler performs augmented shuffling for the input data, 
and the data collector calculates frequencies from the shuffled input data. 
Note that the input data are communicated between the shuffler and the data collector \textit{three times} in this protocol. 
Thus, a lot of information can be leaked by comparing them. 
To prevent this leakage, we use multiple encryption for the input data and have the shuffler and data collector decrypt the input data each time they receive them. 
We rigorously analyze 
the privacy 
of our protocol and prove it achieves 
computational DP~\cite{MPRV09,EIKN23}. 
We also show that it achieves high robustness against collusion and poisoning attacks, accuracy, and efficiency. 
Furthermore, we optimize the range of the hash function in terms of efficiency. 

Then, we apply our FME protocol to KV statistics estimation, where each user has KV pairs (e.g., movies and ratings) and the goal is to estimate the frequency and mean value for each key (item). 
For this data type, we propose an additional technique called \textit{TKV-FK (Transforming KV Pairs and Filtering Keys)}, which transforms KV pairs into one-dimensional data and filters the data at a key level to 
reduce 
bias in the estimates. 
We extensively evaluate our proposals and show they are effective for both categorical and KV data. 

\smallskip{}
\noindent{\textbf{\colorB{Technical Novelty.}}}~~\colorB{Our main technical novelty lies in a technique that carefully uses multiple encryption to provide (computational) DP within one round of interaction for users, and in the rigorous proof of DP. 
To our knowledge, we are the first to use multiple encryption to reduce the number of rounds while providing DP (see Section~\ref{sec:related} for details). 
In addition, our FME protocol does not use multiple encryption merely to implement onion routing~\cite{Scherer_PoPETS24}. 
More importantly, multiple encryption enables the data collector to \textit{remove ciphertexts corresponding to unpopular items without revealing this fact to the shuffler} by replacing them with ciphertexts that encrypt $\bot$. 
We prove the DP guarantee of such a protocol by reducing it to the security of a PKE scheme. 
Our work also includes other technical contributions, such as the optimization of the hash range and a technique to reduce bias (i.e., TKV-FK).}

\colorB{Furthermore, we prove many theorems that are not simple extensions of \cite{Murakami_SP25}. 
Examples include our theoretical results for KV data (e.g., Theorems~\ref{thm:KV_poisoning} and \ref{thm:KV_accuracy}), as \cite{Murakami_SP25} does not deal with KV data. 
In particular, the existing robustness analysis for KV data \cite{Cao_USENIX21} assumes that the number of each user's KV pairs does not exceed the padding length \cite{Gu_USENIX20}, which may not hold in practice. 
Theorem~\ref{thm:KV_poisoning} removes this assumption and shows the robustness of our protocol in a general setting where the number of KV pairs can exceed the padding length. 
Another example is Theorem~\ref{thm:FME_communication}, which analyzes the communication cost of our FME protocol based on the size of single, double, or triple ciphertexts and a bound on the expected number of selected items. This theorem is also new and serves as a basis for optimizing the hash range. 
We also present a new theorem (Theorem~\ref{thm:CH_accuracy}) that shows the low accuracy of a simple extension of the LNF protocol \cite{Murakami_SP25} using the common hash function under the 2-wise independence assumption \cite{Patrascu_TALG15}. 
}

\subsection{\colorB{Our Contributions}}
We make the following contributions: 
\begin{itemize}
\item We propose a novel augmented shuffle protocol called the FME protocol for large-domain categorical and KV data. 
Our protocol achieves high privacy, robustness, accuracy, and efficiency within one round of interaction for users using multiple encryption. 
For KV data, we propose an additional technique called TKV-FK to reduce bias. 
\item We demonstrate the effectiveness of our proposals through theoretical analysis and extensive experiments that compare ours with twelve state-of-the-art shuffle protocols (eight for categorical and four for KV data). 
\end{itemize}
Compared to the LNF protocol \cite{Murakami_SP25}, our protocol reduces the communication cost from about $100$ Terabits to $260$ Gigabits and the run time from about $3$ years to $1$ day ($d=10^9$). 
\colorB{Note that $d$ is much smaller than $10^9$ in most practical applications (e.g., there are $d = 8 \times 10^6$ census blocks in the US; Amazon has $d = 6 \times 10^8$ products in total~\cite{amazon_statistics}), in which case the communication and computational costs are also smaller.} 
We also show that our protocol can be applied to a system with large-scale users and items (e.g., item rating system~\cite{amazon_dataset}) by introducing user sampling. 
\conference{The proofs of all statements are given in our full paper~\cite{Murakami_arXiv25}.}\arxiv{The proofs of all statements are given in Appendices~\ref{sec:proofs_baseline} to \ref{sec:proofs_key-value}.} 
Our code is available in \cite{FME_code_Zenodo}.

\section{Related Work}
\label{sec:related}
\noindent{\textbf{Shuffle DP.}}~~The shuffle model of DP 
can be divided into two models: a \textit{pure shuffle model} and an \textit{augmented shuffle model}~\cite{Murakami_SP25}. 
Most existing work (e.g.,~\cite{Erlingsson_SODA19,Cheu_EUROCRYPT19,Balle_CRYPTO19,Girgis_CCS21,Feldman_FOCS21,Feldman_SODA23,Luo_CCS22,Balcer_ITC20,Cheu_SP22}) assumes the pure shuffle model, where the shuffler performs only shuffling. 
However, this model is vulnerable to collusion with users and data poisoning. 
This vulnerability is inevitable 
in this model because genuine users need to add noise to their input data; 
in this case, 
the data collector can increase $\epsilon$ by obtaining noisy data of some users, and fake users can effectively manipulate the statistical results by \textit{not} adding noise to their input data~\cite{Cao_USENIX21}. 

Beimel \textit{et al.}~\cite{Beimel_TCC20} show 
a pure shuffle protocol for general tasks, which achieves accuracy comparable to the central 
one. 
However, their protocol requires two rounds of interaction 
for 
users whereas ours requires 
only one round of interaction. 

Some protocols, including Google's Prochlo~\cite{Bittau_SOSP17}, assume the augmented shuffle model, where the shuffler performs additional operations, e.g., randomized thresholding~\cite{Bittau_SOSP17}, random sampling~\cite{Girgis_NeurIPS21}, and adding dummies~\cite{Wang_PVLDB20}. 
A recent study~\cite{Murakami_SP25} proposes the LNF protocol that does not add noise on the user side and shows that it is robust to both collusion and poisoning attacks. 
However, this protocol cannot be applied to large-domain data. 
We address this issue by introducing a novel protocol with multiple encryption.  

\smallskip{}
\noindent{\textbf{Collusion/Poisoning Attacks.}}~~Some studies~\cite{Wang_PVLDB20,Murakami_SP25} show that pure shuffle protocols are vulnerable to collusion with users. 
To address this issue, 
Wang \textit{et al.}~\cite{Wang_PVLDB20} 
add dummies uniformly at random from the domain of noisy data on the shuffler side. 
However, their protocol still suffers from the increase in $\epsilon$ by collusion with users, as shown in Appendix~\ref{sub:existing_defenses}. 

Data poisoning attacks have been studied in various data types, e.g., categorical~\cite{Cao_USENIX21,Cheu_SP21}, KV~\cite{Wu_USENIX22}, numerical~\cite{Li_USENIX23}, and set-valued data~\cite{Tong_CCS24}. 
Although these attacks assume the local model, they can also be applied to the shuffle model. 
Defenses 
have also been studied for categorical or KV data. 
The defenses in~\cite{Huang_TKDE24,Song_TIFS23,Kato_DBSec21,Horigome_MDAI23} introduce multiple rounds for users and reduce usability. 
The defenses in~\cite{Cao_USENIX21,Sun_ICDE24} have limited effectiveness~\cite{Murakami_SP25}.
We also show that the defense in~\cite{Wu_USENIX22} has limited effectiveness in Appendix~\ref{sub:existing_defenses}. 

\smallskip{}
\noindent{\textbf{Cryptographic Protocols.}}~~Multiple encryption has been studied 
in the field of cryptography~\cite{AppliedCrypto_book,Dodis_TCC05,Dai_CRYPTO14}. 
Its applications include key-insulated encryption~\cite{Dodis_TCC05}, onion routing~\cite{Scherer_PoPETS24}, mix-net~\cite{Furukawa_FC06}, and 
an instant messenger~\cite{GoldBug}. 
To our knowledge, we are the first to use multiple encryption to provide 
DP 
for distributed systems 
within one round of interaction for users.

Finally, a DP protocol using 
secure multi-party computation 
is proposed in \cite{Bell_CCS22} to calculate a frequency distribution. 
Their protocol requires a PKE scheme with a homomorphic property, 
whereas ours can use any PKE scheme based on a wider class of assumptions.
Moreover, the protocol in \cite{Bell_CCS22} 
cannot be applied to more advanced KV statistics estimation, as (i) each user may hold multiple KV pairs, and (ii) the data collector needs to estimate both frequency and mean 
for each key. 
In contrast, our protocol can be used to estimate KV statistics. 

\section{Preliminaries}
\label{sec:preliminaries}
\subsection{Notations}
\label{sub:notations}
Let $\reals$, $\nnreals$, $\nats$, and $\nnints$ be the sets of real numbers, non-negative real numbers, natural numbers, and non-negative integers, respectively. 
Let $n \in \nats$ be the number of users, and $d \in \nats$ be the number of items. 
For $i \in [n]$ $(= \{1,2,\ldots,n\})$, let $u_i$ be the $i$-th user. 
Let $\calX$ be the space of input data, and $x_i \in \calX$ be the input data of user $u_i$. 
Sections~\ref{sec:baseline} and \ref{sec:proposed} focus on frequency estimation over categorical data, whereas Section~\ref{sec:key-value} focuses on frequency and mean estimation over KV data. 
Below, we introduce the notations for each case. 

\smallskip{}
\noindent{\textbf{Categorical Data.}}~~For categorical data, 
we follow \cite{Murakami_SP25,Cao_USENIX21,Wang_USENIX17} and 
represent an item as an integer from $1$ to $d$. 
Each user's input data $x_i$ ($i \in [n]$) is an item, 
i.e., 
$\calX = [d]$. 
The data collector estimates the (relative) frequency $f_i \in [0,1]$ for each item $i \in [d]$. 
The frequency $f_i$ is given by $f_i = \frac{1}{n}\sum_{j=1}^n \mathone_{x_j = i}$, where $\mathone_{x_j = i}$ takes $1$ if $x_j=i$ and $0$ otherwise. 
We denote the estimate of $f_i$ by $\hf_i \in \reals$. 
Let $\bmf = (f_1, \ldots, f_d)$ and $\bmhf = (\hf_1, \ldots, \hf_d)$. 
The goal for the data collector is to calculate $\bmhf$ as close as possible to $\bmf$ under DP. 

\smallskip{}
\noindent{\textbf{KV Data.}}~~For KV data, we follow \cite{Wu_USENIX22,Ye_SP19,Ye_TDSC23,Gu_USENIX20} and represent a key (item) as an integer from $1$ to $d$ and a numerical value as a real value between $-1$ and $1$. 
Note that we can assume that the values are in the range $[-1,1]$ without loss of generality, as numerical values can be transformed into $[-1,1]$. 
In this use case, each user's input data $x_i$ ($i \in [n]$) is a set of KV pairs $\langle k, v \rangle$, where $k \in [d]$ and $v \in [-1,1]$. 
Note that each user has at most one KV pair per key, 
i.e., $\calX = \bigcup_{i=1}^d ([d] \times [-1,1])^i$. 
The data collector estimates the frequency $\Phi_i \in [0,1]$ and the mean value $\Psi_i \in [-1,1]$ for each key $i \in [d]$. 
They are given by 
\begin{align*}
\textstyle{\Phi_i = \frac{1}{n}\sum_{j=1}^n \mathone_{\langle k, \cdot \rangle \in x_j}, ~~ 
\Psi_i = \frac{1}{n \Phi_i}\sum_{j \in [n],\langle k, v \rangle \in x_j} v,} 
\end{align*}
where $\mathone_{\langle k, \cdot \rangle \in x_j}$ takes $1$ if $x_j$ includes key $k$ and $0$ otherwise. 
Let $\hPhi_i$ (resp.~$\hPsi_i$) $\in \reals$ be the estimates of $\Phi_i$ (resp.~$\Psi_i$). 
Let $\bmPhi = (\Phi_1, \ldots, \Phi_d)$, $\bmPsi = (\Psi_1, \ldots, \Psi_d)$, $\bmhPhi = (\hPhi_1, \ldots, \hPhi_d)$, and $\bmhPsi = (\hPsi_1, \ldots, \hPsi_d)$. 
The goal is to calculate $\bmhPhi$ and $\bmhPsi$ as close as possible to $\bmPhi$ and $\bmPsi$, respectively, under DP. 

We also summarize our notations in Table~\ref{tab:notations} of Appendix~\ref{sec:notation_table}. 

\subsection{Differential Privacy}
\label{sub:DP}
\colorB{In this work, we use DP~\cite{DP} and its computational version called CDP (Computational DP)~\cite{MPRV09,EIKN23} as privacy notions:} 

\begin{definition} [$(\epsilon,\delta)$-DP/CDP] \label{def:DP} 
Let $\epsilon \in \nnreals$ and $\delta \in [0,1]$. 
We say a randomized algorithm $\calM$ with domain 
$\calX^n$ 
provides \emph{$(\epsilon,\delta)$-DP} if for any neighboring databases 
$D = (x_1, \ldots, x_n) \in \calX^n$ and $D' = (x'_1, \ldots, x'_n) \in \calX^n$ 
that differ on one entry 
and any $S \subseteq \mathrm{Range}(\calM)$, 
\begin{align}
\Pr[\calM(D) \in S] \leq e^\epsilon \Pr[\calM(D') \in S] + \delta.
\label{eq:DP_inequality}
\end{align}
We also say $\calM$ 
provides \emph{$(\epsilon,\delta)$-CDP (Computational DP)} if for any 
attacker $\calA$ running in time polynomial in $\gamma \in \nats$, 
\begin{align}
\Pr[\calA(\calM(D)) = 1] \leq e^\epsilon \Pr[\calA(\calM(D')) = 1] + \delta + \mathsf{negl}(\gamma), 
\label{eq:DP_inequality_comp}
\end{align}
where 
$\mathsf{negl}$ is a function that approaches $0$ faster than the reciprocal of any polynomial in $\gamma$. 
\end{definition}
In our work, $\gamma$ coincides with the security parameter of a PKE scheme.
$(\epsilon,\delta)$-CDP (resp.~DP) can be used for shuffle DP protocols with (resp.~without) PKE schemes. 
In shuffle protocols, the shuffler or the data collector can be an attacker $\calA$. In either case, all messages the attacker receives during the protocols are outputs of $\calM$. 
Note that $\epsilon \geq 5$ is unsuitable in most applications~\cite{DP_Li}. 
$\delta$ should be much smaller than $\frac{1}{n}$~\cite{DP}.  

We also introduce LDP~\cite{Kasiviswanathan_FOCS08,Duchi_FOCS13}, \colorB{which can be used as a building block for pure shuffle DP protocols:} 
\begin{definition} [$\epsilon$-LDP] \label{def:LDP} 
Let $\epsilon \in \nnreals$ and $\delta \in [0,1]$. 
We say a randomized algorithm $\calR$ with domain $\calX$ 
provides \emph{$\epsilon$-LDP} if for any input values $x,x' \in \calX$ and any $S \subseteq \mathrm{Range}(\calR)$, 
\begin{align}
\Pr[\calR(x) \in S] \leq e^\epsilon \Pr[\calR(x') \in S].
\label{eq:LDP_inequality}
\end{align}
\end{definition}
Examples of LDP mechanisms $\calR$ include the GRR~\cite{Wang_PVLDB20}, RAPPOR~\cite{Erlingsson_CCS14}, OUE~\cite{Wang_USENIX17}, and OLH~\cite{Wang_USENIX17}. 

\subsection{Pure Shuffle Model}
\label{sub:shuffle}
\colorB{Below, we explain the pure shuffle model assumed in most existing shuffle DP protocols.} 
Assume 
that the data collector has a private key and publishes the corresponding public key. 
In the pure shuffle protocols, 
each user $u_i$ adds noise to her data $x_i$ using a randomized algorithm $\calR$, encrypts it using the public key, and sends the encrypted version of $\calR(x_i)$ to the shuffler. 
The shuffler randomly shuffles (encrypted) noisy data $\calR(x_1), \dots, \calR(x_n)$ and sends them to the data collector. 
The data collector decrypts the shuffled data using the private key. 

Under the assumption that the shuffler does not collude with the data collector, 
the data collector obtains only the shuffled data. 
The shuffled data provides $(\epsilon, \delta)$-DP, where 
$\epsilon = g(n,\delta)$ 
and $g$ is a monotonically decreasing function of $n$ and $\delta$. 
For example, when $\calR$ provides LDP, $g$ is expressed 
using the state-of-the-art privacy amplification result \cite{Feldman_SODA23} 
as follows: 

\begin{theorem}[Privacy amplification result in~\cite{Feldman_SODA23}]
\label{thm:privacy_amplification}
Let $\epsilon_0 \in \nnreals$ and $D = (x_1, \ldots, x_n) \in \calX^n$. 
Let $\calR: \calX \rightarrow \calY$ be a randomized algorithm. 
Let $\calM_S: \calX^n \rightarrow \calY^n$ be a pure shuffle algorithm that takes $D$ as input, samples a uniform random permutation $\pi$ over $[n]$, and outputs $\calM_S(D) = (\calR(x_{\pi(1)}), \ldots, \calR(x_{\pi(n)}))$. 
If $\calR$ provides $\epsilon_0$-LDP, then for any $\delta \in [0,1]$, $\calM_S$ provides $(\epsilon, \delta)$-DP with $\epsilon = g(n,\delta)$, where 
\begin{align}
g(n,\delta) = \textstyle{\ln\left(1 + (e^{\epsilon_0} - 1) \frac{4\sqrt{2 \ln(4/\delta)}}{\sqrt{(e^{\epsilon_0} + 1)n}} + \frac{4}{n} \right)}
\label{eq:g_SODA23}
\end{align}
if $\epsilon_0 \leq \ln(\frac{n}{8\ln(2/\delta)}-1)$ and $g(n,\delta)=\epsilon_0$ otherwise. 
\end{theorem}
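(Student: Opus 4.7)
The plan is to follow the ``blanket decomposition'' strategy used by Feldman, Mironov, and Talwar. First, since $\calM_S$ applies a uniformly random permutation, its output distribution depends on the input database only through the multiset of noisy values; so I would argue it suffices to bound the hockey-stick divergence between the multisets generated from two neighboring databases $D=(x_1,\ldots,x_n)$ and $D'=(x'_1,\ldots,x'_n)$ that, WLOG, differ only at position $1$ (i.e., $x_1 \neq x'_1$ and $x_j = x'_j$ for $j \geq 2$).

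Second, I would establish the key structural lemma: any $\epsilon_0$-LDP mechanism $\calR$ admits a decomposition of the form $\calR(x) = (1-q)\,\mu_x + q\,\omega$, where $\omega$ is a distribution on $\calY$ that does \emph{not} depend on $x$, $\mu_x$ is a residual distribution, and $q$ is a positive weight controlled by $\epsilon_0$. Equivalently, each user's noisy value equals a ``blanket'' draw from $\omega$ with probability $q$ and an input-dependent draw from $\mu_{x_j}$ with probability $1-q$. The existence of such a decomposition with $q$ of order $\tfrac{1}{e^{\epsilon_0}+1}$ follows directly from the pointwise $e^{\pm \epsilon_0}$ density-ratio guarantee of $\epsilon_0$-LDP.

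Third, I would condition on the random set $B \subseteq \{2,\ldots,n\}$ of non-differing users whose noisy value was drawn from $\omega$, with $K := |B| \sim \mathrm{Bin}(n-1, q)$. Given $B$, the multisets produced from $D$ and $D'$ share $K$ i.i.d.\ samples from $\omega$ together with identical $\mu_{x_j}$ contributions for $j \notin B$, and differ only via the single distinguishing sample $\calR(x_1)$ versus $\calR(x'_1)$. Because the position of the distinguishing sample among the $K+1$ marginally-$\omega$ slots is uniform after shuffling, the problem reduces to analyzing the indistinguishability of a single LDP-response hidden inside a population of $K+1$ blanket samples, which (by another $\epsilon_0$-LDP application and a clone-style comparison) is dominated by the privacy loss of randomized response over $K+1$ elements.

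Finally, I would turn this conditional bound into the stated unconditional $(\epsilon,\delta)$-DP guarantee by a two-part argument: (i) use a Bennett/Chernoff tail bound to show that with probability at least $1-\delta/2$, the binomial $K$ concentrates around its mean $(n-1)q$, which is $\Theta(n/(e^{\epsilon_0}+1))$; (ii) on the good event, combine the hidden-randomized-response analysis with a sharp second-moment calculation of the privacy-loss random variable to extract an $\epsilon$ of the form $\ln\bigl(1 + (e^{\epsilon_0}-1)\cdot O(\sqrt{\ln(1/\delta)/n(e^{\epsilon_0}+1)^{-1}}) + O(1/n)\bigr)$. The side condition $\epsilon_0 \leq \ln\!\bigl(\tfrac{n}{8\ln(2/\delta)} - 1\bigr)$ is precisely the regime in which the expected number of blanket samples is large enough for concentration to beat the trivial $\epsilon_0$ bound; outside this regime, $g(n,\delta) = \epsilon_0$ by definition. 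The main obstacle is step (ii): obtaining the exact constants in the closed form --- in particular, the factor $4\sqrt{2\ln(4/\delta)}$ and the additive $4/n$ --- requires a carefully tuned decomposition constant $q$, an optimal choice of the concentration threshold, and a tight privacy-loss moment analysis of hidden randomized response, none of which is routine; indeed, this is the improvement that distinguishes the SODA~23 bound from earlier shuffle amplification results.
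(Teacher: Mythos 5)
This theorem is not proved in the paper at all: it is quoted verbatim from Feldman et al.~\cite{Feldman_SODA23} and used as a black box to describe the pure shuffle model, so there is no in-paper proof to compare your argument against. Judged on its own terms, your sketch correctly identifies the standard proof architecture for amplification-by-shuffling results (reduction to multisets, a decomposition of the $\epsilon_0$-LDP randomizer into an input-independent component plus a residual, conditioning on the binomial count of ``blanket''/clone draws, and a concentration-plus-privacy-loss analysis of randomized response hidden among those draws). That is indeed the right family of ideas, and your closing caveat is honest.

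The gap is that the sketch never actually establishes the stated bound, and the part you defer is the entire content of the theorem. Two specific issues: (i) you conflate two distinct decompositions --- the data-independent ``privacy blanket'' $\calR(x)=(1-q)\mu_x+q\,\omega$ of Balle et al.\ yields a clone probability of order $e^{-\epsilon_0}$ and leads to the weaker $\sqrt{e^{\epsilon_0}/n}$-type bounds of the earlier works, whereas the $(e^{\epsilon_0}+1)$ denominator and the constants $4\sqrt{2\ln(4/\delta)}$ and $4/n$ in (\ref{eq:g_SODA23}) come from the sharper, pair-dependent decomposition of $\calR(x)$ relative to the two differing inputs $x_1,x'_1$ together with a refined analysis of the resulting three-point mixture; asserting that the needed $q=\Theta\!\left(\frac{1}{e^{\epsilon_0}+1}\right)$ ``follows directly'' from the pointwise density-ratio bound is where the argument would actually break if carried out with the blanket decomposition as written. (ii) The side condition $\epsilon_0 \leq \ln\!\left(\frac{n}{8\ln(2/\delta)}-1\right)$ is not ``the regime where concentration beats the trivial bound by definition''; it is exactly the hypothesis under which the binomial lower-tail bound used in step (i) of your plan is valid, and outside it the theorem falls back to the trivial $g(n,\delta)=\epsilon_0$ because $\calM_S$ is a post-processing of the $\epsilon_0$-LDP outputs --- that fallback does need the one-line post-processing argument rather than being definitional. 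As a verification of the cited result your proposal is a roadmap, not a proof; as a match to the paper, there is nothing to match, since the paper correctly treats this as an external theorem.
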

By (\ref{eq:g_SODA23}), $\epsilon$ is much smaller than $\epsilon_0$ 
in the LDP mechanism 
when $n$ is large. 
In multi-message protocols~\cite{Luo_CCS22,Balcer_ITC20,Cheu_SP22}, $g(n,\delta)$ 
is 
different from (\ref{eq:g_SODA23}). 
See~\cite{Luo_CCS22,Balcer_ITC20,Cheu_SP22} for details.

\subsection{Communication Cost and Accuracy}
\label{sub:communication_utility}

\colorB{We use the following measures for the communication cost and accuracy (i.e., utility):} 

\smallskip{}
\noindent{\textbf{Communication Cost.}}~~Let $C_{U-S}$ (resp.~$C_{S-D}$) $\in \nnreals$ be the expected number of bits sent from users to the shuffler (resp.~from the shuffler to the data collector). 
Then, the 
expected total 
number $C_{tot}$ of bits sent from one party to another is given by $C_{tot} = C_{U-S} + C_{S-D}$. 
We use $C_{tot}$ as a measure of the communication cost.

\smallskip{}
\noindent{\textbf{Accuracy.}}~~For categorical data, we follow \cite{Wang_PVLDB20,Murakami_SP25,Wang_USENIX17,Kairouz_ICML16} and measure 
the expected squared error 
$\E[(\hf_i - f_i)^2]$ 
of the estimate $\tf_i$ for each item $i\in[d]$. 
If the estimate $\tf_i$ is unbiased, then $\E[(\hf_i - f_i)^2]$ is equal to the variance $\V[\hf_i]$. 
Similarly, we follow \cite{Gu_USENIX20} and measure the expected squared errors $\E[(\hPhi_i - \Phi_i)^2]$ and $\E[(\hPsi_i - \Psi_i)^2]$ ($i\in[d]$) for KV data. 

\section{\colorB{Collusion and Poisoning Attacks}}
\label{sec:collusion_poisoning}
\colorB{In this section, we explain collusion and poisoning attacks in detail. 
Section~\ref{sub:threat} defines our threat model and clarifies why we focus on these attacks. 
Sections~\ref{sub:collusion} and \ref{sub:data_poisoning} explain collusion and poisoning attacks, respectively.} 

\subsection{\colorB{Threat Model}}
\label{sub:threat}
\colorB{We 
assume that 
anyone except a single user (victim), including the shuffler, the data collector, and other users, 
can be an attacker who attempts to infer the input data of 
the victim. 
We assume that input data $x_1, \ldots, x_n$ are independent (which is necessary for DP guarantees~\cite{Kifer_PODS12}) and that the attacker can obtain input data of all users except the victim as background knowledge.} 

\colorB{We also assume that the data collector can collude with some users except the victim (or compromise their accounts) to obtain their noisy data sent to the shuffler. 
In addition, the attacker can inject fake users and send an arbitrary message from each fake user to manipulate the statistics. 
The former and latter attacks are collusion and poisoning attacks, respectively. 
We focus on these attacks because attacks by malicious users pose a threat in practice --~\cite{Thomas_USENIX13} shows that the attacker can inject a large number of fake users in practical systems. 
In particular, collusion attacks are threatening because $\epsilon$ can be increased from about $1$ to $8$ in the pure shuffle protocols when $10\%$ of users collude with the data collector; see Section~\ref{sub:collusion}. 
It is also difficult to know the number of colluding users (and hence the actual value of $\epsilon$) in these protocols.}

\colorB{As with most existing shuffle protocols, we assume that 
the shuffler and the data collector are semi-honest and do not collude with each other. 
Our protocol cannot achieve DP or accurate estimates when 
these servers deviate 
from the protocol (e.g., 
when the shuffler 
leaks data sent from users or does not perform shuffling; when the data collector alters the estimates). 
One way to address this issue is to 
ensure data confidentiality and enforce 
the servers 
to follow the protocol 
via a legally binding contract or 
a TEE (Trusted Execution Environment)~\cite{Bittau_SOSP17,Allen_NeurIPS19}. 
Another way is to provide security against 
the servers who deviate 
from the protocol (i.e., malicious security) by using MPC (Multi-Party Computation)~\cite{Xu_CCS24}. 
We leave the use of the TEE or MPC for future work.}

\subsection{Collusion with Users}
\label{sub:collusion}
Wang \textit{et al.}~\cite{Wang_PVLDB20} point out that 
$\epsilon$ 
in the pure shuffle model 
can be increased when the data collector 
colludes with some users 
to obtain their noisy data sent to the shuffler. 

Specifically, 
let $\Omega \subset [n]$. 
Assume that the data collector colludes with users $\{u_i | i \in \Omega\}$ and obtains their noisy data 
sent to the shuffler. 
Then, the privacy budget $\epsilon$ for the remaining users is increased from $g(n,\delta)$ to $g(n-|\Omega|,\delta)$~\cite{Murakami_SP25}. 
For example, if $n=10^6$, $|\Omega|=10^5$, and $\delta=10^{-12}$ in Theorem~\ref{thm:privacy_amplification}, then $\epsilon$ can be increased from $1.1$ to $8.3$. 

Ideally, the value of $\epsilon$ should not be increased even if 
some users share their data sent to the shuffler with the data collector. 
Below, we formally define such \textit{robustness to collusion with users}. 
We first introduce $\Omega$-neighboring databases~\cite{Beimel_CRYPTO08}: 

\begin{definition} [$\Omega$-Neighboring databases] \label{def:omega_neighboring} 
Let $\Omega \subset [n]$. 
We say two databases 
$D = (x_1, \ldots, x_n) \in \calX^n$ and $D' = (x'_1, \ldots, x'_n) \in \calX^n$ 
are \emph{$\Omega$-neighboring} if they differ on one entry whose index $i$ is \emph{not} in $\Omega$, i.e., $x_i \ne x'_i$ for some $i \notin \Omega$ and $x_j = x'_j$ for any $j \ne i$. 
\end{definition}

$\Omega$-neighboring databases consider an attacker who colludes with users $\{u_i | i \in \Omega\}$. 
Based on this, we can define the robustness to collusion with users as follows: 
\begin{definition} [Robustness to collusion with users] 
\label{def:robustness_to_collusion}
Let $\calM$ be a shuffle protocol that provides $(\epsilon,\delta)$-DP (or CDP). 
For $i \in [n]$, let $\nu_i$ be data sent from user $u_i$ to the shuffler in $\calM$. 
For $\Omega \subset [n]$, let $\calM_\Omega$ be a protocol that takes a database 
$D \in \calX^n$ 
as input and outputs $\calM_\Omega(D) = (\calM(D), (\nu_i)_{i\in\Omega})$. 
We say $\calM$ is \emph{robust to collusion with users} if 
for any $\Omega \subset [n]$, 
any $\Omega$-neighboring databases $D$ and $D'$, 
and any $S \subseteq \mathrm{Range}(\calM_\Omega)$, 
$\calM_\Omega$ also satisfies 
(\ref{eq:DP_inequality}) (or (\ref{eq:DP_inequality_comp})), 
i.e., if $\epsilon$ and $\delta$ are not increased by collusion with users. 
\end{definition}

The data $\nu_i$ sent from user $u_i$ to the shuffler depends on the protocol $\calM$. 
For example, 
in the baseline protocol in Section~\ref{sub:CH_protocol}, $\nu_i = h(x_i)$, where $h$ is a hash function. 
In our protocol in Section~\ref{sec:proposed}, $\nu_i = (x_i,h(x_i))$. 

Unfortunately, pure shuffle protocols cannot provide the robustness in Definition~\ref{def:robustness_to_collusion}, 
as they need to add noise on the user side. 
The robustness in Definition~\ref{def:robustness_to_collusion} can be achieved by introducing the augmented shuffle model and adding noise on the shuffler side, as shown 
in Sections~\ref{sec:baseline} to \ref{sec:key-value}. 

\subsection{Data Poisoning Attacks}
\label{sub:data_poisoning}
Following~\cite{Cao_USENIX21,Wu_USENIX22}, we consider the following targeted attacks as data poisoning attacks. 
Let $\calT \subseteq [d]$ be the set of target items. 
We assume that the attacker injects $n' \in \nats$ fake users; there are $n+n'$ users in total, including $n$ genuine users. 
Each fake user can send an arbitrary message to the shuffler. 
This is called the output poisoning attack~\cite{Li_USENIX23}. 
For $i \in [n']$, let $m_i$ be a message sent from the $i$-th fake user. 
Let $\bmm = (m_1, \ldots, m_{n'})$ be the messages of $n'$ fake users.

\smallskip{}
\noindent{\textbf{Categorical Data.}}~~For categorical data, the attacker 
attempts 
to increase the estimates for the target items $\calT$ (i.e., to promote $\calT$) as much as possible. 
Formally, let $\hf'_i \in \reals$ be the estimate of $f_i$ after poisoning. 
Then, the attacker's \textit{overall gain} is defined as $G_f(\bmm) = \sum_{i \in \calT} \E[\hf'_i - \hf_i]$~\cite{Cao_USENIX21}. 
The attacker's goal is to maximize $G_f(\bmm)$. 

Let $\GMGA = \max_{\bmm} G_f(\bmm)$ be the maximum value of $G_f(\bmm)$. 
Cao \textit{et al.}~\cite{Cao_USENIX21} propose the MGA (Maximal Gain Attack) that crafts the messages $\bmm$ to achieve $\GMGA$. 
We use the maximum gain $\GMGA$ to measure the robustness to data poisoning attacks in categorical data.

\smallskip{}
\noindent{\textbf{KV Data.}}~~For KV data, the attacker attempts 
to maximize the frequency and mean estimates for the target items $\calT$ (i.e., to promote $\calT$). 
Let $\hPhi'_i$ (resp.~$\hPsi'_i$) $\in \reals$ be the estimate of $\Phi_i$ (resp.~$\Psi_i$) after poisoning. 
Then, the \textit{frequency gain} and the \textit{mean gain} are given by $G_\Phi(\bmm) = \sum_{i \in \calT} \E[\hPhi'_i - \hPhi_i]$ and $G_\Psi(\bmm) = \sum_{i \in \calT} \E[\hPsi'_i - \hPsi_i]$, respectively~\cite{Wu_USENIX22}. 
The 
goal is to maximize $G_\Phi(\bmm)$ and $G_\Psi(\bmm)$ simultaneously.  

Let $\GMTGAPhi = \max_{\bmm} G_\Phi(\bmm)$ and $\GMTGAPsi = \max_{\bmm} G_\Psi(\bmm)$. 
Wu \textit{et al.}~\cite{Wu_USENIX22} propose the M2GA (Maximal Gain Attack) that crafts $\bmm$ to achieve $\GMTGAPhi$ and $\GMTGAPsi$ simultaneously. 
We use the maximum frequency gain $\GMTGAPhi$ and the maximum mean gain $\GMTGAPsi$ as robustness measures in KV data. 

\section{Baseline Protocols}
\label{sec:baseline}
In this section, we present 
two 
baseline protocols for categorical data and explain 
why they are unsuitable 
for large-domain data. 
Section~\ref{sub:LNF} describes the LNF (Local-Noise-Free) protocol in \cite{Murakami_SP25} and explains that it suffers from prohibitively high communication and computational costs. 
Section~\ref{sub:CH_protocol} introduces the CH (Common Hash) protocol to address this issue and shows that it suffers from low accuracy. 

\subsection{Local-Noise-Free Protocol}
\label{sub:LNF}

\noindent{\textbf{Protocol.}}~~Fig.~\ref{fig:LNF} shows the overview of the LNF protocol. 
\colorB{We show an algorithmic description of the LNF protocol in Algorithm~\ref{alg:S_LNF} of Appendix~\ref{sub:LNF_algorithm}.} 
We denote the LNF protocol by $\calS_{\calD, \beta}^{\LNF}$.  
$\calS_{\calD, \beta}^{\LNF}$ 
has a \textit{dummy-count distribution $\calD$} over $\nnints$ with mean $\mu \in \nnreals$ and variance $\sigma^2 \in \nnreals$ and a \textit{sampling probability} $\beta \in [0,1]$ as parameters.

The LNF protocol $\calS_{\calD, \beta}^{\LNF}$ is simple and works as follows (we omit the encryption/decryption process). 
First, each user $u_i$ ($i \in [n]$) sends her input value $x_i \in [d]$ without adding noise. 
Then, the shuffler performs three operations: random sampling, adding dummies, and shuffling. 
Specifically, the shuffler randomly selects each input value with probability $\beta$. 
For each item $i \in [d]$, the shuffler randomly generates $z_i$ from the dummy-count distribution $\calD$ ($z_i \sim \calD$) and adds $z_i$ dummy values. 
Let $y_1, \ldots, y_{\tn} \in [d]$ be the selected input values and dummies, where $\tn \in \nnints$ is the total number of these values. 
The shuffler samples a random permutation $\pi$ over $[\tn]$ and sends 
$y_{\pi(1)}, \ldots, y_{\pi(\tn)}$ to the data collector. 
In 
Fig.~\ref{fig:LNF}, 
$y_{\pi(1)}, \ldots, y_{\pi(\tn)} = (1,2,3,3,2,1,3,1)$ ($\tn = 8$). 

After receiving the shuffled values $y_{\pi(1)}, \ldots, y_{\pi(\tn)}$, the data collector calculates their histogram. 
Specifically, the data collector 
calculates a count (absolute frequency) $\tc_i \in \nnints$ for each item $i \in [d]$ from $y_{\pi(1)}, \ldots, y_{\pi(\tn)}$. 
\colorB{In the example of Fig.~\ref{fig:LNF}, $(\tc_1,\tc_2,\tc_3) = (3,2,3)$.} 
Finally, the data collector calculates an unbiased estimate $\hf_i$ of $f_i$ as $\hf_i = \frac{1}{n\beta}(\tc_i-\mu)$ and outputs $\bmhf = (\hf_1, \cdots, \hf_d)$. 

\smallskip{}
\noindent{\textbf{Theoretical Properties.}}~~$\calS_{\calD, \beta}^{\LNF}$ provides DP and is robust to both data poisoning and collusion attacks if a simpler mechanism called the \textit{binary input mechanism} provides DP: 

\begin{definition} [Binary input mechanism] \label{def:binary_input} 
Let $\calD$ be a dummy-count distribution over $\nnints$. 
The binary input mechanism $\calM_{\calD, \beta}: \{0,1\} \rightarrow \nnints$ 
with parameters $\calD$ 
and $\beta \in [0,1]$ 
takes binary data $x \in \{0,1\}$ and outputs 
\begin{align*}
\calM_{\calD, \beta}(x) = \alpha x + z,
\end{align*}
where $\alpha$ and $z$ are random variables that follow the Bernoulli distribution $\text{Ber}(\beta)$ 
and 
$\calD$, respectively 
($\alpha \sim \text{Ber}(\beta)$, $z \sim \calD$). 
\end{definition}

\begin{theorem}[\cite{Murakami_SP25}]
\label{thm:LNF_DP}
If $\calM_{\calD, \beta}$ provides $(\frac{\epsilon}{2}, \frac{\delta}{2})$-DP, then 
$\calS_{\calD, \beta}^{\LNF}$ 
provides $(\epsilon, \delta)$-CDP\footnote{Note that $\calM_{\calD, \beta}$ provides CDP, as it uses a PKE scheme. 
Specifically, it provides $(\epsilon, \delta)$-DP for the data collector and $(0,0)$-CDP for the shuffler. 
In addition, 
$\epsilon$ and $\delta$ in 
$\calS_{\calD, \beta}^{\LNF}$ 
are doubled. 
This is because neighboring data $x,x' \in \{0,1\}$ in $\calM_{\calD, \beta}$ differ by 1 in one dimension, whereas neighboring databases $D,D' \in [d]^n$ in $\calS_{\calD, \beta}^{\LNF}$ differ by 1 in two dimensions.} and is robust to collusion with users.
\end{theorem}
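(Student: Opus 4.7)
The plan is to analyze the view of each possible attacker (the data collector, the shuffler, and any coalition of users) separately and establish the required (computational) indistinguishability in each case. The most informative view is the data collector's: after shuffling and decryption it is a sequence whose conditional distribution given the histogram is a uniform random permutation, independent of the database, so it suffices to control the histogram $(\tc_1, \ldots, \tc_d)$. Fix neighboring $D, D'$ that differ only at index $k$ with $x_k = a$ and $x'_k = b$ for some $a \ne b$. Then $\tc_i$ has identical distributions under $D$ and $D'$ for every $i \notin \{a,b\}$, so the task reduces to bounding the joint distribution of $(\tc_a, \tc_b)$.

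The core technical step is a reduction of each of $\tc_a, \tc_b$ to the binary input mechanism $\calM_{\calD, \beta}$. Decompose $\tc_a = W_a + z_a + \alpha_k \mathone_{x_k = a}$, where $W_a = \sum_{j \ne k,\, x_j = a} \alpha_j$ is a background sum over the other users. The random variable $W_a$ has the same distribution under $D$ and $D'$ and is independent of $(\alpha_k, z_a)$; conditioning on $W_a = w$, the distributions of $\tc_a$ under $D$ and $D'$ are exactly $w + \calM_{\calD, \beta}(1)$ and $w + \calM_{\calD, \beta}(0)$, respectively. Since a deterministic translation is a bijective post-processing, the hypothesis that $\calM_{\calD, \beta}$ is $(\epsilon/2, \delta/2)$-DP yields the same conditional bound, and averaging over $W_a$ gives $(\epsilon/2, \delta/2)$-DP for $\tc_a$ itself; the same argument handles $\tc_b$. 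Because the two counts involve disjoint per-user Bernoullis (users with $x_j = a$ vs.\ those with $x_j = b$) and independent dummy counts $z_a, z_b$, $\tc_a$ and $\tc_b$ are independent, so basic composition over these two coordinates gives $(\epsilon, \delta)$-DP for the whole histogram, and hence for the data collector's view.

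For the shuffler, the only information received is a collection of ciphertexts (encrypted user inputs and encrypted dummies), so a standard hybrid reduction to the semantic security of the PKE scheme yields $(0, \mathsf{negl}(\gamma))$-CDP, which trivially implies $(\epsilon, \delta)$-CDP. To establish robustness to collusion, consider an arbitrary subset $\Omega \subset [n]$ whose members expose $\nu_i = \mathsf{Enc}(x_i)$; for $\Omega$-neighboring databases the differing index $k$ lies in $[n] \setminus \Omega$, so the revealed values are identical under $D$ and $D'$. The randomness used above ($\alpha_k$, $z_a$, $z_b$, and the background sums $W_a, W_b$ over non-colluding users) is independent of the colluding users' inputs, so the shift-plus-composition argument is unaffected by this side information, and the shuffler-plus-collusion view remains computationally indistinguishable by the same PKE argument. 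The main obstacle is the first reduction step: carefully identifying the per-item decomposition that matches the binary input mechanism and verifying that conditioning on $W_a$ is legitimate because its distribution does not depend on which of $D, D'$ is in play; once this is in place, the remaining independence, composition, and encryption arguments are routine.
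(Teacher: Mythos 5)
Your proof is correct and follows essentially the same route the paper relies on: the paper imports this theorem from \cite{Murakami_SP25} without reproving it, but its footnote (doubling because neighboring databases differ in two coordinates) and the analogous appendix proofs for the CH and FME protocols indicate exactly your structure --- reduce the data collector's view to the histogram, reduce each of the two affected counts $\tc_a,\tc_b$ to a shifted copy of $\calM_{\calD,\beta}$, compose the two independent $(\frac{\epsilon}{2},\frac{\delta}{2})$ coordinates, and handle the shuffler and collusion via IND-CPA of the PKE scheme. No gaps worth flagging.
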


\begin{figure}[t]
  \centering
  \includegraphics[width=0.99\linewidth]{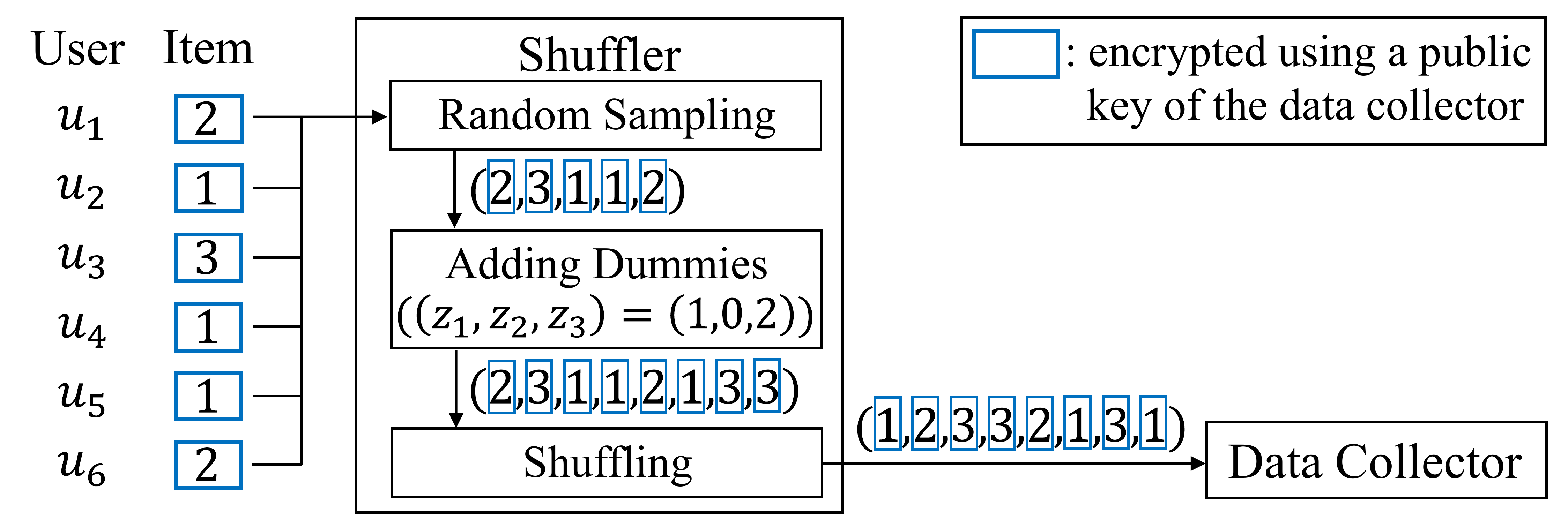}
  \vspace{-7mm}
  \caption{Overview of the LNF protocol ($n=6$, $d=3$). 
  In this example, the shuffler discards input data of $u_2$ and adds $(z_1,z_2,z_3) = (1,0,2)$ dummies.} 
  \label{fig:LNF}
\end{figure}

\begin{theorem}[\cite{Murakami_SP25}]
\label{thm:LNF_poisoning}
Let $\lambda = \frac{n'}{n+n'}$ and $f_\calT = \sum_{i \in \calT} f_i$. 
$\calS_{\calD, \beta}^{\LNF}$ provides the following robustness 
against poisoning attacks:
\begin{align}
\GMGA = \lambda (1 - f_\calT).
\label{LNF_GMGA}
\end{align}
\end{theorem}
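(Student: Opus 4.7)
The plan is to write $G_f(\bmm)$ as a closed-form expression in the fake-user messages $\bmm$ and then maximize it. First, I would decompose the post-poisoning count $\tc'_i$ for each item $i \in [d]$ into three independent contributions by linearity of expectation: (i) the count from genuine users, with expectation $n\beta f_i$, since each genuine user whose input equals $i$ is retained by the shuffler independently with probability $\beta$; (ii) the count from fake users, with expectation $\beta \cdot |\{j \in [n'] : m_j = i\}|$ by the same sampling argument applied to the $n'$ fake submissions; and (iii) the dummy count, with expectation $\mu$ regardless of $\bmm$, because the shuffler draws dummy counts from $\calD$ independently of the user messages. Combining these gives $\E[\tc'_i] = n\beta f_i + \beta \cdot |\{j : m_j = i\}| + \mu$.

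Next, I would plug this into the data collector's estimator. Because the data collector observes $n+n'$ submissions and applies the same debiasing formula as in Section~\ref{sub:LNF}, the poisoned estimate is $\hf'_i = \frac{1}{(n+n')\beta}(\tc'_i - \mu)$, whence $\E[\hf'_i] = \frac{n f_i + |\{j : m_j = i\}|}{n+n'}$. Since $\E[\hf_i] = f_i$ in the unpoisoned protocol, summing $\E[\hf'_i - \hf_i]$ over $i \in \calT$ and telescoping the $f_i$ terms yields
\[ G_f(\bmm) = \frac{|\{j \in [n'] : m_j \in \calT\}| - n' f_\calT}{n+n'}. \]

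The maximization is then immediate: $G_f(\bmm)$ depends on $\bmm$ only through the quantity $|\{j : m_j \in \calT\}|$, which is at most $n'$, with equality whenever every fake user sends some item in $\calT$. This gives $\GMGA = \frac{n'(1 - f_\calT)}{n+n'} = \lambda (1 - f_\calT)$, attained by the MGA-style strategy of placing all $n'$ fake messages inside $\calT$. The main subtlety I anticipate is handling fake ciphertexts whose plaintexts lie outside $[d]$ or are otherwise malformed; such messages can only fail to increment $\tc'_i$ for $i \in \calT$, so the bound $|\{j : m_j \in \calT\}| \leq n'$ remains the binding constraint and the same maximum is achieved.
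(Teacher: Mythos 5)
Your proposal is correct and follows essentially the same route as the paper's own arguments for the analogous results (the paper cites this theorem from prior work but proves the CH and FME counterparts in the appendix by exactly this method): write the expected post-poisoning count as genuine contribution plus $\beta$ times the number of fake messages equal to $i$ plus the dummy mean $\mu$, debias with the $(n+n')\beta$ normalization, observe that the gain depends on $\bmm$ only through $|\{j : m_j \in \calT\}| \le n'$, and conclude $\GMGA = \lambda(1-f_\calT)$. Your closing remark about malformed ciphertexts only weakening the attack is also consistent with the paper's treatment.
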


$\GMGA$ in (\ref{LNF_GMGA}) does not depend on $\epsilon$ and is smaller than existing pure shuffle protocols \cite{Cao_USENIX21,Luo_CCS22,Balcer_ITC20,Cheu_SP22}. 
Thus, Theorems~\ref{thm:LNF_DP} and \ref{thm:LNF_poisoning} 
mean that 
$\calS_{\calD, \beta}^{\LNF}$ provides DP and is robust to collusion and data poisoning attacks if 
$\calM_{\calD, \beta}$ provides DP. 

In addition, $\calS_{\calD, \beta}^{\LNF}$ 
achieves the following accuracy: 
\begin{theorem}[\cite{Murakami_SP25}]
\label{thm:LNF_accuracy}
For any $i\in[d]$, $\calS_{\calD, \beta}^{\LNF}$ outputs an unbiased estimate 
(i.e., $\E[\hf_i] = f_i$) and achieves the following expected squared error: 
\begin{align}
\textstyle{\E[ (\hf_i - f_i)^2 ]} = \textstyle{\frac{f_i (1-\beta)}{n \beta} + \frac{\sigma^2}{n^2 \beta^2}}.
\label{eq:LNF_l2_loss}
\end{align}
\end{theorem}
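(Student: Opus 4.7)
\smallskip{}
\noindent\textbf{Proof plan for Theorem~\ref{thm:LNF_accuracy}.} The plan is to decompose the count $\tc_i$ that the data collector reads off the shuffled output into two independent pieces, one arising from genuine users whose input is $i$ and one arising from the dummies of item $i$, and then apply linearity of expectation and independence to obtain the mean and variance of $\hf_i = \frac{1}{n\beta}(\tc_i - \mu)$. Since the LNF protocol does not alter the \emph{values} (it only samples, adds dummies, and permutes), each occurrence of $i$ in the shuffled list comes either from some $x_j = i$ that was retained by sampling or from one of the $z_i$ dummies for item $i$; in particular, no occurrence of $i$ comes from items $j \ne i$.

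Concretely, I will write
\begin{align*}
\tc_i = \sum_{j : x_j = i} \alpha_j + z_i,
\end{align*}
where $\alpha_j \sim \text{Ber}(\beta)$ are the independent sampling indicators for the $n f_i$ genuine users holding item $i$, and $z_i \sim \calD$ is independent of all $\alpha_j$. Taking expectations gives $\E[\tc_i] = n f_i \beta + \mu$, so $\E[\hf_i] = \frac{1}{n\beta}(n f_i \beta + \mu - \mu) = f_i$, which shows unbiasedness.

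For the variance, independence of $\{\alpha_j\}_{j : x_j = i}$ and of $z_i$ from the $\alpha_j$'s yields
\begin{align*}
\V[\tc_i] = n f_i \, \beta(1-\beta) + \sigma^2,
\end{align*}
using $\V[\alpha_j] = \beta(1-\beta)$ and $\V[z_i] = \sigma^2$. Dividing by $(n\beta)^2$ and invoking unbiasedness (which equates the expected squared error to the variance), I obtain
\begin{align*}
\E[(\hf_i - f_i)^2] = \V[\hf_i] = \frac{f_i(1-\beta)}{n\beta} + \frac{\sigma^2}{n^2 \beta^2},
\end{align*}
as claimed.

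The calculation itself is routine; the only point that needs care is justifying the decomposition and the independences. Specifically, I must argue that the random permutation $\pi$ applied by the shuffler does not affect the histogram (so $\tc_i$ is invariant under $\pi$), that the Bernoulli sampling indicators $\alpha_j$ for distinct users are mutually independent by construction of the sampling step, and that the dummy count $z_i$ is drawn independently of the sampling step. I expect this bookkeeping — rather than any analytical difficulty — to be the main obstacle; once it is in place the mean/variance computation is immediate.
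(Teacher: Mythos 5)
Your proposal is correct: the decomposition $\tc_i = \sum_{j:x_j=i}\alpha_j + z_i$ with independent Bernoulli sampling indicators and an independent dummy count, followed by the mean/variance computation, is exactly the argument the paper relies on (the theorem is cited from prior work, but the paper's own proof of the analogous Theorem~\ref{thm:FME_accuracy} in Appendix~\ref{sub:proof_FME_accuracy} performs the identical calculation, phrased via the law of total variance conditioned on $z_i$). The bookkeeping points you flag — permutation-invariance of the histogram and the mutual independence of the sampling indicators and the dummy draw — are indeed the only things to check, and they hold by construction of Algorithm~\ref{alg:S_LNF}.
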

The error in (\ref{eq:LNF_l2_loss}) decreases as $\sigma^2$ decreases and $\beta$ increases. 

\smallskip{}
\noindent{\textbf{Dummy-Count Distribution $\calD$.}}~~Examples of $\calD$ providing DP include the binomial distribution and the asymmetric geometric distribution. 
In particular, \cite{Murakami_SP25} shows that the asymmetric geometric distribution provides higher accuracy than the binomial distribution and existing pure shuffle protocols when $\beta=1$ and achieves pure DP ($\delta = 0$) when $\beta = 1 - e^{-\epsilon/2}$.

\smallskip{}
\noindent{\textbf{Limitations.}}~~The drawback of $\calS_{\calD, \beta}^{\LNF}$ is that it cannot be applied to large-domain data due to its high communication and computational costs. 
Specifically, the communication costs 
of $\calS_{\calD, \beta}^{\LNF}$ are: $C_{U-S} = \tau n$, $C_{S-D} = \tau(\beta n + \mu d)$, and 
\begin{align}
C_{tot} = \tau((1 + \beta) n + \mu d), 
\label{eq:LNF_C_tot}
\end{align}
where $\tau$ is the size of 
each ciphertext. 
Thus, $C_{tot}$ can be expressed as $O(n+d)$. 
Note that we can effectively reduce $n$ by randomly sampling users, as shown in our experiments. 
However, we cannot reduce $d$ by randomly sampling items without losing accuracy. 
Moreover, 
the mean $\mu$ of the dummy-count distribution $\calD$ 
is much larger than $1$, 
e.g., 
$\mu = 108$ 
in the asymmetric geometric distribution 
($\beta=1$, $\epsilon=1$, $\delta=10^{-12}$). 
Thus, 
$C_{tot}$ in (\ref{eq:LNF_C_tot}) can be prohibitively large when $d$ is large. 
For example, when 
$n=10^5$, $d=10^9$, $\mu = 108$, $\beta = 1$, 
and the $2048$-bit RSA is used, 
$\calS_{\calD, \beta}^{\LNF}$ 
needs 
$C_{tot} = 220$ Terabits. 

Similarly, the computational cost of $\calS_{\calD, \beta}^{\LNF}$ is $O(n+d)$ and can be prohibitively large when $d$ is large. 
For example, $\calS_{\calD, \beta}^{\LNF}$ requires about $3$ years 
when $d=10^9$ in our experiments. 

\subsection{Common Hash Protocol}
\label{sub:CH_protocol}
\noindent{\textbf{Protocol.}}~~The LNF protocol $\calS_{\calD, \beta}^{\LNF}$ cannot be applied to categorical data with large domain size $d$, as both the communication and computational costs are $O(n+d)$. 
A natural approach to improving the efficiency would be to reduce the domain size using a hash function. 
Our baseline, called the CH (Common Hash) protocol, 
simply uses a hash function common to all users. 
We denote this baseline by 
$\calS_{\calD, \beta}^{\CH}$.

Specifically, let $\calH$ be a universal hash function family whose domain is $[d]$ and whose range is $[b]$ ($b \leq d$). 
The CH protocol $\calS_{\calD, \beta}^{\CH}$ randomly selects a hash function $h: [d] \rightarrow [b]$ from $\calH$, applies $h$ 
to input values $x_1,\ldots,x_n$, and runs the LNF protocol $\calS_{\calD, \beta}^{\LNF}$. 
In other words, 
$\calS_{\calD, \beta}^{\CH}$ runs 
Algorithm~\ref{alg:S_LNF} (lines 1-11) 
with inputs $(h(x_1), \ldots, h(x_n))$, $b$, $\calD$, and $\beta$. 
Then, 
the data collector obtains $(\tc_1, \ldots, \tc_b)$, where $\tc_j$ ($j \in [b]$) is a count of a hash value $j$ calculated from the shuffled values.
Finally, the data collector calculates an unbiased estimate $\hf_i$ ($i \in [d]$) of $f_i$ as $\hf_i = \frac{b}{n\beta(b-1)}(\tc_{h(i)} - \frac{n\beta}{b} - \mu)$.

\smallskip{}
\noindent{\textbf{Theoretical Properties.}}~~We show DP and the robustness of $\calS_{\calD, \beta}^{\CH}$ in Appendix~\ref{sub:CH_DP_robustness}. 
Below, we analyze the communication cost of 
$\calS_{\calD, \beta}^{\CH}$. 
Since $\calS_{\calD, \beta}^{\CH}$ reduces the domain size from $d$ to $b$ via the hash function $h$, the communication costs 
of $\calS_{\calD, \beta}^{\CH}$ are: $C_{U-S} = \tau n$, $C_{S-D} = \tau(\beta n + \mu b)$, and 
\begin{align}
C_{tot} = \tau((1 + \beta) n + \mu b), 
\label{eq:CH_C_tot}
\end{align}
where $\tau$ is the size of each ciphertext. 
$C_{tot}$ in (\ref{eq:CH_C_tot}) can be expressed as $O(n + b)$. 
Similarly, the computational cost of $\calS_{\calD, \beta}^{\CH}$ is $O(n + b)$. 
Therefore, 
the CH protocol $\calS_{\calD, \beta}^{\CH}$ is much more efficient than the LNF protocol $\calS_{\calD, \beta}^{\LNF}$ when $b \ll d$. 

\smallskip{}
\noindent{\textbf{Limitations.}}~~Unfortunately, 
the CH protocol $\calS_{\calD, \beta}^{\CH}$ 
is also unsuitable for large-domain data because it suffers from low accuracy due to hash collision. 

Specifically, assume that the hash function $h$ is 2-wise independent; i.e., for any $i_1,i_2 \in [d]$ and $j_1,j_2 \in [b]$, $\Pr(h(i_1) = j_1 | h(i_2) = j_2) = \Pr(h(i_1) = j_1) = \frac{1}{b}$. 
For example, let $p \in [d, 2d)$ be a prime, $a_1 \in [p-1]$, and $a_0 \in [p]$. 
Then, a hash function $h$ defined by $h(x) = ((a_1 x + a_0) \bmod p) \bmod b$ is (almost) 2-wise independent \cite{Patrascu_TALG15}. 
This hash function is used in \cite{Luo_CCS22} and is also used in our experiments. 
Under this assumption, the accuracy of $\calS_{\calD, \beta}^{\CH}$ can be quantified as follows: 

\begin{theorem}
\label{thm:CH_accuracy}
For any $i\in[d]$, $\calS_{\calD, \beta}^{\CH}$ 
with a 2-wise independent hash function $h$ 
outputs an unbiased estimate 
(i.e., $\E[\hf_i] = f_i$) 
and achieves the following expected squared error: 
\begin{align}
\textstyle{\E[ (\hf_i - f_i)^2} ] = \textstyle{\frac{b^2}{n^2 \beta^2 (b-1)^2} \left(nf_i \beta(1-\beta) + \sigma^2 + \omega \right),}
\label{eq:CH_l2_loss}
\end{align}
where 
\begin{align*}
\hspace{-1mm} \omega &= \textstyle{\sum_{j=1,j \ne i}^d \left( \frac{( n^2 f_j^2 \beta^2 + n f_j \beta(1-\beta))(b-1)}{b^2} + \frac{n f_j \beta(1-\beta)}{b^2} \right).}
\end{align*}
\end{theorem}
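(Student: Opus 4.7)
The plan is to derive explicit formulas for $\E[\tc_{h(i)}]$ and $\V[\tc_{h(i)}]$ and then use unbiasedness to reduce the squared-error claim to a variance computation. First, I write $\tc_{h(i)} = \sum_{j=1}^n \alpha_j \mathone_{h(x_j) = h(i)} + z_{h(i)}$, where $\alpha_j \sim \text{Ber}(\beta)$ are the shuffler's independent sampling indicators and $z_{h(i)} \sim \calD$ is the dummy count for the bucket $h(i)$. Using 2-wise independence of $h$, so that $\Pr[h(x_j) = h(i)] = 1/b$ whenever $x_j \ne i$, I obtain $\E[\tc_{h(i)}] = n f_i \beta + \tfrac{n(1-f_i)\beta}{b} + \mu$; plugging this into $\hf_i = \tfrac{b}{n\beta(b-1)}(\tc_{h(i)} - \tfrac{n\beta}{b} - \mu)$ and simplifying yields $\E[\hf_i] = f_i$, which gives unbiasedness and reduces the error claim to $\E[(\hf_i - f_i)^2] = \V[\hf_i] = \bigl(\tfrac{b}{n\beta(b-1)}\bigr)^2 \V[\tc_{h(i)}]$.

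To compute $\V[\tc_{h(i)}]$, I group users by their input item: let $A_j = \sum_{k: x_k = j} \alpha_k \sim \text{Bin}(nf_j, \beta)$ and, for $j \ne i$, $B_j = \mathone_{h(j) = h(i)}$, which is $\text{Ber}(1/b)$ and independent of every $A_k$. Then $\tc_{h(i)} = A_i + \sum_{j \ne i} A_j B_j + z_{h(i)}$, where $A_i$ and $z_{h(i)}$ are independent of everything else. The contributions $\V[A_i] = nf_i\beta(1-\beta)$ and $\V[z_{h(i)}] = \sigma^2$ are immediate. For each $j \ne i$, using independence of $A_j$ and $B_j$ together with $\E[A_j^2] = nf_j\beta(1-\beta) + n^2 f_j^2 \beta^2$ and $\E[B_j^2] = 1/b$, I obtain a formula for $\V[A_j B_j]$ that after a routine rearrangement over the common denominator $b^2$ matches the summand of $\omega$ exactly. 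Collecting the diagonal pieces gives $\V[\tc_{h(i)}] = nf_i\beta(1-\beta) + \sigma^2 + \omega$, and multiplying by the prefactor produces the claimed expression.

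The main obstacle will be the cross-covariance terms $\cov(A_{j_1}B_{j_1}, A_{j_2}B_{j_2}) = \E[A_{j_1}]\E[A_{j_2}]\cov(B_{j_1}, B_{j_2})$ for distinct $j_1, j_2 \ne i$, since $B_{j_1}$ and $B_{j_2}$ both depend on the shared value $h(i)$ and their joint distribution is governed by the triple $(h(i), h(j_1), h(j_2))$, which pairwise independence does not directly pin down. I would handle this by leveraging the hash family's uniformity on triples containing $i$ to conclude $\cov(B_{j_1}, B_{j_2}) = 0$, so that only the diagonal variances $\V[A_j B_j]$ survive in the sum and the formula in the theorem follows after multiplication by the prefactor.
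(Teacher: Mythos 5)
Your proposal follows essentially the same route as the paper's proof: the unbiasedness computation is identical, and your decomposition $\tc_{h(i)} = A_i + \sum_{j\ne i}A_jB_j + z_{h(i)}$ is exactly the paper's $\tc_{h(i)} = \sum_{j}\tc_{h(i),j} + z_{h(i)}$ with $\tc_{h(i),j} = X_jB_j$ and $X_j\sim B(nf_j,\beta)$; your direct evaluation $\V[A_jB_j]=\E[A_j^2]\E[B_j^2]-\E[A_j]^2\E[B_j]^2$ is algebraically equivalent to the paper's law-of-total-variance step and does reduce to the summand of $\omega$, so the diagonal bookkeeping is correct. The one place you go beyond the paper is precisely the obstacle you flag, and your concern is legitimate: the cross terms reduce to $\E[A_{j_1}]\E[A_{j_2}]\cov(B_{j_1},B_{j_2})$, and $\cov(B_{j_1},B_{j_2})$ is governed by $\Pr[h(j_1)=h(i)=h(j_2)]$, a joint event over three distinct points that 2-wise independence does not pin down. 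The paper dismisses these covariances with the single parenthetical ``as the hash function $h$ is 2-wise independent,'' which is not by itself a sufficient justification; your proposed fix --- uniformity of the family on triples containing $i$, i.e., a 3-wise-type condition --- is what is actually needed for the covariances to vanish exactly (and for the concrete family $h(x)=((a_1x+a_0)\bmod p)\bmod b$ used in the paper this holds only approximately). So your argument is correct modulo the same assumption the paper implicitly makes, and you have localized that assumption more honestly than the paper does.
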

The first and second terms in (\ref{eq:CH_l2_loss}) are almost the same as the squared error in (\ref{eq:LNF_l2_loss}). 
However, the third term in (\ref{eq:CH_l2_loss}) is introduced by 
hash collision and is very large when $b$ is small. 
For example, when $\beta = 1$ and $b = n$, 
the squared error in (\ref{eq:LNF_l2_loss}) is $\frac{\sigma^2}{n^2}$. 
In contrast, the squared error in (\ref{eq:CH_l2_loss}) is about $\frac{\sigma^2 + n}{n^2}$ in the worst case. 
In other words, the squared error is increased from $O(n^{-2})$ to $O(n^{-1})$ due to the hash collision. 
We also show that $\calS_{\calD, \beta}^{\CH}$ suffers from low accuracy in our experiments. 

\colorB{In Appendix~\ref{sub:other_baselines}, we describe other baselines than $\calS_{\calD, \beta}^{\LNF}$ and $\calS_{\calD, \beta}^{\CH}$ and explain that they also suffer from low accuracy.}

\section{Filtering-with-Multiple-Encryption Protocol}
\label{sec:proposed}

In this section, we propose a novel DP protocol called the \textit{FME (Filtering-with-Multiple-Encryption) protocol} to address the issues in the baselines. 
Section~\ref{sub:technical_motivation} explains its technical motivation 
and overview. 
Section~\ref{sub:FME_protocol} describes the details of our FME protocol. 
Section~\ref{sub:FME_properties} analyzes its theoretical properties. 
Section~\ref{sub:FME_optimization} proposes a method to optimize the range $b$ of the hash function in our FME protocol. 

\subsection{Technical Motivation and Overview}
\label{sub:technical_motivation}

\smallskip{}
\noindent{\textbf{Technical Motivation.}}~~As explained in Section~\ref{sub:CH_protocol}, hash values $h(x_1), \ldots, h(x_n)$ cannot be used for frequency estimation due to hash collision. 
However, they can be used to \textit{filter out} unpopular items with low (or zero) frequencies. 
This is helpful in large-domain data because the frequency distribution $\bmf$ is 
\textit{sparse}; e.g., most items have a count of zero when $n \ll d$. 
After reducing the input domain 
by filtering, 
we can efficiently use the LNF protocol 
to obtain high accuracy. 

Based on this, we can consider a \textit{two-round} protocol shown in Fig.~\ref{fig:two_round}. 
In the first round, each user $u_i$ sends her hash value $h(x_i)$ to the shuffler. 
The shuffler performs augmented shuffling (i.e., sampling, adding dummies to hash values, and shuffling) and sends the shuffled hash values to the data collector. 
The data collector filters out unpopular items and sends a set $\Lambda \subseteq [d]$ of popular items to the users. 
In the second round, each user $u_i$ replaces her input value $x_i$ with $\bot$ representing an \textit{unselected item} if $x_i \notin \Lambda$. 
Then, the parties run the LNF protocol 
for selected items $\Lambda$ and $\bot$; i.e., $[d]$ is replaced with $\Lambda \cup \{\bot\}$. 

\begin{figure}[t]
  \centering
  \includegraphics[width=0.99\linewidth]{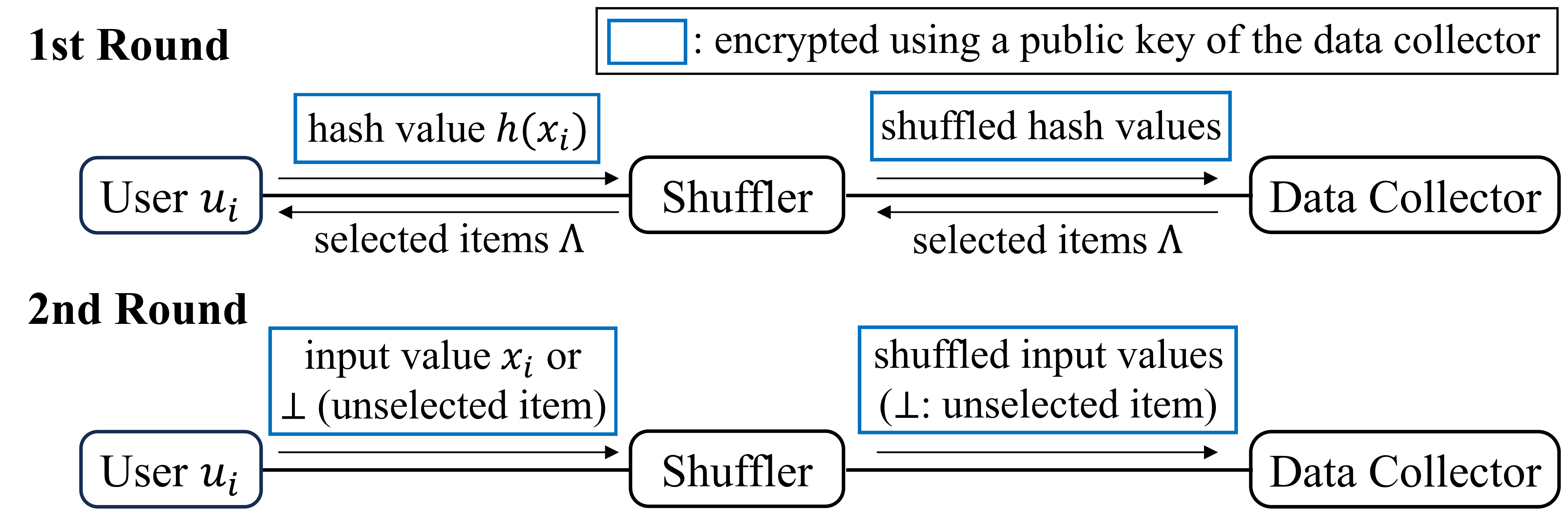}
  \vspace{-8mm}
  \caption{Two-round protocol using a hash function $h$.} 
  \label{fig:two_round}
  \centering
  \includegraphics[width=0.99\linewidth]{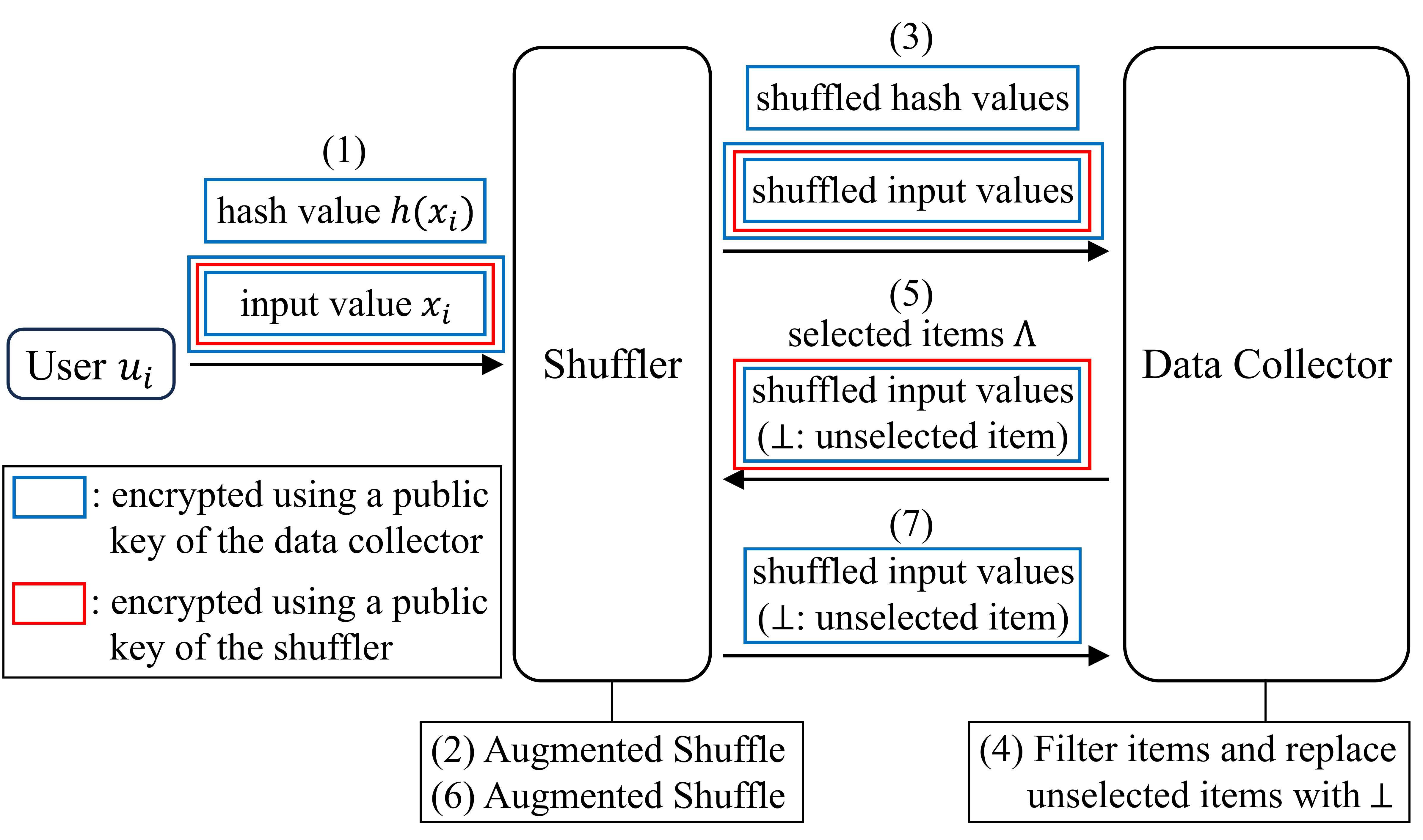}
  \vspace{-8mm}
  \caption{Overview of our FME protocol. 
  It reduces the number of rounds between users and the shuffler in Fig.~\ref{fig:two_round} by introducing multiple encryption.}
  \label{fig:proposal}
\end{figure}

The two-round protocol provides high accuracy for popular items $\Lambda$, as it uses the LNF protocol for them. 
In addition, 
the second round can be efficiently performed since the domain is restricted to $\Lambda \cup \{\bot\}$. 
Similar approaches have been taken in \cite{Ye_TDSC23,Qin_CCS16} -- 
they introduce the first round to find popular items and the second round to calculate statistics for them. 

However, 
the two-round protocol is not desirable for many practical systems, 
as it significantly 
requires a great deal of effort from users and synchronization, as described in Section~\ref{sec:intro}. 
We are interested in providing 
high accuracy and communication/computation efficiency 
\textit{without} introducing two rounds of interaction between users and the shuffler.

\smallskip{}
\noindent{\textbf{Overview.}}~~Our key idea is to 
remove the two rounds of interaction between users and the shuffler in Fig.~\ref{fig:two_round} by \textit{replacing unselected items with $\bot$ on the data collector side} and \textit{introducing multiple encryption}. 
Fig.~\ref{fig:proposal} shows the overview of our FME protocol that embodies our key idea. 

For ease of explanation, we begin with our FME protocol without encryption. 
In our protocol, each user $u_i$ sends her hash value $h(x_i)$ and input value $x_i$ simultaneously. 
After receiving each user's pair $\la h(x_i), x_i \ra$ ($i \in [n]$), the shuffler 
performs 
augmented shuffling for the pairs, where dummies are added to hash values (i.e., in the form of $\la h(i), \bot \ra$ for $i\in[b]$), and 
sends 
the shuffled pairs to the data collector. 
Then, the data collector 
selects 
a set $\Lambda$ of popular items based on the shuffled hash values and 
replaces 
unselected items $x_i \notin \Lambda$ in the shuffled pairs with $\bot$. 
After that, the data collector 
sends 
$\Lambda$ and the shuffled input values (including $\bot$) 
back to the shuffler. 
Note that the shuffled input values are the same as the ones the shuffler receives in the second round in Fig.~\ref{fig:two_round} (except that dummies are added to $\bot$). 
Thus, 
as with Fig.~\ref{fig:two_round}, 
the shuffler 
performs 
augmented shuffling for $\Lambda$ and $\bot$ and 
sends 
shuffled input values to the data collector. 

The above protocol achieves \textit{one round of interaction} between users and the shuffler by 
replacing unselected items with $\bot$ on the data collector side rather than the user side. 
Note, however, that we must handle the shuffled input values very carefully to prevent information leakage. 
For example, if each user $u_i$ encrypts $\la h(x_i), x_i \ra$ using a public key of the data collector, then the data collector can obtain $x_1,\ldots,x_n$ by decrypting them. 
Moreover, the shuffled input values are communicated between the shuffler and the data collector \textit{three times} (shuffler $\rightarrow$ data collector $\rightarrow$ shuffler $\rightarrow$ data collector), 
and 
a lot of information can be leaked by comparing them. 
For example, the shuffler would know whose input values are replaced with $\bot$ by comparing the first and second shuffled data. 
The data collector would know which values are added as dummies by comparing the second and third shuffled data. 

We address this issue by introducing multiple encryption, as shown in Fig.~\ref{fig:proposal}. 
Specifically, each user $u_i$ encrypts her input value $x_i$ \textit{three times} using public keys of the data collector, the shuffler, and the data collector. 
The data collector and the shuffler decrypt the shuffled data each time they receive them. 
Then, intuitively, the above information leakage can be prevented for two reasons: (i) the data collector cannot see the contents of the first shuffled data, and (ii) the three shuffled data are completely different from each other. 
We prove that this approach provides (computational) DP in Section~\ref{sub:FME_properties}.

\subsection{Details}
\label{sub:FME_protocol}

\setlength{\algomargin}{5mm}
\begin{algorithm*}[pt]
  \SetAlgoLined
  \KwData{Input values $(x_1, \ldots, x_n) \in [d]^n$, 
  hash function $h: [d] \rightarrow [b]$, 
  \colorB{dummy-count distributions $\calD^* = (\calD_1,\calD_2)$ ($\calD_1$ has mean $\mu_1$ and variance $\sigma_1^2$; $\calD_2$ has mean $\mu_2$ and variance $\sigma_2^2$)}, 
  sampling probability $\beta \in [0,1]$, 
  significance level $\alpha \in [0,1]$, maximum number of selected hashes $l \in [b]$.
  }
  \KwResult{Selected items $\Lambda \subseteq [d]$, estimates $\{\hf_i | i \in \Lambda \}$.}
  \tcc{Send hash values and input values (users $\rightarrow$ shuffler)}
  \ForEach{$i \in [n]$}{
    [$u_i$] Send $\la E_\pkd[h(x_i)], E_\pkd[E_\pks[E_\pkd[x_i]]] \ra$ to the shuffler\;
  }
  \tcc{Random sampling}
  [s] Sample $\la E_\pkd[h(x_i)], E_\pkd[E_\pks[E_\pkd[x_i]]] \ra$ $(i\in[n])$ with probability $\beta$\;
  \tcc{Dummy hash data addition}
  \ForEach{$i \in [b]$}{
    [s] $z_i \leftarrow \colorB{\calD_1}$;
    Add a dummy $\la E_\pkd[i], E_\pkd[E_\pks[E_\pkd[\bot]]] \ra$ for $z_i$ times\;
  }
  \tcc{Random shuffling}
  [s] Let $y^H_1, \ldots, y^H_{\tn} \in [b]$ (resp.~$y_1, \ldots, y_{\tn} \in [d]\cup\{\bot\}$) be the selected hash (resp.~input) values and dummies. 
  Sample a random permutation $\pi$ over $[\tn]$\;
  \tcc{Send shuffled data (shuffler $\rightarrow$ data collector)}
  [s] Send $(E_\pkd[y^H_{\pi(1)}], \ldots, E_\pkd[y^H_{\pi(\tn)}])$ and $(E_\pkd[E_\pks[E_\pkd[y_{\pi(1)}]]], \ldots, E_\pkd[E_\pks[E_\pkd[y_{\pi(\tn)}]]])$ to the data collector\;
  \tcc{Filtering ($\Lambda^H \subseteq [b]$: selected hash values, $\Lambda \subseteq [d]$: selected items)}
  [d] Decrypt $y^H_{\pi(1)}, \ldots, y^H_{\pi(\tn)}$ and $E_\pks[E_\pkd[y_{\pi(1)}]], \ldots, E_\pks[E_\pkd[y_{\pi(\tn)}]]$\;
  [d] $(\tc^H_1, \ldots, \tc^H_b) 
  \leftarrow$\texttt{Count}$(y^H_{\pi(1)}, \ldots, y^H_{\pi(\tn)})$\;
  [d] $\Lambda^H, \Lambda \leftarrow$\texttt{FilterItems}$(\tc^H_1, \ldots, \tc^H_b, \alpha, \colorB{\calD_1}, l)$\;
  \tcc{Replace unselected items with $\bot$}
  \ForEach{$i \in \tn$}{
    \If{$y^H_{\pi(i)} \notin \Lambda^H$}{
        [d] $E_\pks[E_\pkd[y_{\pi(i)}]] \leftarrow E_\pks[E_\pkd[\bot]]$;
    }
  }
  \tcc{Send selected items and shuffled data (data collector $\rightarrow$ shuffler)}
  [d] Send $\Lambda$ and $(E_\pks[E_\pkd[y_{\pi(1)}]], \ldots, E_\pks[E_\pkd[y_{\pi(\tn)}]])$ to the shuffler\;
  \tcc{Dummy input data addition}
  [s] Decrypt $E_\pkd[y_{\pi(1)}], \ldots, E_\pkd[y_{\pi(\tn)}]$ 
  while removing those corresponding to dummy hash data\;
  \ForEach{$i \in \Lambda$}{
    [s] $z_i \leftarrow \colorB{\calD_2}$;
    Add a dummy $E_\pkd[i]$ for $z_i$ times\;
  }
  \tcc{Random shuffling}
  [s] Let $y^*_1, \ldots, y^*_{\tn^*} \in [d]\cup\{\bot\}$ be 
  $\tn$ 
  input values and 
  dummies. 
  Sample a random permutation $\rho$ over $[\tn^*]$\;
  \tcc{Send shuffled data (shuffler $\rightarrow$ data collector)}
  [s] Send $(E_\pkd[y^*_{\rho(1)}], \ldots, E_\pkd[y^*_{\rho(\tn^*)}])$ to the data collector\;
  \tcc{Compute an unbiased estimate}
  [d] Decrypt $y^*_{\rho(1)}, \ldots, y^*_{\rho(\tn^*)}$\;
  [d] $(\tc_1, \ldots, \tc_d) 
  \leftarrow$\texttt{Count}$(y^*_{\rho(1)}, \ldots, y^*_{\rho(\tn^*)})$\;
  \ForEach{$i \in \Lambda$}{
    [d] $\hf_i \leftarrow \frac{1}{n\beta}(\tc_i-\colorB{\mu_2})$\;
  }
  \KwRet{$\Lambda$ \rm{and} $\{\hf_i | i \in \Lambda\}$}
  \caption{Our FME protocol $\colorB{\calS_{\calD^*,\beta}^{\FME}}$. 
  $\pkd$ (resp.~$\pks$) represents a public key of the data collector (resp.~shuffler). 
  }\label{alg:S_FME}
\end{algorithm*}

\noindent{\textbf{Protocol.}}~~Algorithm~\ref{alg:S_FME} 
shows an algorithmic description of our FME protocol. 
\colorB{Our protocol uses two dummy-count distributions $\calD_1$ (mean: $\mu_1$, variance: $\sigma_1^2$) and $\calD_2$ (mean: $\mu_2$, variance: $\sigma_2^2$). 
$\calD_1$ and $\calD_2$ are used for adding dummy hash values and dummy input values, respectively. 
We denote these distributions by $\calD^* = (\calD_1, \calD_2)$ and our FME protocol by $\colorB{\calS_{\calD^*,\beta}^{\FME}}$. 
In Appendix~\ref{sec:details_FME}, we show a toy example of $\calS_{\calD^*,\beta}^{\FME}$.} 

First, each user $u_i$ sends her hash value $E_\pkd[h(x_i)]$ and input value $E_\pkd[E_\pks[E_\pkd[x_i]]]$, where $\pkd$ and $\pks$ are public keys of the data collector and the shuffler, respectively (lines 1-3). 
After receiving them, 
the shuffler randomly selects each pair with probability $\beta$ (line 4). 
Then, 
for each hash value $i\in[b]$, the shuffler adds a dummy $\la E_\pkd[i], E_\pkd[E_\pks[E_\pkd[\bot]]] \ra$ for $z_i \sim \colorB{\calD_1}$ times (lines 5-7). 
Let $y^H_1, \ldots, y^H_{\tn} \in [b]$ (resp.~$y_1, \allowbreak \ldots, y_{\tn} \in [d]\cup\{\bot\}$) be the selected hash (resp.~input) values and dummies, where $\tn \in \nnints$ is the total number of these values. 
The shuffler samples a random permutation $\pi$ over $[\tn]$ and sends $(E_\pkd[y^H_{\pi(1)}], \ldots, E_\pkd[y^H_{\pi(\tn)}])$ and $(E_\pkd[E_\pks[E_\pkd[y_{\pi(1)}]]], \ldots, E_\pkd[E_\pks[E_\pkd[y_{\pi(\tn)}]]])$ to the data collector (lines 8-9). 

Then, the data collector decrypts them and calls the \texttt{Count} function, which calculates a count $\tc_i^H \in \nnints$ for each hash value $i \in [b]$ from $y^H_{\pi(1)}, \ldots, y^H_{\pi(\tn)}$ (lines 10-11). 
Based on $\tc_1^H, \ldots, \tc_b^H$, the data collector calls the \texttt{FilterItems} function, which selects a set $\Lambda^H \subseteq [b]$ of popular hash values and 
a set $\Lambda \subseteq [d]$ of the corresponding input values 
(line 12). 
We explain the \texttt{FilterItems} function later in detail. 
Then, the data collector replaces unselected items $E_\pks[E_\pkd[y_{\pi(i)}]]$, whose corresponding hash values are $y^H_{\pi(i)} \notin \Lambda^H$, with $E_\pks[E_\pkd[\bot]]$ (lines 13-17). 
Note that the data collector can do this without seeing $y_{\pi(i)}$, as she knows $y^H_{\pi(i)}$. 
The data collector sends 
$\Lambda$ and shuffled input values $(E_\pks[E_\pkd[y_{\pi(1)}]], \allowbreak \ldots, E_\pks[E_\pkd[y_{\pi(\tn)}]])$ back to the shuffler (line 18).

The shuffler decrypts 
$\tn$ 
shuffled input values 
while removing those corresponding to dummy hashes, i.e., $E_\pkd[E_\pks[E_\pkd[\bot]]]$ generated by the shuffler (line 19). 
The shuffler can remove them because she knows the random permutation $\pi$. 
Then, for each selected item $i \in \Lambda$, it adds a dummy $E_\pkd[i]$ for $z_i \sim \colorB{\calD_2}$ times (lines 20-22). 
Let $y^*_1, \ldots, y^*_{\tn^*} \in [d]$ be $\tn$ input values and dummies, where $\tn^* \in \nnints$ is the total number of these values. 
The shuffler samples a random permutation $\rho$ over $[\tn^*]$ and sends 
$(E_\pkd[y^*_{\rho(1)}], \ldots, E_\pkd[y^*_{\rho(\tn^*)}])$ to the data collector (lines 23-24). 
Finally, the data collector 
calculates an unbiased estimate $\hf_i$ of $f_i$ for $i \in \Lambda$ 
and outputs 
$\Lambda$ and 
$\{\hf_i | i \in \Lambda\}$ 
(lines 25-30). 

\smallskip{}
\noindent{\textbf{Filtering Items.}}~~Below, we explain the details of the \texttt{FilterItems} function in Algorithm~\ref{alg:S_FME} (line 12). 
This function takes two parameters as input: a significance level $\alpha \in [0,1]$ and the maximum number $l \in [b]$ of selected hashes. 
Specifically, 
we first calculate a threshold $z_{th} \in \nnints$ so that a random variable $z$ generated from $\colorB{\calD_1}$ is larger than or equal to $z_{th}$ with probability at most 
$\alpha$, 
i.e., $\Pr(z \geq z_{th}) \leq \alpha$. 
Then, we compare each count $\tc^H_i$ ($i \in [b]$) to the threshold $z_{th}$ and select the hash value $i$ if $\tc^H_i \geq z_{th}$. 
If the number of selected hash values exceeds 
$l$, 
we select $l$ hash values with the largest counts. 
Finally, we add the selected hash values to $\Lambda^H$ and the corresponding input values $x$ such that $h(x) \in \Lambda^H$ to $\Lambda$. 
Fig.~\ref{fig:filtering} shows an example of $\Lambda^H$. 

\begin{figure}[t]
  \centering
  \includegraphics[width=0.99\linewidth]{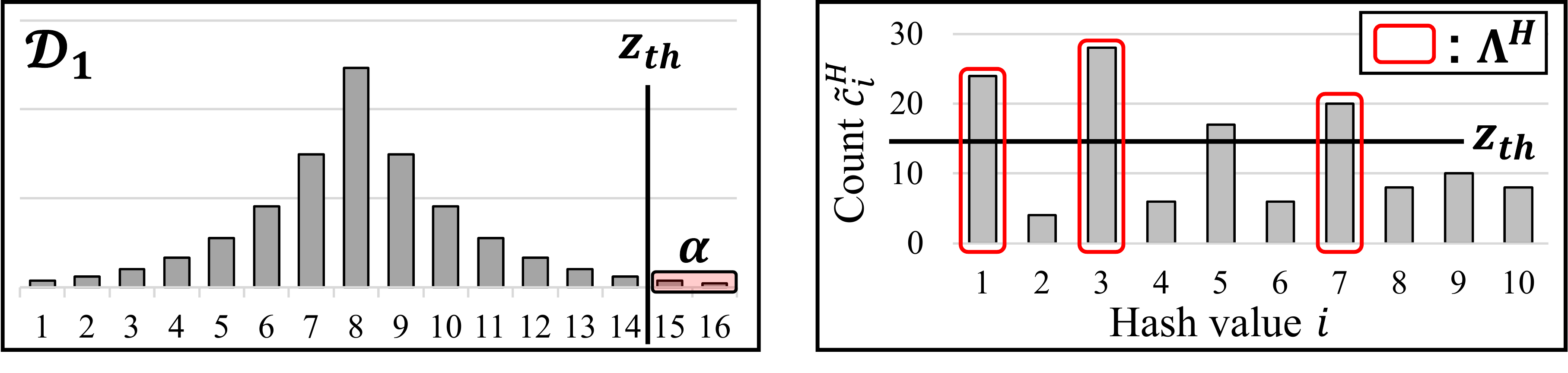}
  \vspace{-8mm}
  \caption{Example of filtering items. In this example, $\alpha=0.05$, $z_{th}=15$, $b=10$, $l=3$, and $\Lambda^H=\{1,3,7\}$.} 
  \label{fig:filtering}
\end{figure}

Suppose that no user has a hash value $i \in [b]$. 
Then, we incorrectly add $i$ to $\Lambda^H$ with probability at most $\alpha$; 
i.e., the false positive probability is at most the significance level $\alpha$, which is small. 
In addition, we add at most $l$ hash values to $\Lambda^H$. 
Thus, we can improve the efficiency by reducing $l$. 

\smallskip{}
\noindent{\textbf{\colorB{Parameters.}}}~~\colorB{$\calS_{\calD^*,\beta}^{\FME}$ has parameters $\beta$, $\alpha$, and $l$. 
We explain how to set $l$ in Section~\ref{sub:FME_optimization} and $\beta$  and $\alpha$ in Appendix~\ref{sub:FME_parameters}.}

\subsection{Theoretical Properties}
\label{sub:FME_properties}

\smallskip{}
\noindent{\textbf{Privacy and Robustness.}}~~First, we show that $\colorB{\calS_{\calD^*,\beta}^{\FME}}$ provides DP and is robust to both data poisoning and collusion attacks: 

\begin{theorem}
\label{thm:FME_DP}
\colorB{If the binary input mechanisms $\calM_{\calD_1, \beta}$ and 
$\calM_{\calD_2, 1}$ 
in Definition~\ref{def:binary_input} provide $(\frac{\epsilon_1}{2}, \frac{\delta_1}{2})$-DP and $(\frac{\epsilon_2}{2}, \frac{\delta_2}{2})$-DP, respectively, then $\calS_{\calD^*,\beta}^{\FME}$ provides 
$(\epsilon, \delta)$-CDP, where $(\epsilon, \delta) = (\epsilon_1 + \epsilon_2, \delta_1 + \delta_2)$, 
and is robust to collusion with users.} 
\end{theorem}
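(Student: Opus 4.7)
The plan is to extend the reduction in Theorem~\ref{thm:LNF_DP} by composing the privacy guarantees of the two augmented shuffling phases in $\calS_{\calD^*,\beta}^{\FME}$, and then to layer a hybrid argument over the PKE scheme to lift the information-theoretic bound to CDP.

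First, I would fix neighboring databases $D, D'$ differing only at index $i$, with $x_i \ne x'_i$. In the first (hash) shuffling phase, the plaintext hash-count vector $(\tc^H_1, \ldots, \tc^H_b)$ can differ in at most two coordinates, namely $h(x_i)$ and $h(x'_i)$ (merging to one under a hash collision). Each such coordinate is distributed as $\calM_{\calD_1,\beta}$ applied to the victim's contribution bit, so by the $(\epsilon_1/2, \delta_1/2)$-DP assumption on $\calM_{\calD_1,\beta}$, the doubling argument of Theorem~\ref{thm:LNF_DP} delivers $(\epsilon_1, \delta_1)$-DP for this phase. In the second (input) shuffling phase, the shuffler uses all surviving values (effective sampling probability $1$) and adds $\calD_2$ dummies for each $i \in \Lambda$; again at most two coordinates of the final count vector differ, so by the $(\epsilon_2/2, \delta_2/2)$-DP assumption on $\calM_{\calD_2, 1}$ and the same doubling argument, this phase provides $(\epsilon_2, \delta_2)$-DP conditioned on the first phase's output (and on $\Lambda$, which is post-processing of it). Adaptive composition of DP then yields $(\epsilon_1+\epsilon_2, \delta_1+\delta_2)$-DP for the combined plaintext view.

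To promote this to CDP against polynomial-time adversaries, I would perform a hybrid argument that replaces every intermediate ciphertext, one at a time, with a fresh encryption of a fixed value such as $\bot$ under the appropriate public key. The encryption architecture of Algorithm~\ref{alg:S_FME} is deliberately arranged so that at each hop the outermost layer is peeled off by the intended recipient while an inner layer remains encrypted under the other party's public key; thus the shuffler never holds a plaintext $x_i$, and the data collector sees plaintext input values only in the final shuffled round. Each hybrid therefore reduces to IND-CPA security of the PKE scheme using only one secret key, and the total distinguishing advantage is $\mathsf{negl}(\gamma)$. The simulation works symmetrically whether the attacker is the shuffler or the data collector. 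For robustness to collusion with users, observe that each $\nu_i = \la E_\pkd[h(x_i)], E_\pkd[E_\pks[E_\pkd[x_i]]] \ra$ is a randomized function of $x_i$ alone, independent of the shuffler's sampling, dummy generation, and permutations. For any $\Omega$-neighboring pair, the victim's index lies outside $\Omega$, so the joint distribution of $(\nu_j)_{j \in \Omega}$ is identical under $D$ and $D'$; conditioning on these transcripts therefore leaves the $(\epsilon_1+\epsilon_2, \delta_1+\delta_2)$-CDP bound unchanged.

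The hardest part will be choreographing the hybrids cleanly. Because the same plaintext is rewrapped under three nested encryptions using two distinct public keys, and because in the filtering block (lines 14--17) the data collector actively substitutes $E_\pks[E_\pkd[\bot]]$ for unselected items without ever seeing the replaced values, one must sequence the hybrid swaps so that each IND-CPA reduction needs access to only one of $\mathsf{sk}_\mathsf{d}, \mathsf{sk}_\mathsf{s}$ at a time, and verify that the ciphertexts produced by the substitution step are distributionally identical to those the simulator must output.
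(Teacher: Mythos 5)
Your proposal follows essentially the same two-step route as the paper: first establish $(\epsilon_1+\epsilon_2,\delta_1+\delta_2)$-DP for the plaintext views of the two augmented-shuffling phases by reducing each to the binary input mechanism (via the LNF/CH analyses) and applying basic composition, then lift to CDP via hybrid reductions to the multi-message IND-CPA security of the PKE scheme, treating the shuffler's and data collector's views separately and handling the $E_\pks[E_\pkd[\bot]]$ substitution exactly as you anticipate. The one step to tighten is collusion robustness: identical marginals of $(\nu_j)_{j\in\Omega}$ under $D$ and $D'$ are not by themselves sufficient --- you need the DP bound to hold conditionally on the colluders' data, which the paper obtains by inheriting robustness from the LNF/CH building blocks (the shuffler's sampling and dummy noise on the victim's coordinate are independent of $\Omega$).
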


\begin{theorem}
\label{thm:FME_poisoning}
Let $\lambda = \frac{n'}{n+n'}$ and $f_\calT = \sum_{i \in \calT} f_i$. 
$\colorB{\calS_{\calD^*,\beta}^{\FME}}$ provides the following robustness 
against poisoning attacks:
\begin{align}
\GMGA = \textstyle{\lambda (1 - f_\calT) + \sum_{i\in\calT} \eta_i f_i,} 
\label{eq:FME_GMGA}
\end{align}
where 
$\eta_i \in [0,1]$ is the probability that the $i$-th item is not selected in the filtering step (i.e., $i \notin \Lambda$) before data poisoning. 
\end{theorem}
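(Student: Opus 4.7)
The plan is to bound the per-item gain $\E[\hf'_i] - \E[\hf_i]$, sum over $i \in \calT$, and exhibit an attack that saturates the bound. First I would compute $\E[\hf_i]$ in the absence of poisoning. Setting $\hf_i = 0$ for $i \notin \Lambda$, the estimator is $\hf_i = \frac{1}{n\beta}(\tc_i - \mu_2)\mathone_{i \in \Lambda}$. The dummies $z_i \sim \calD_2$ added in line~21 are drawn \emph{after} the filter is decided in line~12, so $z_i$ and $\mathone_{i \in \Lambda}$ are independent and the $\mu_2$-shift cancels, giving $\E[\hf_i] = \E[X_i \mathone_{i \in \Lambda}]/(n\beta)$, where $X_i$ is the number of sampled genuine inputs equal to $i$. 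Invoking the conditional unbiasedness of the LNF-style estimator (Theorem~\ref{thm:LNF_accuracy} applied to the filtered sub-problem) yields $\E[\hf_i] = (1-\eta_i) f_i$.

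For the post-poisoning expectation, let $n'_i$ denote the number of fake users who choose item $i$, so $\sum_{i} n'_i \le n'$. Since fake pairs are routed through the same sampling step as genuine ones and the dummy layer remains independent of the filter, conditioning on $i \in \Lambda$ and accounting for the total $n+n'$ observed users in the normalization gives $\E[\hf'_i \mid i \in \Lambda] = (nf_i + n'_i)/(n+n')$, so $\E[\hf'_i] \le (nf_i + n'_i)/(n+n')$. Subtracting and simplifying yields
\begin{align*}
\E[\hf'_i] - \E[\hf_i] \le \tfrac{n'_i - n' f_i}{n+n'} + \eta_i f_i,
\end{align*}
and summing over $\calT$ with $\sum_{i \in \calT} n'_i \le n'$ delivers $G_f(\bmm) \le \lambda(1-f_\calT) + \sum_{i \in \calT} \eta_i f_i$.

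Tightness is achieved by the MGA-style attack in which each fake user sends a pair $\langle h(i), i \rangle$ for some target $i \in \calT$: this single choice simultaneously maximizes the item count $\tc_i$ and drives the hash count $\tc^H_{h(i)}$ past the threshold $z_{th}$, so every target item passes the filter and the upper bound on $\E[\hf'_i]$ becomes tight. The main obstacle will be the careful handling of $\E[X_i \mathone_{i \in \Lambda}]$: the filter event depends on $\tc^H_{h(i)}$, which itself is correlated with $X_i$ (especially when $h(i)$ is a singleton hash bucket), so the factorization $\E[X_i \mathone_{i \in \Lambda}] = \E[X_i]\Pr(i \in \Lambda)$ does not hold literally. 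My plan is to absorb this dependence into the definition $\eta_i = 1 - \Pr(i \in \Lambda)$ and reduce to an LNF-style unbiasedness argument applied to the filtered sub-protocol; a secondary subtlety, that the attacker must bias both the filter and the final counts from the same $n'$ budget, is resolved by the fact that a single pair $\langle h(i), i \rangle$ contributes to both counts simultaneously.
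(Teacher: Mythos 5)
Your proposal is correct and follows essentially the same route as the paper's proof: both compute the pre-poisoning expectation as $(1-\eta_i)f_i$ via conditional unbiasedness on the filtering event, bound the post-poisoning expectation by $(nf_i+n'_i)/(n+n')$ under the budget $\sum_{i\in\calT} n'_i \le n'$, and saturate the bound with the attack in which each fake user sends $\langle h(i), i\rangle$ for a target $i$ so that the same message simultaneously passes the filter and inflates $\tc_i$. The dependence between the sampled count and the selection event that you flag is handled in the paper exactly as you propose, by absorbing it into $\eta_i$ and invoking the LNF-style conditional unbiasedness of Theorem~\ref{thm:FME_accuracy}.
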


\colorB{Theorem~\ref{thm:FME_DP} uses the basic composition theorem \cite{DP}. 
In Appendix~\ref{sec:details_FME}, we explain that this is almost tight in our case.
$\calS_{\calD^*,\beta}^{\FME}$ spends $(\epsilon_1,\delta_1)$ and $(\epsilon_2,\delta_2)$ for hash and input values, respectively. 
When $\beta = 1$ and the same dummy-count distribution is used for $\calD_1$, $\calD_2$, and $\calD$, 
$\calS_{\calD^*,\beta}^{\FME}$ needs the total budget $(\epsilon,\delta)$ 
twice as large as the LNF protocol $\calS_{\calD, \beta}^{\LNF}$.}

In addition, 
Theorem~\ref{thm:FME_DP} assumes 
that 
the shuffler does not disclose dummies after running the protocol in the same way as \cite{Wang_PVLDB20,Murakami_SP25}. 
\colorB{This can be achieved, e.g., by using the TEE.} 
We note, however, that 
even if the shuffler discloses dummies, we can guarantee DP for the outputs by adding additional noise to the outputs before publishing them. 
In Appendix~\ref{sub:proposals_additional}, we show the accuracy is hardly affected by this additional noise. 

In Theorem~\ref{thm:FME_poisoning}, the second term in (\ref{eq:FME_GMGA}) is introduced because a target item $i \in \calT$ can be discarded in the filtering step before poisoning and selected after poisoning. 
This results in a slight increase in the overall gain. 
We note, however, that the second term in (\ref{eq:FME_GMGA}) is very small in practice, as $\eta_i$ is extremely small when $f_i$ is large (e.g., $\eta_i$ decreases exponentially as $f_i$ increases when $l=b$; 
\conference{see~\cite{Murakami_arXiv25}}\arxiv{see Appendix~\ref{sub:proof_FME_poisoning}}). 
When the second term is ignored, 
$\GMGA$ of $\colorB{\calS_{\calD^*,\beta}^{\FME}}$ 
is equal to that of 
$\calS_{\calD, \beta}^{\LNF}$ in (\ref{LNF_GMGA}). 

\smallskip{}
\noindent{\textbf{Efficiency.}}~~We next analyze the communication cost of $\colorB{\calS_{\calD^*,\beta}^{\FME}}$: 

\begin{theorem}
\label{thm:FME_communication}
Let $\tau_1, \tau_2, \tau_3 \in \nnreals$ be the size of a single ciphertext, double ciphertext, and triple ciphertext, respectively, in $\colorB{\calS_{\calD^*,\beta}^{\FME}}$. 
Then,  the total communication cost of $\colorB{\calS_{\calD^*,\beta}^{\FME}}$ is 
$C_{tot} = C_{U-S} + C_{S-D}$, where 
\begin{align*}
C_{U-S} &= (\tau_1 + \tau_3) n \\
C_{S-D} &\leq \textstyle{(2 \tau_1 + \tau_2 + \tau_3) (\beta n + \colorB{\mu_1} b) + \tau_1 (\colorB{\mu_2} + 1)\E[|\Lambda|]}, 
\end{align*}
and 
the expected number $\E[|\Lambda|]$ of selected items is: 
\begin{align}
\E[|\Lambda|] \leq 
\begin{cases}
\textstyle{\frac{(\beta n + \alpha (l - \beta n))d}{b}} 
   &   \text{(if $\beta n \leq l \leq b$)}\\
\textstyle{\frac{ld}{b}}   &   \text{(otherwise)}. 
\end{cases}
\label{eq:FME_E_Lambda}
\end{align}
\end{theorem}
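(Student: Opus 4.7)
The plan is to decompose the total cost into its separate message-passing components and then take expectations round by round. The user-to-shuffler part $C_{U-S}$ is immediate: line~2 of Algorithm~\ref{alg:S_FME} shows that every user $u_i$ transmits exactly one single ciphertext $E_\pkd[h(x_i)]$ (size $\tau_1$) and one triple ciphertext $E_\pkd[E_\pks[E_\pkd[x_i]]]$ (size $\tau_3$), so summing over $n$ users gives $C_{U-S} = (\tau_1+\tau_3)n$ deterministically.

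For $C_{S-D}$, I would separately count the three messages exchanged between the shuffler and the data collector. Let $\tn$ denote the number of (hash, input) slots after sampling (line~4) and hash-dummy addition (lines~5--7), and $\tn^*$ the number of input ciphertexts after removal of hash-dummy entries in line~19 and addition of new input dummies in lines~20--22. Then the contributions are: (i) line~9, $(\tau_1+\tau_3)\tn$ bits (one single and one triple ciphertext per slot); (ii) line~18, at most $\tau_2\tn + \tau_1|\Lambda|$ bits (one double ciphertext per slot, with the set $\Lambda$ upper-bounded by $\tau_1|\Lambda|$ bits as a list of items); and (iii) line~24, $\tau_1\tn^*$ bits (one single ciphertext per slot). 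By linearity, $\E[\tn]=\beta n+\mu_1 b$ and $\E[\tn^*]=\beta n+\mu_2\E[|\Lambda|]$, since line~19 deterministically strips the expected $\mu_1 b$ hash dummies, leaving $\beta n$ real items in expectation, before $\mu_2|\Lambda|$ new dummies are added. Summing and reorganizing yields
\begin{align*}
\E[C_{S-D}] = (\tau_1+\tau_2+\tau_3)(\beta n+\mu_1 b) + \tau_1\beta n + \tau_1(\mu_2+1)\E[|\Lambda|],
\end{align*}
and the trivial bound $\tau_1\beta n \le \tau_1(\beta n+\mu_1 b)$ absorbs the stray term into the first factor, giving $(2\tau_1+\tau_2+\tau_3)(\beta n+\mu_1 b) + \tau_1(\mu_2+1)\E[|\Lambda|]$ as claimed.

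The remaining, and most delicate, step is the bound on $\E[|\Lambda|]$. I would first reduce to bounding $\E[|\Lambda^H|]$: using the identity $|\Lambda|=\sum_{i=1}^d \mathone_{h(i)\in\Lambda^H}$ together with the $2$-wise independence of $h$ (so that $\Pr[h(i)=j]=1/b$ for every $i,j$), one obtains $\E[|\Lambda|]\le \E[|\Lambda^H|]\cdot d/b$. The trivial case ($l<\beta n$ or $l>b$) then follows from the cap $|\Lambda^H|\le l$. For the nontrivial case $\beta n\le l\le b$, the idea is to split $[b]$ into the at most $\beta n$ ``real'' hash values (those receiving at least one sampled user's hash) and the remaining ``empty'' ones, where each empty hash crosses the threshold $z_{th}$ with probability at most $\alpha$ by construction of $z_{th}$. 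The top-$l$ cap reserves at most $l-\beta n$ slots for empty hashes after the real ones occupy their positions, and charging each such slot the false-positive probability $\alpha$ yields $\E[|\Lambda^H|]\le \beta n+\alpha(l-\beta n)$.

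The hard part will be making this last bound fully rigorous: the top-$l$ truncation and the binomial tail of empty-hash false positives interact nontrivially, and a careful proof has to argue that the $l-\beta n$ residual slots are populated only when empty hashes actually cross the threshold under $\calD_1$, without overcounting hashes whose random dummy counts happen to outrank a real hash. Once this is handled, combining with the reduction $\E[|\Lambda|]\le \E[|\Lambda^H|]\cdot d/b$ produces~(\ref{eq:FME_E_Lambda}) and completes the proof.
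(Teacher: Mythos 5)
Your proposal follows essentially the same route as the paper's proof: a round-by-round ciphertext count for $C_{U-S}$ and $C_{S-D}$ (your exact accounting of the third message plus the slack $\tau_1\beta n \le \tau_1(\beta n + \mu_1 b)$ lands on the same final bound as the paper, which instead over-counts the third message by retaining the $\mu_1 b$ hash-dummy entries), followed by the reduction $\E[|\Lambda|] \le \E[|\Lambda^H|]\cdot d/b$ via hash uniformity and the real/empty split of hash values. The one step you flag as delicate --- rigorously justifying $\E[|\Lambda^H|] \le \beta n + \alpha(l-\beta n)$ in the presence of the top-$l$ truncation --- is exactly the step the paper itself dispatches with the one-line assertion that the other $(l-\beta n)$ hash values ``can be selected with probability $\alpha$,'' so you are not missing any argument that the paper actually supplies.
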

For example, if we use ECIES with $256$-bit security~\cite{bouncy}, then $\tau_1$, $\tau_2$, and $\tau_3$ are $712$, $1392$, and $2072$ bits, respectively. 
By optimizing $b$, $C_{tot}$ of $\colorB{\calS_{\calD^*,\beta}^{\FME}}$ can be expressed as 
$C_{tot} = O(n + \sqrt{ld})$ when $l < \beta n$. 
See Section~\ref{sub:FME_optimization} for details. 
Similarly, the computational cost of $\colorB{\calS_{\calD^*,\beta}^{\FME}}$ can be 
$O(n + \sqrt{ld})$ 
in this case. 
\conference{See~\cite{Murakami_arXiv25}}\arxiv{See Appendix~\ref{sec:FME_computational_cost}} for details.

\smallskip{}
\noindent{\textbf{Accuracy.}}~~Finally, we analyze the accuracy of $\colorB{\calS_{\calD^*,\beta}^{\FME}}$: 

\begin{theorem}
\label{thm:FME_accuracy}
For any item $i \in \Lambda$ selected in the filtering step,  
$\colorB{\calS_{\calD^*,\beta}^{\FME}}$ outputs an unbiased estimate 
(i.e., $\E[\hf_i | \Lambda] = f_i$) 
and achieves the following variance: 
\begin{align}
\V[\hf_i | \Lambda] = \textstyle{\frac{f_i (1-\beta)}{n \beta} + \frac{\colorB{\sigma_2^2}}{n^2 \beta^2}}. 
\label{eq:FME_l2-loss}
\end{align}
$\colorB{\calS_{\calD^*,\beta}^{\FME}}$ also achieves the following expected squared error: 
\begin{align}
\textstyle{\E[(\hf_i - f_i)^2] 
= (1 - \eta_i)\V[\hf_i | \Lambda] + \eta_i f_i^2 }, 
\label{eq:FME_l2-loss_all}
\end{align}
where 
$\eta_i \in [0,1]$ is the probability that the $i$-th item is not selected in the filtering step (i.e., $i \notin \Lambda$). 
\end{theorem}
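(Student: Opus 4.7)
The plan is to exploit the two-stage structure of the protocol to decompose $\tc_i$ into independent contributions. For $i \in \Lambda$, the count $\tc_i$ of $i$ among the final shuffled values $y^*_{\rho(1)}, \ldots, y^*_{\rho(\tn^*)}$ equals $\alpha_i + z_i^{(2)}$, where $\alpha_i$ is the number of selected genuine users with input $i$ (which all pass through the filter on line 16 because $h(i) \in \Lambda^H$), and $z_i^{(2)} \sim \calD_2$ is the dummy count added on line 21. Since each of the $n f_i$ users with $x_j = i$ is independently retained with probability $\beta$ on line 4, $\alpha_i \sim \mathrm{Bin}(n f_i, \beta)$, so $\E[\alpha_i] = n f_i \beta$ and $\V[\alpha_i] = n f_i \beta (1 - \beta)$. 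The dummies $z_i^{(2)}$ are drawn \emph{after} the filter, independently of everything before, so $\E[z_i^{(2)} \mid \Lambda] = \mu_2$ and $\V[z_i^{(2)} \mid \Lambda] = \sigma_2^2$.

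Combining these via independence of the two summands gives $\E[\tc_i \mid \Lambda] = n f_i \beta + \mu_2$ and $\V[\tc_i \mid \Lambda] = n f_i \beta (1 - \beta) + \sigma_2^2$. Applying linearity to $\hf_i = \frac{1}{n\beta}(\tc_i - \mu_2)$ then yields $\E[\hf_i \mid \Lambda] = f_i$, proving unbiasedness, and $\V[\hf_i \mid \Lambda] = \frac{1}{n^2 \beta^2} \V[\tc_i \mid \Lambda]$, which matches (\ref{eq:FME_l2-loss}). For the unconditional squared error (\ref{eq:FME_l2-loss_all}), I would split via the law of total expectation on the event $\{i \in \Lambda\}$, whose probability is $1 - \eta_i$ by definition of $\eta_i$. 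On $\{i \in \Lambda\}$, unbiasedness reduces $\E[(\hf_i - f_i)^2 \mid i \in \Lambda]$ to $\V[\hf_i \mid \Lambda]$; on $\{i \notin \Lambda\}$ the protocol does not output an estimate for $i$, so under the natural convention $\hf_i = 0$ we have $(\hf_i - f_i)^2 = f_i^2$. Weighting by $1 - \eta_i$ and $\eta_i$ then yields (\ref{eq:FME_l2-loss_all}).

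The main delicate point I anticipate is justifying the conditional-on-$\Lambda$ computation for $\alpha_i$. The filter on line 12 selects $h(i)$ when the hash count $\tc^H_{h(i)}$ exceeds the threshold (or lies in the top $l$), and since users with $x_j = i$ contribute to $\tc^H_{h(i)}$, the event $\{i \in \Lambda\}$ is in principle positively correlated with large values of $\alpha_i$, which could skew $\E[\alpha_i \mid \Lambda]$. I expect the intended resolution is that the downstream randomness used by the estimator, namely $z_i^{(2)}$, is genuinely independent of $\Lambda$ and already absorbs the stochastic correction needed so that $\hf_i$ has mean $f_i$; the dummy mean $\mu_2$ is subtracted precisely to kill the bias contributed by $z_i^{(2)}$, while the marginal Binomial mean of $\alpha_i$ is inherited from the fact that the filtering step does not re-sample or reweight users. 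Once this independence claim is stated, the remainder is routine mean-variance algebra, and the closed forms in (\ref{eq:FME_l2-loss}) and (\ref{eq:FME_l2-loss_all}) drop out immediately.
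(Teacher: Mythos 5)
Your proof is correct and follows essentially the same route as the paper's: decompose $\tc_i$ into the binomially sampled genuine contributions plus an independent $\calD_2$ dummy count, compute the conditional mean and variance (the paper does this via the law of total variance over $z_i$, which is equivalent to your direct independence argument), and split the unconditional squared error over the event $\{i \in \Lambda\}$ with the convention $\hf_i = 0$ off $\Lambda$. The conditioning subtlety you flag --- that the same sampling decisions drive both $\tc^H_{h(i)}$ (hence the event $i \in \Lambda$) and the count $\alpha_i$ --- is real, but the paper's proof glosses over it in exactly the same way, treating the conditional law of $\alpha_i$ given $\Lambda$ as $\mathrm{Bin}(n f_i, \beta)$ without further justification.
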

The variance in (\ref{eq:FME_l2-loss}) is the same as that of $\calS_{\calD, \beta}^{\LNF}$ in (\ref{eq:LNF_l2_loss}). 
The second term in (\ref{eq:FME_l2-loss_all}) is introduced because $\colorB{\calS_{\calD^*,\beta}^{\FME}}$ always calculates $\hf_i$ as $0$ for an unselected item $i \notin \Lambda$. 
However, the second term in (\ref{eq:FME_l2-loss_all}) is very small, as $\eta_i$ is extremely small for a large $f_i$, as explained above. 
When ignoring the second term, the $l_2$ loss of $\colorB{\calS_{\calD^*,\beta}^{\FME}}$ is almost the same as that of $\calS_{\calD, \beta}^{\LNF}$.

\smallskip{}
\noindent{\textbf{Summary.}}~~Our 
FME protocol $\colorB{\calS_{\calD^*,\beta}^{\FME}}$ can achieve almost the same accuracy and robustness as the LNF protocol $\calS_{\calD, \beta}^{\LNF}$ by using $(\epsilon,\delta)$ twice as large as $\calS_{\calD, \beta}^{\LNF}$. 
It can also achieve the communication and computational costs of $O(n + \sqrt{ld})$ by optimizing $b$, as explained below. 

\subsection{Optimizing the Range $b$ of the Hash Function}
\label{sub:FME_optimization}

As shown in Theorem~\ref{thm:FME_communication}, the communication cost $C_{tot}$ of $\colorB{\calS_{\calD^*,\beta}^{\FME}}$ depends on the 
hash range 
$b$. 
A larger $b$ results in the increase of dummy values sent from the shuffler to the data collector. 
In contrast, a smaller $b$ results in the increase of 
$\E[|\Lambda|]$ 
in (\ref{eq:FME_E_Lambda}). 
The optimal 
$b$ can be obtained by calculating $b$ that minimizes the upper bound on $C_{tot}$ in Theorem~\ref{thm:FME_communication}. 

For example, if $l = b$, then the optimal value of $b$ is given by 
$b = \sqrt{\frac{\tau_1 (\colorB{\mu_2} + 1) \beta(1 - \alpha)nd}{(2\tau_1 + \tau_2 + \tau_3) \colorB{\mu_1}}}$. 
In this case, 
$C_{tot}$ of $\colorB{\calS_{\calD^*,\beta}^{\FME}}$ can be simplified as $C_{tot} = O(n + \sqrt{nd} + \alpha d)$ by treating $\tau_1$, $\tau_2$, and $\tau_3$ as constants. 
This is larger than $O(n)$ but much smaller than $O(n + d)$ when $d \gg n$. 

If 
$l < \beta n$, 
the optimal value of $b$ 
is 
$b = \sqrt{\frac{\tau_1 (\colorB{\mu_2} + 1) ld}{(2\tau_1 + \tau_2 + \tau_3) \colorB{\mu_1}}}$. 
In this case, 
$C_{tot}$ of $\colorB{\calS_{\calD^*,\beta}^{\FME}}$ 
can be written as 
$C_{tot} = O(n + \sqrt{ld})$. 

\smallskip{}
\noindent{\textbf{Setting $l$.}}~~In practice, the frequency distribution $\bmf$ is sparse in large-domain data, and we are often interested in popular items with large frequencies \cite{Gu_USENIX20,Luo_CCS22,Zheng_WWW09}. 
Thus, in this work, we propose to set $l$ to $l = \max\{\frac{n^2}{d},c\}$, where $c \in \nats$ is some constant ($c=50$ in our experiments). 
In this setting, $C_{tot}$ does not depend on $d$ until $d$ exceeds $\frac{n^2}{c}$. 
Moreover, this setting guarantees that at least $c$ popular items are selected. 
In our experiments, we show that our FME protocol with this setting provides high accuracy and efficiency.

\section{Application to Key-Value Data}
\label{sec:key-value}
In this section, we propose a protocol for frequency and mean estimation over KV data by applying the FME protocol $\colorB{\calS_{\calD^*,\beta}^{\FME}}$ with an additional technique called \textit{TKV-FK (Transforming KV Pairs and Filtering Keys)}. 
Section~\ref{sub:KV_protocol} explains our protocol. 
Section~\ref{sub:KV_properties} shows its theoretical properties. 

\subsection{Our Protocol for KV Data}
\label{sub:KV_protocol}

\noindent{\textbf{Overview.}}~~Our protocol for KV data, denoted by $\colorB{\calS_{\calD^*,\beta}^{\KV}}$, 
first uses padding-and-sampling \cite{Gu_USENIX20}, a state-of-the-art sampling technique for a large domain. 
Then, it applies the FME protocol $\colorB{\calS_{\calD^*,\beta}^{\FME}}$ to KV data with 
an additional technique called TKV-FK, which transforms KV pairs into one-dimensional data and filters the data at a key level. 
Finally, it calculates unbiased estimates $\bmhPhi$ and $\bmhPsi$ of frequencies $\bmPhi$ and mean values $\bmPsi$, respectively. 
Below, we explain these techniques in detail. 

\smallskip{}
\noindent{\textbf{Padding-and-Sampling.}}~~We first use 
a padding-and-sampling technique \cite{Gu_USENIX20}. 
This technique samples a single KV pair for each user $u_i$ to avoid splitting the privacy budget $\epsilon$ into multiple KV pairs $x_i$. 
Specifically, 
if the number $|x_i|$ of KV pairs is smaller than a parameter $\kappa \in \nats$ called the padding length, user $u_i$ adds dummy KV pairs $\langle d+1, 0 \rangle, \ldots, \langle d+\kappa-|x_i|, 0 \rangle$ to $x_i$. 
Then, $u_i$ samples a single KV pair $\langle k_i, v_i^* \rangle$ from $x_i$ and discretizes $v_i^*$ to $v_i = 1$ with probability $\frac{1+v_i^*}{2}$ and $v_i = -1$ with probability $\frac{1-v_i^*}{2}$. 
As a result, $u_i$ obtains a single, discretized KV pair $\langle k_i, v_i \rangle \in [d] \times \{-1,1\}$. 

\smallskip{}
\noindent{\textbf{TKV-FK.}}~~After padding-and-sampling, we apply the FME protocol $\colorB{\calS_{\calD^*,\beta}^{\FME}}$ to the KV pair $\langle k_i, v_i \rangle$ ($i \in [n]$) with our TKV-FK technique. 
This technique first transforms the KV pair $\langle k_i, v_i \rangle$ of user $u_i$ into one-dimensional data $s_i \in [2d]$ by $s_i = k_i + \frac{1}{2}(v_i+1)d$. 
Then, we can directly apply the FME protocol $\colorB{\calS_{\calD^*,\beta}^{\FME}}$ by treating $(s_1,\ldots, s_n) \in [2d]^n$ as input values. 
However, this approach does not work well, as filtering is performed \textit{at a key-value pair level}. 
Fig.~\ref{fig:TKV-FK}(a) shows its example. 
In this example, the KV pair $\langle 2, 1 \rangle$ is selected in the filtering step, whereas $\langle 2, -1 \rangle$ is not. 
This causes a large positive bias in the estimate $\hPsi_2$ of the mean value $\Psi_2$ for key $2$. 
Thus, mean values $\bmPsi$ cannot be accurately estimated. 

\begin{figure}[t]
  \centering
  \includegraphics[width=0.99\linewidth]{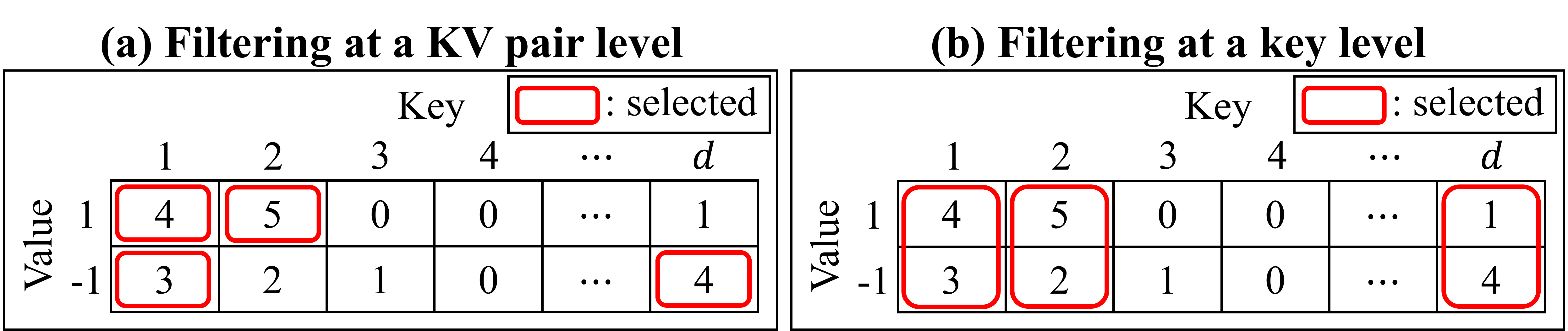}
  \vspace{-7mm}
  \caption{Example of filtering at a (a) KV pair level or (b) key level. 
  The number in the matrix represents a count of each KV pair in the first shuffled data sent from the shuffler to the data collector.} 
  \label{fig:TKV-FK}
\end{figure}

Our TKV-FK technique addresses this issue by filtering data \textit{at a key level}, as shown in Fig.~\ref{fig:TKV-FK}(b). 
Specifically, 
it makes the following three changes to the FME protocol $\colorB{\calS_{\calD^*,\beta}^{\FME}}$ (Algorithm~\ref{alg:S_FME}): 
(i) each user $u_i$ applies a hash function $h: [d] \rightarrow [b]$ to her \textit{key} $k_i$ 
and sends $E_\pkd[h(k_i)]$ and $E_\pkd[E_\pks[E_\pkd[s_i]]]$ to the shuffler (line 2), 
(ii) based on shuffled hash values, the data collector selects a set $\Lambda^H \subseteq [b]$ of popular hash values and 
a set $\Lambda \subseteq [d]$ of the corresponding \textit{keys} 
(line 12), and 
(iii) for each $i \in \Lambda$, the data collector adds $z_{i,1}, z_{i,-1} \sim \colorB{\calD_2}$ dummies to KV pairs $\langle i, 1 \rangle$ and $\langle i, -1 \rangle$, respectively  (line 21). 

\smallskip{}
\noindent{\textbf{Calculating Unbiased Estimates.}}~~Finally, we calculate unbiased estimates $\bmhPhi$ and $\bmhPsi$ as follows. 
For $i \in \Lambda$, let $\tc_{i,1}$ (resp.~$\tc_{i,-1}$) $\in \nnints$ be a count of KV pair $\langle i, 1 \rangle$ (resp.~$\langle i, -1 \rangle$) in the shuffled data the data collector decrypts. 
For $i \in \Lambda$, the data collector calculates $\hPhi_i$ and $\hPsi_i$ as follows\footnote{Note that $\hPsi_i$ is not defined for an unselected key $i \notin \Lambda$. 
We set $\hPsi_i = 1$ for $i \notin \Lambda$ to eliminate the (unnecessary) mean gain for unselected keys.}: 
\begin{align}
\textstyle{\hPhi_i = \frac{\kappa}{n\beta}(\tc_{i,1} + \tc_{i,-1} - 2\colorB{\mu_2}), \hPsi_i = \frac{\kappa}{n\beta \hPhi_i} (\tc_{i,1} - \tc_{i,-1}).} 
\label{eq:KV_proposal_estimates}
\end{align}
Thanks to our TKV-FK technique, we can calculate both $\tc_{i,1}$ and $\tc_{i,-1}$ for each selected key $i \in \Lambda$. 
Thus, the estimates in (\ref{eq:KV_proposal_estimates}) are (almost) unbiased, as shown in Section~\ref{sub:KV_properties}.

\subsection{Theoretical Properties}
\label{sub:KV_properties}

\noindent{\textbf{Privacy and Robustness.}}~~Our KV protocol $\colorB{\calS_{\calD^*,\beta}^{\KV}}$ also provides DP and is robust to collusion and poisoning attacks: 

\begin{theorem}
\label{thm:KV_DP}
\colorB{If the binary input mechanisms $\calM_{\calD_1, \beta}$ and 
$\calM_{\calD_2, 1}$ 
in Definition~\ref{def:binary_input} provide $(\frac{\epsilon_1}{2}, \frac{\delta_1}{2})$-DP and $(\frac{\epsilon_2}{2}, \frac{\delta_2}{2})$-DP, respectively, then $\calS_{\calD^*,\beta}^{\KV}$ provides $(\epsilon, \delta)$-CDP, where $(\epsilon, \delta) = (\epsilon_1 + \epsilon_2, \delta_1 + \delta_2)$, 
and is robust to collusion with users.} 
\end{theorem}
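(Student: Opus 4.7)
The plan is to reduce this theorem to Theorem~\ref{thm:FME_DP} by viewing $\calS_{\calD^*,\beta}^{\KV}$ as a composition of the padding-and-sampling preprocessing step with an instance of $\calS_{\calD^*,\beta}^{\FME}$ run on transformed one-dimensional data. First, I would argue that the padding-and-sampling step is a deterministic post-processing of each user's input (followed by independent local randomness), so it suffices to prove CDP with respect to the preprocessed database $((k_1,v_1),\ldots,(k_n,v_n)) \in ([d] \times \{-1,1\})^n$; neighboring KV databases $D, D'$ translate, under the padding-and-sampling map, to datasets that differ in at most one user's pair $\langle k_i, v_i\rangle$, and hence differ in at most one coordinate of $(s_1,\ldots,s_n) \in [2d]^n$ and in at most one of the corresponding hash values $h(k_i)$.

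Next, I would split the analysis into two independent components via basic composition, exactly as I expect Theorem~\ref{thm:FME_DP} does. The hash stream is processed by augmented shuffling with sampling probability $\beta$ and per-hash dummies drawn from $\calD_1$; by the same reduction used for $\calS_{\calD^*,\beta}^{\FME}$, changing one $h(k_i)$ yields an output distribution governed by independent copies of $\calM_{\calD_1,\beta}$, so this component contributes $(\epsilon_1,\delta_1)$. The transformed input stream is processed by augmented shuffling in which, for each selected key $i \in \Lambda$, dummies are drawn twice from $\calD_2$ (once for $\langle i,1\rangle$ and once for $\langle i,-1\rangle$); changing $s_i$ from one value in $[2d]$ to another affects the counts of at most two KV pair cells, each of which is protected by an independent copy of $\calM_{\calD_2,1}$, so this component contributes $(\epsilon_2,\delta_2)$. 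Basic composition then yields $(\epsilon_1+\epsilon_2,\delta_1+\delta_2)$.

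The computational aspect is handled by the multiple-encryption argument inherited from $\calS_{\calD^*,\beta}^{\FME}$: since the shuffler never sees input values in the clear until after random shuffling and dummy insertion, and since the data collector never sees input values before passing back through the shuffler, I would construct a reduction showing that any polynomial-time distinguisher between $\calS_{\calD^*,\beta}^{\KV}(D)$ and $\calS_{\calD^*,\beta}^{\KV}(D')$ with advantage exceeding the statistical bound can be converted into an attacker against the underlying PKE scheme, giving the required $\mathsf{negl}(\gamma)$ slack. Robustness to collusion with users follows by the same argument applied to $\Omega$-neighboring databases: a colluding user $u_j$ only reveals $\nu_j = (E_{\pkd}[h(k_j)], E_{\pkd}[E_{\pks}[E_{\pkd}[s_j]]])$, which, combined with $\calS_{\calD^*,\beta}^{\KV}(D)$, remains computationally indistinguishable from the same view on $D'$ when $D,D'$ are $\Omega$-neighboring (so $j \notin \Omega$ means $k_j,s_j$ unchanged), so Definition~\ref{def:robustness_to_collusion} holds.

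The main obstacle I anticipate is carefully aligning the key-level filtering of TKV-FK with the FME analysis, which filters at the level of the input items. Concretely, I must verify that (i) replacing unselected KV pairs on the data collector side with $\bot$ proceeds based only on the shuffled hash values of \emph{keys} and hence introduces no extra dependence on $s_i$ beyond what the PKE-based argument already hides, and (ii) the doubled dummy insertion per selected key (for both $\langle i,1\rangle$ and $\langle i,-1\rangle$) does not break the reduction to $\calM_{\calD_2,1}$---here I would argue it strictly helps, since neighboring databases alter the count in only one of the two cells and the other independent dummy draw is irrelevant to that coordinate. Once these two points are checked, the rest of the proof is a routine adaptation of the $\calS_{\calD^*,\beta}^{\FME}$ argument.
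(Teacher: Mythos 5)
Your proposal is correct and follows essentially the same route as the paper: factor $\calS_{\calD^*,\beta}^{\KV}$ as the FME protocol with TKV-FK composed with padding-and-sampling, observe that padding-and-sampling maps neighboring KV databases to neighboring one-dimensional databases, and then inherit the $(\epsilon_1,\delta_1)+(\epsilon_2,\delta_2)$ basic composition and the PKE-based reduction from Theorem~\ref{thm:FME_DP}. The only quibble is terminological --- padding-and-sampling is a (randomized) \emph{pre}processing step, not post-processing --- and your two "obstacle" checks (key-level filtering uses only hash values; the doubled per-key dummy draw does not hurt the reduction to $\calM_{\calD_2,1}$) are exactly the points the paper implicitly relies on when it asserts that the remaining part of the TKV-FK variant behaves like $\calS_{\calD^*,\beta}^{\FME}$.
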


\begin{theorem}
\label{thm:KV_poisoning}
Let $\lambda = \frac{n'}{n+n'}$, $\Phi_\calT = \sum_{i \in \calT} \Phi_i$, and $\Psi_\calT = \sum_{i \in \calT} \Psi_i$. 
Let $\calU_i$ be the set of users who have key $i \in [d]$. 
For $u_j \in \calU_i$, let $\psi_{j,i} \in [-1,1]$ be the value of key $i$ held by user $u_j$. 
For $j \in [n]$, let $\xi_j = \max\{|x_j|, \kappa\}$. 
$\colorB{\calS_{\calD^*, \beta}^{\KV}}$ provides the following robustness 
against poisoning attacks:
\begin{align}
\GMTGAPhi 
&= \textstyle{\frac{\kappa}{n+n'} \left( (\sum_{i \in \calT} \sum_{u_j \in \calU_i} \frac{1}{\xi_j}) + n' \right) - \Phi_\calT} \nonumber\\
& \hspace{4.5mm} \textstyle{+ \sum_{i \in \calT} \eta_i \Phi_i} 
\label{eq:KV_GMTGAPhi_general}\\
\GMTGAPsi 
&\approx 
\left(\sum_{i \in \calT} \frac{(\sum_{u_j \in \calU_i} \frac{\psi_{j,i}}{\xi_j}) + \frac{n'}{|\calT|}}{(\sum_{u_j \in \calU_i} \frac{1}{\xi_j}) + \frac{n'}{|\calT|}} \right) 
- \Psi_\calT, 
\label{eq:KV_GMTGAPsi_general}
\end{align}
where 
$\eta_i \in [0,1]$ is the probability that the $i$-th item is not selected in the filtering step (i.e., $i \notin \Lambda$) before data poisoning. 
The approximation in (\ref{eq:KV_GMTGAPsi_general}) is obtained from a Taylor expansion $\E[\frac{X}{Y}] \approx \frac{\E[X]}{\E[Y]}$ for two random variables $X$ and $Y$. 
\end{theorem}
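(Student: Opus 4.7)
\smallskip{}
\noindent\textbf{Proof Proposal.}
The plan is to explicitly solve the attacker's optimization over the fake messages $\bmm = (m_1, \ldots, m_{n'})$, separately for the frequency and mean gains, by reducing to a finite-dimensional allocation problem. Since each fake user submits a single encrypted pair $\la E_\pkd[h(k)], E_\pkd[E_\pks[E_\pkd[s]]] \ra$, I will parametrize any attack by $(n'_i, n'_{i,+1}, n'_{i,-1})$ for $i \in [d]$ (the number of fake users claiming key $i$ with each value), subject to $\sum_i n'_i = n'$ and $n'_{i,+1} + n'_{i,-1} = n'_i$. A preliminary observation is that $\hPhi'_i$ and $\hPsi'_i$ are strictly increasing in $n'_{i,+1}$ and $n'_{i,-1}$ for $i \in \calT$, so any mass allocated to a non-target key strictly decreases the gain; thus the optimal attack satisfies $n'_i = 0$ for $i \notin \calT$ and $\sum_{i \in \calT} n'_i = n'$. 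I will also observe that, when enough fake users target each $i \in \calT$, all target keys are placed in $\Lambda$ after poisoning with probability $1$, so the post-poisoning filtering term vanishes (i.e.\ $\eta'_i = 0$ for $i \in \calT$).

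Next, using padding-and-sampling, a genuine user $u_j \in \calU_i$ samples key $i$ with probability $1/\xi_j$ and then discretizes $\psi_{j,i}$ so that $\Pr[v_j = +1] - \Pr[v_j = -1] = \psi_{j,i}$. Together with the shuffler's Bernoulli$(\beta)$ subsampling and the $\calD_2$-dummies added to both $\la i, +1 \ra$ and $\la i, -1 \ra$, I will compute
\[
\textstyle \E[\tc'_{i,+1} + \tc'_{i,-1}] = \beta \sum_{u_j \in \calU_i} \tfrac{1}{\xi_j} + \beta n'_i + 2\mu_2,
\]
\[
\textstyle \E[\tc'_{i,+1} - \tc'_{i,-1}] = \beta \sum_{u_j \in \calU_i} \tfrac{\psi_{j,i}}{\xi_j} + \beta (n'_{i,+1} - n'_{i,-1}).
\]
Plugging into (\ref{eq:KV_proposal_estimates}) gives $\E[\hPhi'_i]$ exactly, and gives $\E[\hPsi'_i]$ via the Taylor approximation $\E[X/Y] \approx \E[X]/\E[Y]$ flagged in the theorem. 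For the pre-poisoning baselines I will condition on the filter: $\E[\hPhi_i] = (1-\eta_i)\Phi_i$ (treating the estimator as calibrated to $\Phi_i$) and similarly write $\E[\hPsi_i]$ through $\Psi_i$, which upon subtraction produces the $-\Phi_\calT + \sum_{i \in \calT}\eta_i \Phi_i$ and $-\Psi_\calT$ correction terms in the theorem statement.

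To finish, I will maximize the resulting expressions in the decision variables $(n'_{i,+1}, n'_{i,-1})_{i \in \calT}$. For $\GMTGAPhi$, the frequency gain is linear in $n'_i = n'_{i,+1} + n'_{i,-1}$ with a common coefficient $\kappa/(n+n')$ across $i \in \calT$, so \emph{any} allocation with $\sum_{i\in\calT} n'_i = n'$ is optimal and the total fake contribution collapses to $\kappa n'/(n+n')$, yielding (\ref{eq:KV_GMTGAPhi_general}). For $\GMTGAPsi$, setting $n'_{i,-1}=0$ is obviously optimal (it only raises $\tc'_{i,+1}$ and lowers $\tc'_{i,-1}$), and the remaining allocation problem is $\max \sum_{i \in \calT} \frac{a_i + n'_i}{b_i + n'_i}$ subject to $\sum_i n'_i = n'$, where $a_i = \sum_{u_j \in \calU_i}\psi_{j,i}/\xi_j$ and $b_i = \sum_{u_j \in \calU_i}1/\xi_j$.

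The main obstacle I expect is this last maximization: unlike the frequency case it is nonlinear, and the stated theorem simply plugs in the symmetric allocation $n'_i = n'/|\calT|$. I plan to justify this via a KKT / Lagrangian argument, showing that in the regime where $n'$ is large enough that $n'/|\calT|$ dominates the $a_i, b_i$ terms (or, more generally, when the attacker has no side-information to exploit asymmetries), the gradients $\partial_{n'_i}\left[(a_i+n'_i)/(b_i+n'_i)\right] = (b_i - a_i)/(b_i+n'_i)^2$ equalize near the symmetric point, making it the stationary attack. The second technical subtlety is that the Taylor approximation $\E[X/Y] \approx \E[X]/\E[Y]$ is not exact; I will note that the approximation error is $O(\V[Y]/\E[Y]^2)$ and hence small for the relevant regime (large $n$, concentrated $\hPhi'_i$), which is exactly the ``$\approx$'' in (\ref{eq:KV_GMTGAPsi_general}). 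Combining steps yields both claimed formulas.
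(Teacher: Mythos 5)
Your proposal follows essentially the same route as the paper's proof: parametrize the attack by how the $n'$ fake reports are allocated over keys and values, compute $\E[\tc'_{i,1}]$ and $\E[\tc'_{i,-1}]$ from the padding-and-sampling probability $1/\xi_j$, the discretization, the Bernoulli$(\beta)$ subsampling, and the $\calD_2$ dummies (your two displayed expectations match the paper's (30)--(32) exactly), condition on the filtering event to produce the $-\Phi_\calT+\sum_{i\in\calT}\eta_i\Phi_i$ and $-\Psi_\calT$ terms, and then plug into (\ref{eq:KV_proposal_estimates}) with the Taylor approximation for the ratio. Your observations that only target keys should receive fake mass, that $n'_{i,-1}=0$ is optimal, and that the frequency gain is linear with a common coefficient (so only $\sum_i n'_i=n'$ matters) are all consistent with, and somewhat more explicit than, the paper's argument.

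The one place where your plan would run into trouble is exactly the step you flag as the main obstacle: justifying the symmetric allocation $n'_i=n'/|\calT|$ for the mean gain. The paper does not solve this nonlinear allocation problem at all; it simply \emph{asserts} that the gain is maximized when each fake user reports $\la i,1\ra$ for a uniformly random target key (i.e., it adopts the M2GA attack of Wu \textit{et al.}, whose expected allocation is $n'/|\calT|$ per key) and evaluates the gain of that attack. Your proposed KKT argument will not literally establish stationarity of the symmetric point: the gradients $(b_i-a_i)/(b_i+n'_i)^2$ equalize there only if the $b_i-a_i$ are all equal, and for finite $n'$ the true maximizer of $\sum_{i\in\calT}(a_i+n'_i)/(b_i+n'_i)$ generally tilts toward keys with larger $b_i-a_i$ (smaller $\Psi_i$). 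So you would either need to weaken the claim to an asymptotic/approximate optimality statement (which is defensible given the ``$\approx$'' already present in (\ref{eq:KV_GMTGAPsi_general})), or follow the paper in defining $\GMTGAPsi$ as the gain achieved by the M2GA's random allocation. Aside from this, your derivation reproduces the paper's formulas; one minor point to make explicit is that the $-\Psi_\calT$ term relies on the convention $\hPsi_i=1$ for $i\notin\Lambda$, which is why the paper obtains an inequality ($\lesssim$) rather than an exact identity at that step.
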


Theorem~\ref{thm:KV_poisoning} states that 
$\GMTGAPhi$ and $\GMTGAPsi$ of $\colorB{\calS_{\calD^*,\beta}^{\KV}}$ in (\ref{eq:KV_GMTGAPhi_general}) and (\ref{eq:KV_GMTGAPsi_general}) do 
not 
depend on the privacy budget $\epsilon$. 
It is shown in \cite{Wu_USENIX22} that the existing KV protocols 
become vulnerable to data poisoning as $\epsilon$ increases or decreases. 
For example, in PrivKVM \cite{Ye_SP19}, 
$\GMTGAPhi$ increases (resp.~decreases) as $\epsilon$ decreases when $|\calT| = 1$ (resp.~$|\calT| \geq 3$). 
In PCKV-GRR and PCKV-UE, $\GMTGAPhi$ increases as $\epsilon$ decreases. 
In contrast, $\colorB{\calS_{\calD^*, \beta}^{\KV}}$ does not suffer from such fluctuation in $\GMTGAPhi$ and $\GMTGAPsi$. 

Moreover, $\GMTGAPhi$ and $\GMTGAPsi$ of $\colorB{\calS_{\calD^*,\beta}^{\KV}}$ 
are much smaller than those of the existing KV protocols analyzed in~\cite{Wu_USENIX22}. 
\conference{See~\cite{Murakami_arXiv25}}\arxiv{See Appendix~\ref{sub:proof_thm_KV_poisoning}} for details. 
In our experiments, we show that $\colorB{\calS_{\calD^*,\beta}^{\KV}}$ is much more robust than the existing protocols. 

\smallskip{}
\noindent{\textbf{Efficiency.}}~~Our KV protocol $\colorB{\calS_{\calD^*,\beta}^{\KV}}$ achieves the same efficiency 
as $\colorB{\calS_{\calD^*,\beta}^{\FME}}$. 
Specifically, $\colorB{\calS_{\calD^*,\beta}^{\KV}}$ can achieve the communication and computational costs of 
$C_{tot} = O(n + \sqrt{ld})$. 

\smallskip{}
\noindent{\textbf{Accuracy.}}~~Finally, we 
show 
the accuracy of $\colorB{\calS_{\calD^*, \beta}^{\KV}}$: 
\begin{theorem}
\label{thm:KV_accuracy}
If $\kappa \geq |x_j|$ for any $j\in[n]$, then for any key $i \in \Lambda$ selected in the filtering step,  
$\colorB{\calS_{\calD^*, \beta}^{\KV}}$ outputs almost unbiased estimates. 
Specifically, $\colorB{\calS_{\calD^*, \beta}^{\KV}}$ achieves: 
\begin{align}
\E[\hPhi_i | \Lambda] &= \Phi_i \label{eq:KV_Phi_bias} \\
\V[\hPhi_i | \Lambda] &= \textstyle{\frac{\Phi_i(\kappa - \beta)}{n \beta} + \frac{2 \kappa^2 \colorB{\sigma_2^2}}{n^2 \beta^2}} \label{eq:KV_Phi_variance} \\
\E[\hPsi_i | \Lambda] &\approx \Psi_i \label{eq:KV_Psi_bias} \\
\V[\hPsi_i | \Lambda] &\lesssim \textstyle{\frac{\kappa^2}{n \beta^2} ( 2(q_{i} - q_{i}^2 + r_{i} - r_{i}^2)- \frac{\beta}{\kappa}(1-\frac{\beta}{\kappa}))}, \label{eq:KV_Psi_variance}
\end{align}
where $q_{i} = \frac{\beta(1 + \Psi_i)}{2 \kappa}$ and $r_{i} = \frac{\beta (1 - \Psi_i)}{2 \kappa}$. 
The approximations in (\ref{eq:KV_Psi_bias}) and (\ref{eq:KV_Psi_variance}) are obtained from Taylor expansions $\E[\frac{X}{Y}] \approx \frac{\E[X]}{\E[Y]}$ and $\V[\frac{X}{Y}] \approx \frac{\V[X]}{\E[Y]}$ for two random variables $X$ and $Y$. 
In addition, $\colorB{\calS_{\calD^*, \beta}^{\KV}}$ achieves the following expected $l_2$ loss: 
\begin{align*}
\textstyle{\E[(\hPhi_i - \Phi_i)^2]} 
&= \textstyle{(1 - \eta_i)\V[\hPhi_i | \Lambda] + \eta_i \Phi_i^2 } \\
\textstyle{\E[(\hPsi_i - \Psi_i)^2]} 
&= \textstyle{(1 - \eta_i)\V[\hPsi_i | \Lambda] + \eta_i (1 - \Psi_i)^2 }, 
\end{align*}
where 
$\eta_i \in [0,1]$ is the probability that the $i$-th key is not selected in the filtering step (i.e., $i \notin \Lambda$). 
\end{theorem}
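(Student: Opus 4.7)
The plan is to reduce everything to the joint distribution of the two shuffled counts $\tc_{i,1}$ and $\tc_{i,-1}$ for a selected key $i \in \Lambda$, and then propagate through the estimators in (\ref{eq:KV_proposal_estimates}). For each user $u_j$ let $A_j$ (resp.~$B_j$) be the indicator that $u_j$'s padding-and-sampled KV pair equals $\langle i, 1 \rangle$ (resp.~$\langle i, -1 \rangle$) \emph{and} survives the $\beta$-subsampling on the shuffler. Since each user contributes at most one such pair, $A_j$ and $B_j$ are mutually exclusive. The assumption $\kappa \geq |x_j|$ is what makes padding-and-sampling pick uniformly from exactly $\kappa$ slots, so for $j \in \calU_i$ we get $\Pr[A_j=1] = p_{j,i,+} := \frac{\beta(1+\psi_{j,i})}{2\kappa}$ and $\Pr[B_j=1] = p_{j,i,-} := \frac{\beta(1-\psi_{j,i})}{2\kappa}$, while both are $0$ for $j \notin \calU_i$. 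By independence across users, $\tc_{i,1} = \sum_j A_j + z_{i,1}$ with $z_{i,1} \sim \calD_2$, and likewise for $\tc_{i,-1}$.

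Summing over $j \in \calU_i$ using $|\calU_i| = n\Phi_i$ and $\sum_{j \in \calU_i} \psi_{j,i} = n\Phi_i \Psi_i$ gives $\E[\tc_{i,1} + \tc_{i,-1}] = \frac{n\Phi_i \beta}{\kappa} + 2\mu_2$ and $\E[\tc_{i,1} - \tc_{i,-1}] = \frac{n\Phi_i \Psi_i \beta}{\kappa}$. Plugging into (\ref{eq:KV_proposal_estimates}) immediately yields (\ref{eq:KV_Phi_bias}), and applying the stated Taylor approximation $\E[X/Y] \approx \E[X]/\E[Y]$ to $\hPsi_i = X / \hPhi_i$ with $X = \frac{\kappa}{n\beta}(\tc_{i,1} - \tc_{i,-1})$ yields (\ref{eq:KV_Psi_bias}).

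For the two variances I will exploit the algebraic identities $p(1-p) + q(1-q) - 2pq = (p+q)(1-(p+q))$ and $p(1-p) + q(1-q) + 2pq = (p+q) - (p-q)^2$ with $p = p_{j,i,+}$, $q = p_{j,i,-}$, combined with $\V[z_{i,\pm 1}] = \sigma_2^2$ and $\cov(z_{i,1}, z_{i,-1}) = 0$, to obtain
\begin{align*}
\V[\tc_{i,1} + \tc_{i,-1}] &= \textstyle{n\Phi_i \cdot \frac{\beta}{\kappa}\bigl(1 - \frac{\beta}{\kappa}\bigr) + 2\sigma_2^2}, \\
\V[\tc_{i,1} - \tc_{i,-1}] &= \textstyle{\frac{n\Phi_i \beta}{\kappa} - \frac{\beta^2}{\kappa^2}\sum_{j \in \calU_i}\psi_{j,i}^2 + 2\sigma_2^2}.
\end{align*}
Scaling the first line by $\kappa^2/(n^2\beta^2)$ reproduces (\ref{eq:KV_Phi_variance}) exactly. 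For the second line I will apply Cauchy--Schwarz, $\sum_{j \in \calU_i}\psi_{j,i}^2 \geq n\Phi_i \Psi_i^2$, which gives an upper bound depending on $\psi_{j,i}$ only through $\Psi_i$; then the stated Taylor rule $\V[X/Y] \approx \V[X]/\E[Y]$ applied to $\hPsi_i$, together with the identities $q_i + r_i = \beta/\kappa$ and $(q_i - r_i)^2 = \beta^2 \Psi_i^2 / \kappa^2$, collapses the bound into the right-hand side of (\ref{eq:KV_Psi_variance}).

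The two expected squared errors then follow by conditioning on the event $i \in \Lambda$: with probability $1 - \eta_i$ the error equals the conditional variance (the conditional bias being zero for $\hPhi_i$ and negligible for $\hPsi_i$ under the Taylor approximation), while with probability $\eta_i$ the conventions $\hPhi_i = 0$ and $\hPsi_i = 1$ contribute $\Phi_i^2$ and $(1 - \Psi_i)^2$ respectively. The main obstacle I expect is the $\V[\hPsi_i | \Lambda]$ step: getting into the exact $(q_i, r_i)$-form requires both the Cauchy--Schwarz relaxation (which is what turns an equality into the claimed $\lesssim$) and careful bookkeeping of the paper's Taylor rule for ratio variances, so that the dummy-noise contribution $2\sigma_2^2$ and any residual contribution from $\cov(\tc_{i,1}+\tc_{i,-1},\, \tc_{i,1}-\tc_{i,-1})$ are consistently absorbed as lower-order terms.
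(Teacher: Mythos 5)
Your proposal is correct and follows essentially the same route as the paper's proof: per-user Bernoulli indicators with success probabilities $\frac{\beta(1\pm\psi_{j,i})}{2\kappa}$, exact computation of the expectations, a Cauchy--Schwarz relaxation to express the variance bound through $\Psi_i$ alone, the stated Taylor rules for the ratio estimator, and the $\eta_i$-conditioning for the expected squared errors. The only (cosmetic) difference is that you compute $\V[\tc_{i,1}-\tc_{i,-1}]$ directly via the mutual-exclusivity covariance and apply Cauchy--Schwarz once to $\sum_j \psi_{j,i}^2$, whereas the paper uses the identity $\V[X-Y]=2(\V[X]+\V[Y])-\V[X+Y]$ and applies Cauchy--Schwarz separately to $\sum_j q_{j,i}^2$ and $\sum_j r_{j,i}^2$; the two relaxations are algebraically equivalent and yield the identical bound (and, like the paper, you end up absorbing the residual $2\sigma_2^2$ dummy-noise term as lower order in (\ref{eq:KV_Psi_variance})).
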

In our experiments, we show that $\colorB{\calS_{\calD^*, \beta}^{\KV}}$ provides higher accuracy than the existing KV protocols. 

\section{Experimental Evaluation}
\label{sec:exp}

\subsection{Experimental Set-up}
\label{sub:setup}
\noindent{\textbf{Datasets.}}~~We conducted experiments using four real datasets: 
\begin{itemize}
\item \textbf{Foursquare~\cite{Yang_TIST16}}: Location 
dataset with $n=18201$ users in New York. 
We divided the city into $1000 \times 1000$ regions ($d=1000000$) at regular intervals. 
\item \textbf{AOL~\cite{Pass_InfoScale06}}: Web access dataset. 
Following \cite{Luo_CCS22}, we extracted $n=10000$ users and used the first three characters of each URL as an item 
($d=2^{24}=16777216$). 
\item \textbf{E-Commerce~\cite{ecommerce}}: Clothing review dataset with $n=23486$ users, $d=1206$ keys (items), and $23486$ ratings. 
\item \textbf{Amazon~\cite{amazon_dataset}}: Amazon rating dataset with $n=1210271$ users, $d=249274$ keys, and $2023070$ ratings. 
\end{itemize}
The first two are categorical, and the last two are KV datasets. 

\smallskip{}
\noindent{\textbf{Protocols.}}~~We compared our FME protocol with the CH protocol 
and existing shuffle protocols. 
In our FME protocol, we set $\alpha = 0.05$ and used the asymmetric geometric distribution \cite{Murakami_SP25} with $\beta=1$ as \colorB{dummy-count distributions $\calD_1$ and $\calD_2$ (i.e., $(\epsilon_1,\delta_1)=(\epsilon_2,\delta_2)$)}. 
Then, we optimized the hash range $b$ as described in Section~\ref{sub:FME_optimization} and set the maximum number $l$ of selected hashes to $l = b$ (denoted by \ProposalL{}) or $\max\{\frac{n^2}{d},50\}$ (\ProposalS{}). 
For KV data, we used our TKV-FK technique. 
\colorB{In Appendix~\ref{sub:FME_parameters}, we also evaluated our FME protocol when we changed $\epsilon_1$ $(= \epsilon - \epsilon_2)$, $\alpha$, and $l$.} 

For existing protocols, we evaluated twelve protocols. 
Specifically, for categorical data, we evaluated four pure shuffle protocols using the GRR~\cite{Wang_PVLDB20}, RAPPOR~\cite{Erlingsson_CCS14}, OUE~\cite{Wang_USENIX17}, and OLH~\cite{Wang_USENIX17}. 
For these protocols, we used a numerical upper bound in \cite{Feldman_FOCS21} because it is tighter than Theorem~\ref{thm:privacy_amplification} (we confirmed that the upper bound is very close to the lower bound in \cite{Feldman_FOCS21}). 
We also evaluated three multi-message protocols in \cite{Luo_CCS22} (for a large domain), \cite{Balcer_ITC20,Cheu_SP22}, denoted by \LWYL{}, \BC{}, and \CM{}, respectively. 
We used their amplification results for these protocols. 
Since these 
protocols 
assume that $\epsilon$ is within a certain range, 
we evaluated them only within the range. 
For \CM{}, we generated $10$ dummy values per user in the same way as \cite{Cheu_SP22}. 
We also compared our protocols with the LNF protocol \cite{Murakami_SP25} in terms of efficiency. 
Following \cite{Erlingsson_CCS14,Wang_USENIX17}, we used a significance threshold, which assigns $0$ to an estimate below a threshold for each protocol. 

For KV data, we evaluated four pure shuffle protocols based on PrivKVM \cite{Ye_SP19}, PrivKVM* \cite{Ye_TDSC23}, PCKV-GRR \cite{Gu_USENIX20}, and PCKV-UE \cite{Gu_USENIX20} using the numerical upper bound in \cite{Feldman_FOCS21}.
We did not evaluate the protocol in \cite{Zhu_IS23}, as it leaks the number of KV pairs held by each user and fails to provide DP. 
Following~\cite{Gu_USENIX20}, we clipped frequency estimates to 
$[0,1]$ and set the padding length $\kappa$ to $\kappa=1$ (resp.~$3$) in the E-Commerce (resp.~Amazon) dataset. 

\smallskip{}
\noindent{\textbf{Performance Metrics.}}~~Since $d$ is large in our experiments, most items have low or zero frequencies. 
Thus, we evaluated the accuracy for top-$50$ items (keys) with the largest frequencies. 
Specifically, we evaluated the MSE (Mean Squared Error) over the $50$ items. 
\colorB{Here, we varied $\epsilon$ from $0.1$ to $5$, as DP with this range of $\epsilon$ provides theoretical privacy guarantees against the inference of input values. See Appendix~\ref{sec:DP_epsilon} for details.}

For robustness to collusion attacks, 
we refer to $\epsilon$ when no (resp.~$|\Omega|$) users collude with the data collector as a \textit{target $\epsilon$} (resp.~\textit{actual $\epsilon$}). 
We set the target $\epsilon$ to $0.1$ and evaluated the actual $\epsilon$ while changing $|\Omega|$. 
For robustness to poisoning attacks, we evaluated the maximum gains $\GMGA$, $\GMTGAPhi$, and $\GMTGAPsi$. 
We ran each protocol $10$ times and averaged the MSE and 
the gains. 
For efficiency, we evaluated $C_{tot}$ and measured the run time using a workstation with Intel Xeon W-2295 (3.00 GHz, 18 Cores) and 256 GB main memory. 

\smallskip{}
\noindent{\textbf{User Sampling.}}~~In the KV datasets, $n$ is large. 
Thus, for each protocol (except for the LNF protocol), we introduced user sampling, which randomly samples users with probability $0.05$ before running the protocol to improve efficiency. 
We evaluated the MSE between the estimates and the true frequencies \textit{before} user sampling. 

\subsection{Experimental Results}
\label{sub:results}

\noindent{\textbf{Accuracy.}}~~First, we evaluated the relationship between the MSE and $\epsilon$. 
Fig.~\ref{fig:res_MSE} shows the results. 

Fig.~\ref{fig:res_MSE} shows that \ProposalL{} significantly outperforms the existing protocols. 
This is because our FME protocol with $l=b$ achieves almost the same accuracy as the LNF protocol, which is shown to be very accurate in \cite{Murakami_SP25}, by doubling $(\epsilon,\delta)$. 
Fig.~\ref{fig:res_MSE} also shows that \ProposalS{} provides almost the same MSE as \ProposalL{}, which means that we can improve the efficiency without affecting accuracy by reducing $l$ as proposed in Section~\ref{sub:FME_optimization}. 

In Appendix~\ref{sub:comparison_other_baselines}, we also show that the CH protocol has poor accuracy for unpopular items due to hash collision and cannot be improved by using user/group-dependent hashes. 
Moreover, in Appendix~\ref{sub:effectiveness_TKV_FK}, we show the effectiveness of our TKV-FK technique through an ablation study. 

\begin{figure}[t]
  \centering
  \includegraphics[width=0.99\linewidth]{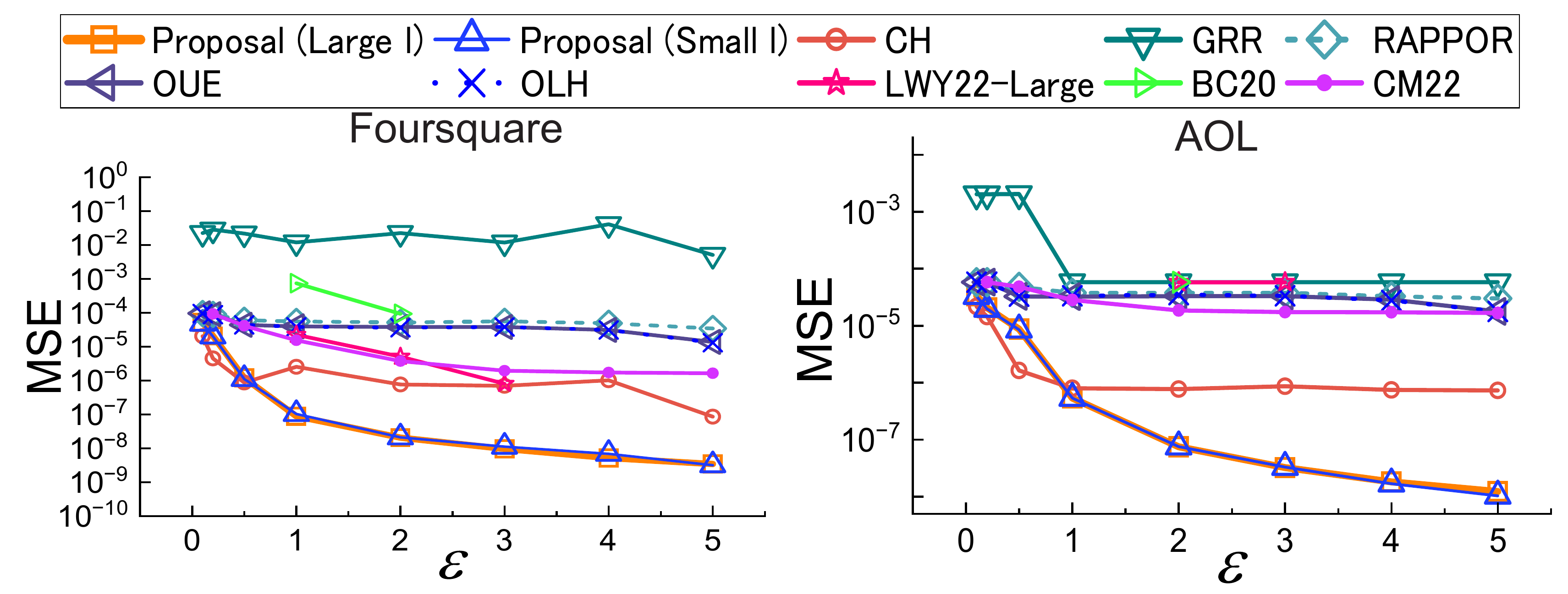}
  \includegraphics[width=0.99\linewidth]{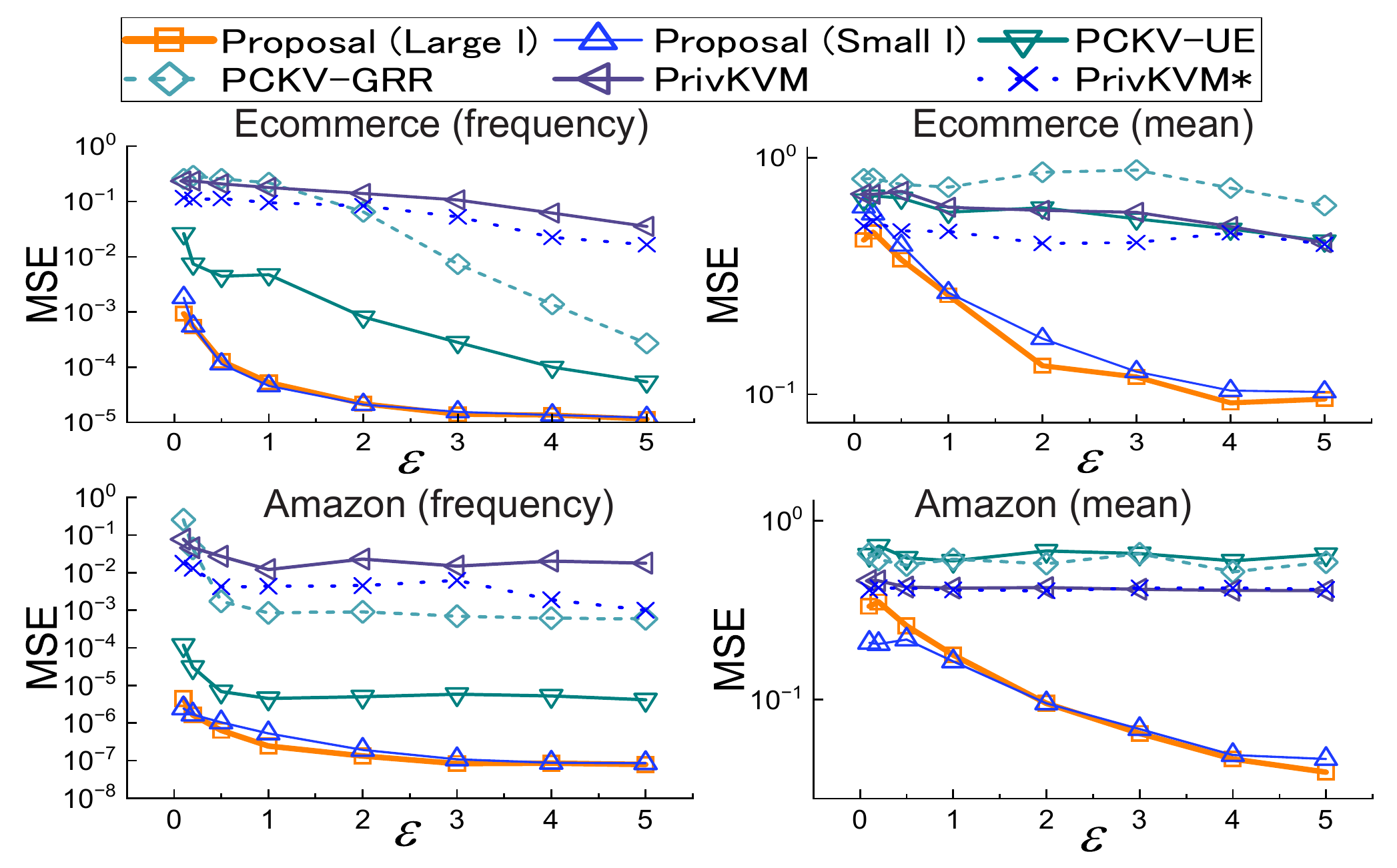}
  \vspace{-8mm}
  \caption{MSE vs. $\epsilon$ ($\delta=10^{-12}$).} 
  \label{fig:res_MSE}
\vspace{1mm}
  \centering
  \includegraphics[width=0.99\linewidth]{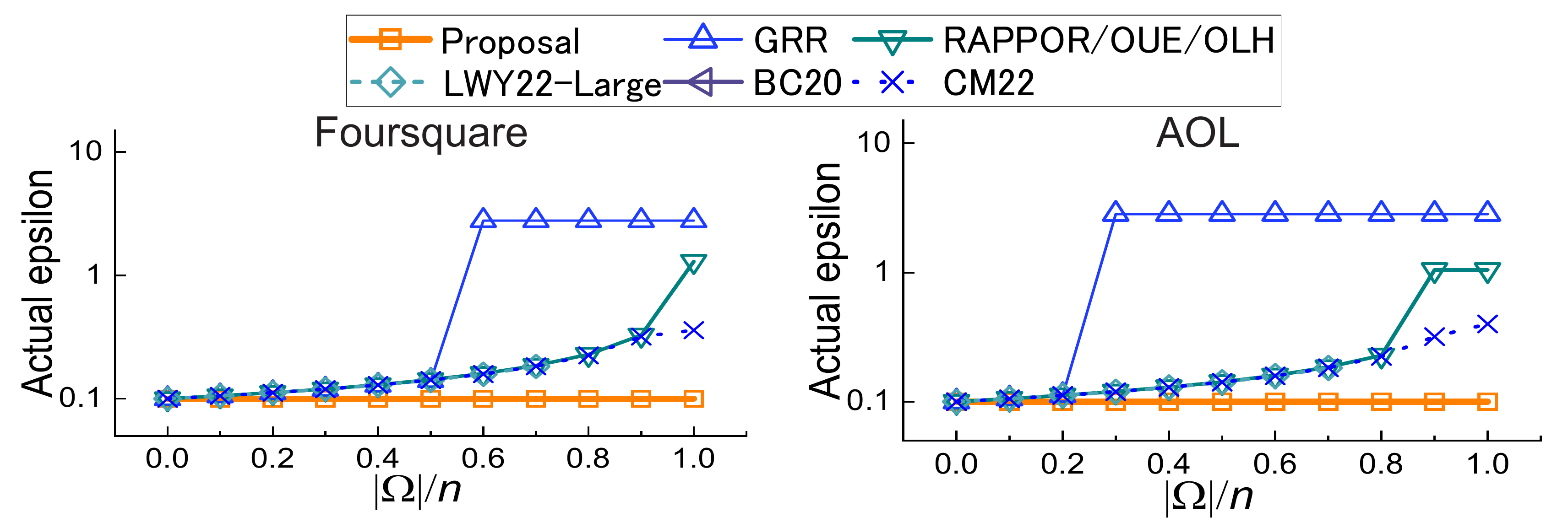}
  \includegraphics[width=0.99\linewidth]{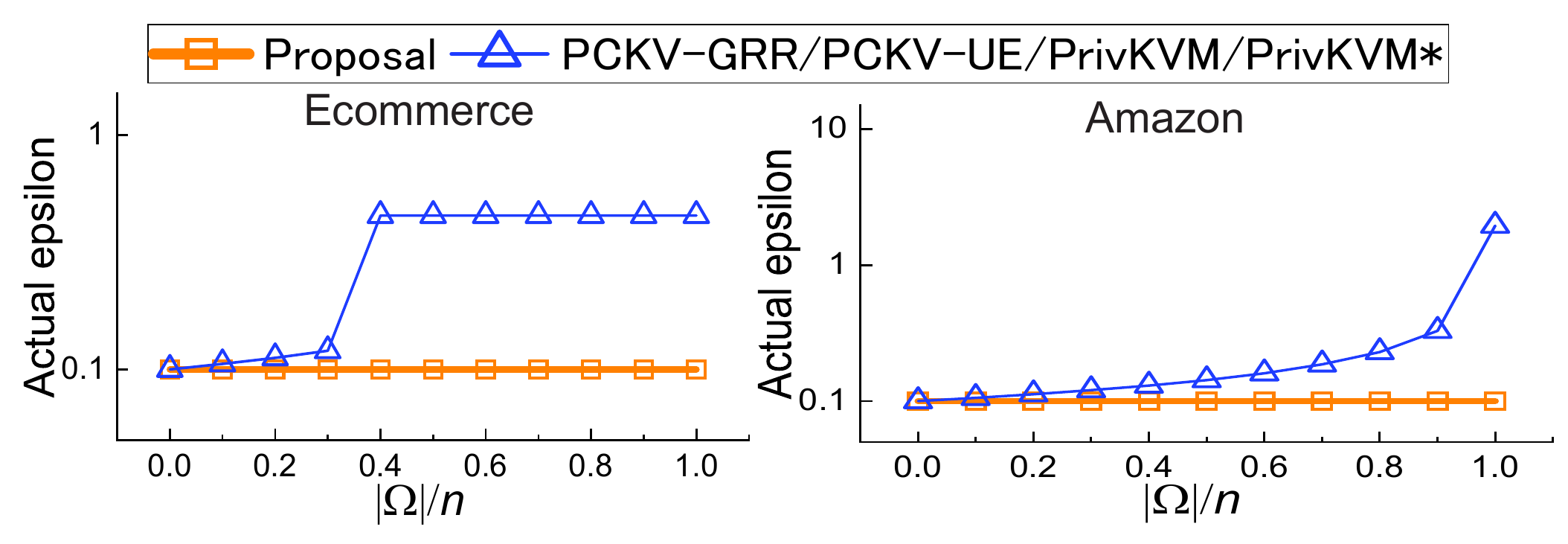}
  \vspace{-8mm}
  \caption{Actual $\epsilon$ vs. the ratio $|\Omega|/n$ of colluding users (target $\epsilon=0.1$, $\delta=10^{-12}$).} 
  \label{fig:res_collusion}
\end{figure}

\smallskip{}
\noindent{\textbf{Robustness to Collusion Attacks.}}~~Next, we evaluated the robustness to collusion with users. 
Fig.~\ref{fig:res_collusion} shows the relationship between the actual $\epsilon$ and the ratio $|\Omega|/n$ of colluding users. 

We observe that in all the existing protocols, the actual $\epsilon$ rapidly increases with an increase in $|\Omega|$. 
This vulnerability is inevitable in the existing protocols because they add noise on the user side. 
In Appendix~\ref{sub:existing_defenses}, we also show that the defense in \cite{Wang_PVLDB20} is insufficient in that the actual $\epsilon$ still increases with an increase in $|\Omega|$. 
In contrast, the actual $\epsilon$ always coincides with the target $\epsilon$ in our proposals, 
demonstrating the robustness of our proposals to collusion with users. 

In Appendix~\ref{sub:parameterx_collusion_poisoning}, we also set $|\Omega|/n = 0.1$ and evaluate the relationship between the actual $\epsilon$ and the target $\epsilon$. 

\begin{figure}[t]
  \centering
  \includegraphics[width=0.99\linewidth]{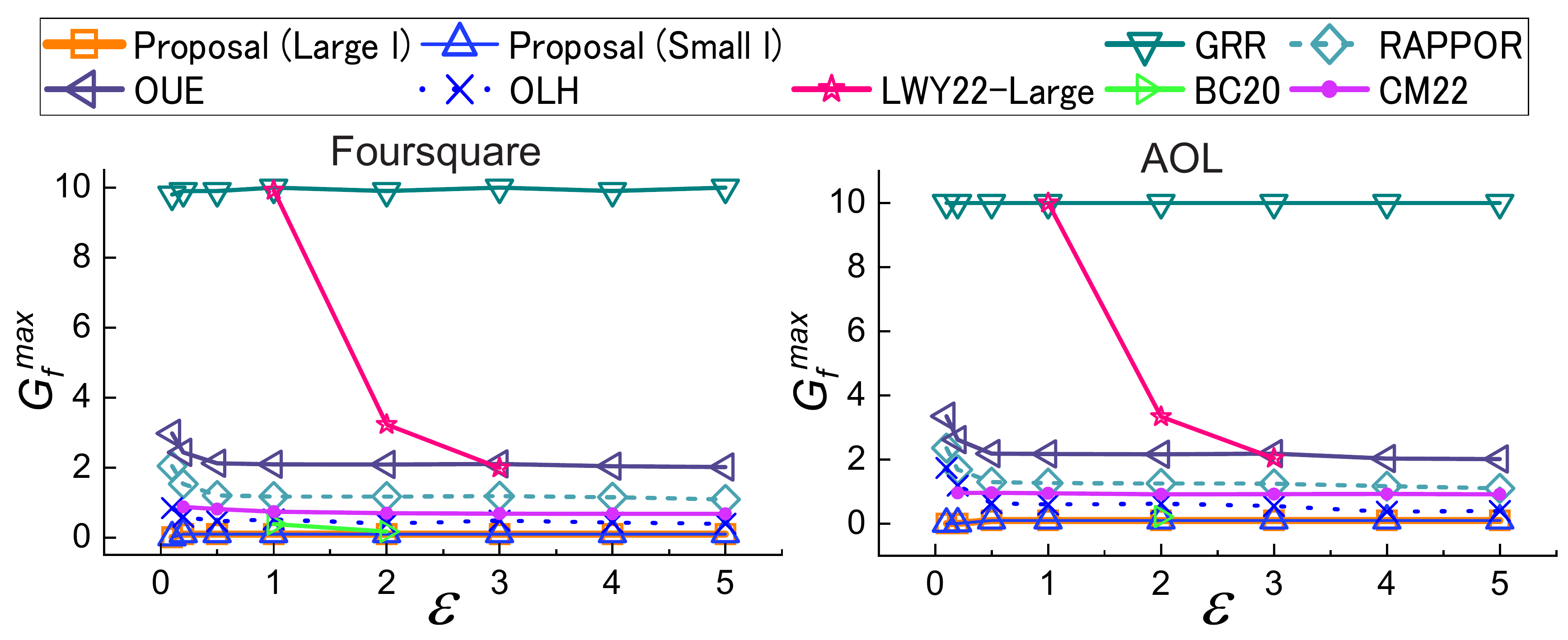}
  \includegraphics[width=0.99\linewidth]{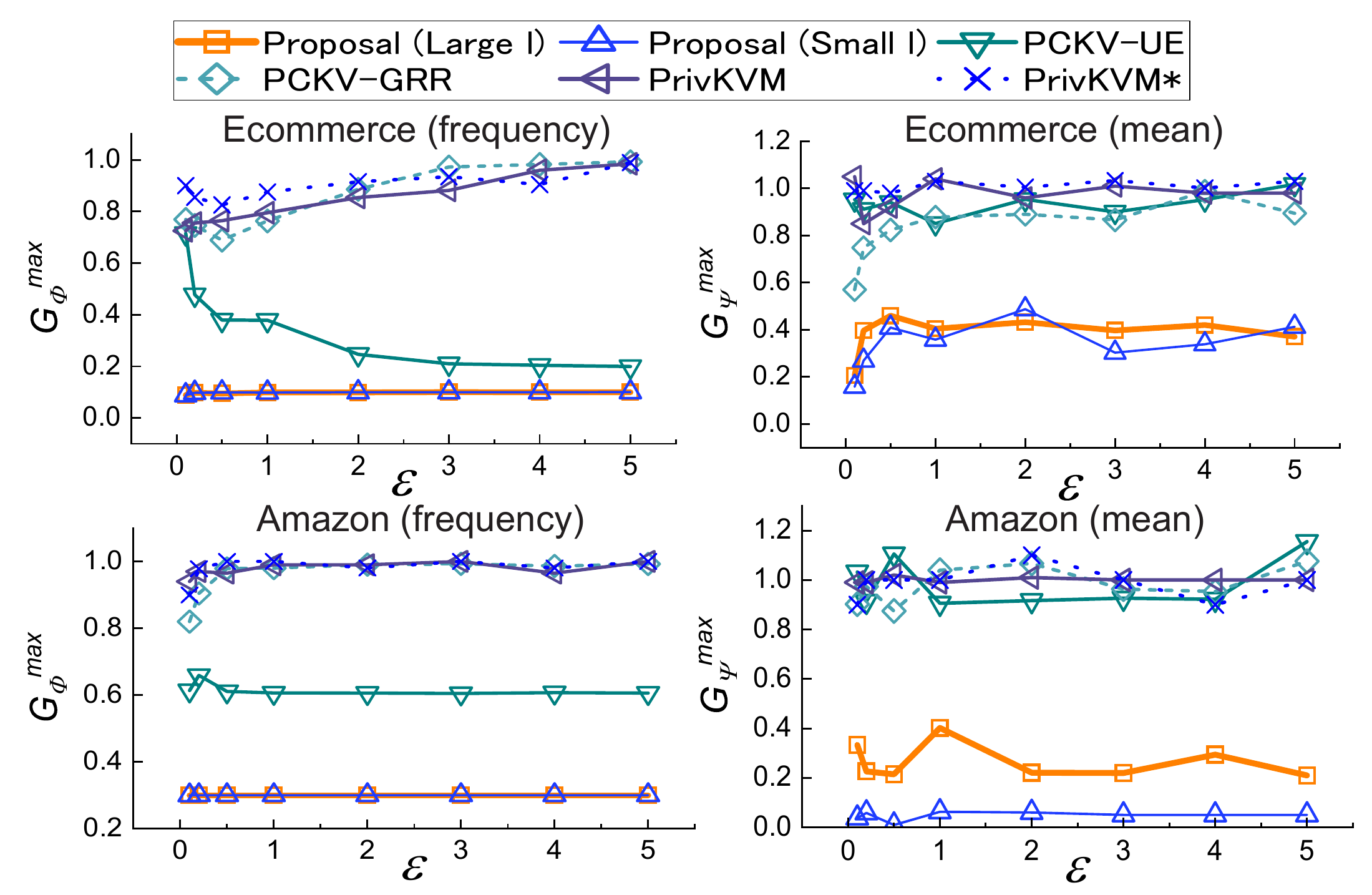}
  \vspace{-8mm}
  \caption{Maximum gains $\GMGA$, $\GMTGAPhi$, and $\GMTGAPsi$ vs. $\epsilon$ ($\lambda=0.1$, $\delta=10^{-12}$).} 
  \label{fig:res_poisoning}
\vspace{1mm}
  \centering
  \includegraphics[width=0.99\linewidth]{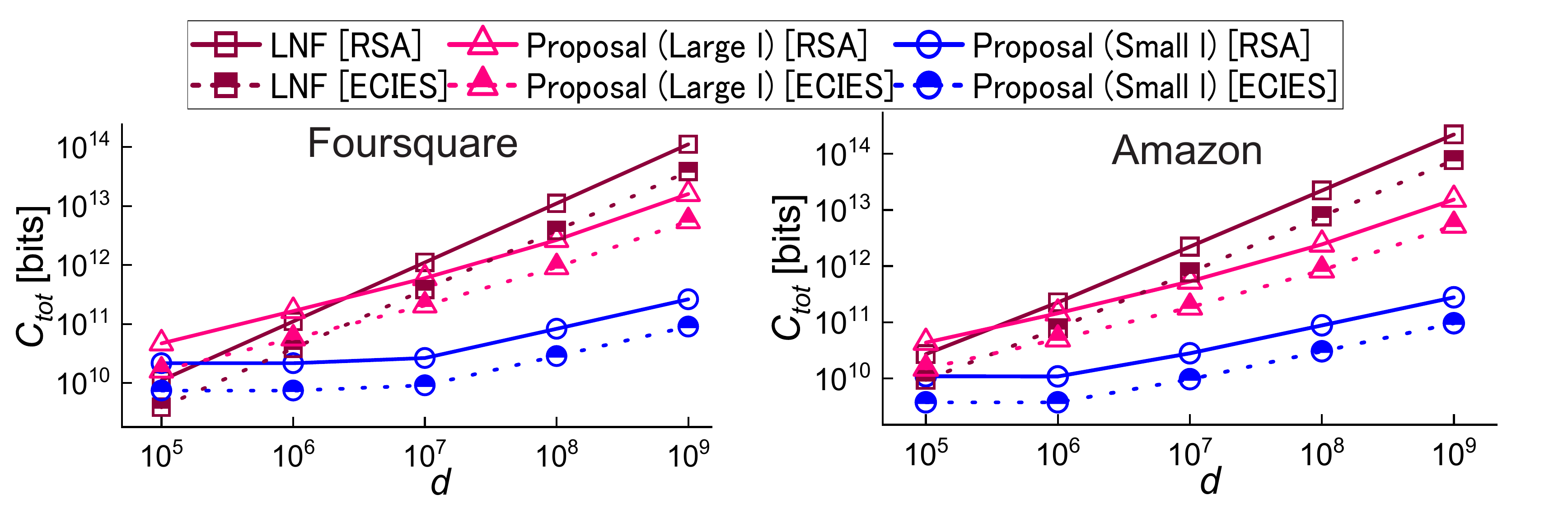}
  \includegraphics[width=0.99\linewidth]{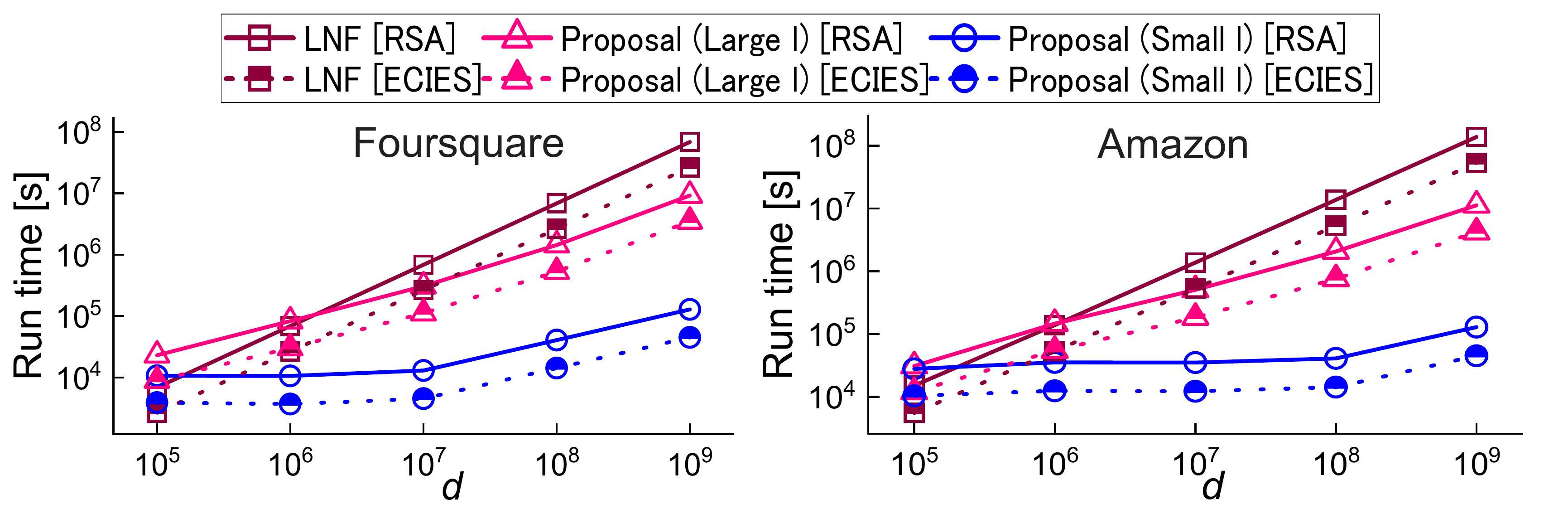}
  \vspace{-8mm}
  \caption{Computational cost $C_{tot}$ and run time ($\epsilon=1$, $\delta=10^{-12}$).} 
  \label{fig:res_efficiency}
\end{figure}

\smallskip{}
\noindent{\textbf{Robustness to Poisoning Attacks.}}~~We also evaluated the robustness to poisoning attacks. 
Here, we 
randomly selected $|\calT|=10$ target items for categorical data and $|\calT|=1$ target item for KV data. 
Then, we set the fraction $\lambda$ of fake users to $\lambda = 0.1$; we also changed $\lambda$ in Appendix~\ref{sub:parameterx_collusion_poisoning}. 

Fig.~\ref{fig:res_poisoning} shows the results.\footnote{In Fig.~\ref{fig:res_poisoning}, we omit the gain for the CH protocol because it is very close to the gains for our proposals.} 
We observe 
that the existing protocols suffer from large maximum gains. 
In Appendix~\ref{sub:existing_defenses}, we also show that the defense in~\cite{Wu_USENIX22} has limited effectiveness. 
In contrast, 
our proposals always achieve small 
$\GMGA$, $\GMTGAPhi$, and $\GMTGAPsi$, 
and are robust to poisoning attacks.

\smallskip{}
\noindent{\textbf{Efficiency.}}~~Finally, we evaluated 
$C_{tot}$ and the run time. 
Specifically, we measured $C_{tot}$ and the run time of the LNF protocol, \ProposalL{}, and \ProposalS{} in the Foursquare and Amazon datasets. 
For an encryption scheme, we used the $2048$-bit RSA or ECIES with $256$-bit security in~\cite{bouncy}. 
Then, we calculated the time to encrypt or decrypt a single message using multiple encryption and estimated $C_{tot}$ and the run time when we changed $d$ from $10^5$ and $10^9$. 

Fig.~\ref{fig:res_efficiency} shows the results. 
We observe that $C_{tot}$ and the run time of the LNF protocol are linear in $d$ and extremely large when $d$ is large; e.g., $C_{tot}$ is about $100$ Terabits and the run time is about $3$ years when $d=10^9$ and ECIES is used. 
\ProposalS{} addresses this issue. 
Specifically, in \ProposalS{}, both $C_{tot}$ and the run time do not depend on $d$ until $d=10^6$ or $10^7$ and then increase in $O(\sqrt{d})$, which is consistent with our theoretical results in Section~\ref{sub:FME_optimization}. 

For example, when $d=10^9$, \ProposalS{} reduces $C_{tot}$ from about $100$ Terabits to $260$ Gigabits and the run time from about $3$ years to $1$ day. 
Thus, 
accurate, robust, and efficient data analysis over large-domain categorical and KV data 
is now possible under DP.

\section{Conclusion}
\label{sec:conclusion}
We proposed the FME protocol for large-domain categorical and KV data 
and showed its effectiveness through theoretical analysis and extensive experiments. 
\colorB{LDP protocols for categorical data 
serve as a basis for many complex tasks, such as frequent itemset mining \cite{Tong_CCS24}, ranking estimation \cite{Zhan_arXiv25}, and range queries \cite{Liao_arXiv25}. 
Thus, we believe our FME protocol can also be used as a building block for such tasks. 
For future work, we would like to generalize our protocol for such tasks.}

\section*{Acknowledgment}
This study was supported in part by JSPS KAKENHI 22H00521, 24H00714, 24K20775, JST NEXUS JPMJNX25C2, JST AIP Acceleration Research JPMJCR22U5, and JST CREST JPMJCR22M1.

\bibliographystyle{IEEEtran}
\bibliography{main_short}

\appendices
\section{Basic Notations}
\label{sec:notation_table}
Table~\ref{tab:notations} shows the basic notations in this paper. 

\section{\colorB{More Details on the Baselines}}
\label{sec:details_baselines}

\subsection{\colorB{Algorithmic Description of the LNF Protocol}}
\label{sub:LNF_algorithm}
\colorB{Algorithm~\ref{alg:S_LNF} shows an algorithmic description of the LNF protocol $\calS_{\calD, \beta}^{\LNF}$. 
The \texttt{Count} function (line 11) calculates a count $\tc_i$ for each item $i \in [d]$ from $y_{\pi(1)}, \ldots, y_{\pi(\tn)}$.}

\subsection{\colorB{DP and Robustness of the CH Protocol}}
\label{sub:CH_DP_robustness}

\colorB{We show DP and the robustness of the CH protocol $\calS_{\calD, \beta}^{\CH}$:}
\begin{theorem}
\label{thm:CH_DP}
If the binary input mechanism $\calM_{\calD, \beta}$ in Definition~\ref{def:binary_input} provides $(\frac{\epsilon}{2}, \frac{\delta}{2})$-DP, then $\calS_{\calD, \beta}^{\CH}$ provides $(\epsilon, \delta)$-CDP 
and is robust to collusion with users. 
\end{theorem}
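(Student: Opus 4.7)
The plan is to reduce Theorem~\ref{thm:CH_DP} to Theorem~\ref{thm:LNF_DP} by observing that $\calS_{\calD, \beta}^{\CH}$ is literally $\calS_{\calD, \beta}^{\LNF}$ run on hashed inputs over the reduced domain $[b]$. First, I would write the overall randomized mapping as a composition: for any database $D = (x_1, \ldots, x_n) \in [d]^n$, we have $\calS_{\calD, \beta}^{\CH}(D) = \calS_{\calD, \beta}^{\LNF}(h(D))$, where $h \in \calH$ is sampled independently of $D$ and $h(D) := (h(x_1), \ldots, h(x_n)) \in [b]^n$. Because $h$ is independent of $D$, it is enough to argue DP conditioned on any fixed realization of $h$ and then take the probability over $h$.

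Next I would show that the hash is a neighboring-structure-preserving pre-processing step. If $D, D' \in [d]^n$ differ in exactly one index $i$, then $h(D)$ and $h(D')$ agree on every coordinate $j \ne i$ and possibly also on coordinate $i$ (if $h(x_i) = h(x_i')$). In either case $h(D), h(D')$ are neighboring databases in $[b]^n$, with the degenerate case $h(D) = h(D')$ making the CDP inequality trivially hold. For the nontrivial case, Theorem~\ref{thm:LNF_DP}, applied to the LNF protocol over domain $[b]$, yields $(\epsilon, \delta)$-CDP for $\calS_{\calD, \beta}^{\LNF}(h(D))$ versus $\calS_{\calD, \beta}^{\LNF}(h(D'))$, which transfers directly to $\calS_{\calD, \beta}^{\CH}(D)$ versus $\calS_{\calD, \beta}^{\CH}(D')$ after integrating over the choice of $h$. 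Note that the hash range $b$ (replacing $d$) does not appear in the DP statement of Theorem~\ref{thm:LNF_DP}, since that bound depends only on the binary-input mechanism $\calM_{\calD, \beta}$; so the $(\epsilon/2, \delta/2)$ requirement on $\calM_{\calD, \beta}$ transfers verbatim.

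For the robustness-to-collusion claim, I would invoke Definition~\ref{def:robustness_to_collusion} with $\nu_i = h(x_i)$ being the message sent from $u_i$ to the shuffler in $\calS_{\calD, \beta}^{\CH}$. For any $\Omega \subset [n]$ and any $\Omega$-neighboring $D, D'$, the entries at indices in $\Omega$ agree, so $(\nu_i)_{i \in \Omega}$ and $(\nu_i')_{i \in \Omega}$ are pointwise identical realizations of the same random variables. The CDP inequality for $\calM_\Omega$ then reduces to the CDP inequality for $\calS_{\calD, \beta}^{\CH}$ itself, applied slice-by-slice over the value of $(\nu_i)_{i \in \Omega}$. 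Equivalently, one can appeal to the robustness statement in Theorem~\ref{thm:LNF_DP} applied to the hashed $\Omega$-neighboring pair $(h(D), h(D'))$, together with the observation that $h(x_i) = h(x_i')$ holds for every $i \in \Omega$.

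The main obstacle I anticipate is being careful about how the shared hash function $h$ is treated. Because $h$ is sampled from $\calH$ once and used by all users, it is observable by the shuffler and data collector through the protocol transcript, but it is sampled independently of $D$. I would therefore state explicitly that conditioning on $h$ leaves the DP analysis unchanged, so the result holds both for each fixed $h$ and after integrating over $h \sim \calH$. The remainder of the argument is mechanical and follows from applying Theorem~\ref{thm:LNF_DP} at the appropriate place.
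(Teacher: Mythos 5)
Your proof is correct and follows essentially the same route as the paper's: view $\calS_{\calD, \beta}^{\CH}$ as the LNF protocol over the reduced domain $[b]$ preceded by the hash, observe that hashing maps neighboring (resp.\ $\Omega$-neighboring) databases to databases that still differ in at most one entry, and then invoke the LNF guarantees together with post-processing. Your extra care about conditioning on the publicly known $h$ and the degenerate case $h(x_i)=h(x_i')$ only makes explicit what the paper leaves implicit.
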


\begin{theorem}
\label{thm:CH_poisoning}
Let $\lambda = \frac{n'}{n+n'}$ and $f_\calT = \sum_{i \in \calT} f_i$. 
$\calS_{\calD, \beta}^{\CH}$ provides the following robustness against poisoning attacks:
\begin{align}
\GMGA = \lambda (|\calT| - f_\calT).
\label{eq:CH_GMGA}
\end{align}
\end{theorem}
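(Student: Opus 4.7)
The argument will mirror the proof of Theorem~\ref{thm:LNF_poisoning} while carefully tracking how the common hash $h$ couples the estimates for different items. The key structural observation is that in the CH protocol $\hf_i$ depends on $i$ only through $h(i)$: writing $\hf^{*}_j := \frac{b}{n\beta(b-1)}(\tc_j - \frac{n\beta}{b} - \mu)$, we have $\hf_i = \hf^{*}_{h(i)}$. Consequently, a single fake message assigned to bucket $j$ simultaneously inflates $\hf'_i$ for every $i \in \calT$ with $h(i) = j$, and this amplification is exactly what promotes the factor $1$ of the LNF bound to $|\calT|$ in the CH bound.

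Concretely, I would parametrize the attack by the bucket counts $(n'_j)_{j \in [b]}$ with $\sum_j n'_j = n'$, and set $m_j = |\{i \in \calT : h(i) = j\}|$ and $S_j = \sum_{i' : h(i') = j} f_{i'}$. Since the data collector observes $n+n'$ messages after poisoning, the natural post-poisoning estimator is $\hf'_i = \frac{b}{(n+n')\beta(b-1)}(\tc'_{h(i)} - \frac{(n+n')\beta}{b} - \mu)$. Running the expectation calculation exactly as in the LNF proof, but at the bucket level, would produce
\begin{align*}
\textstyle{\sum_{i \in \calT} \E[\hf'_i - \hf_i] = \frac{b}{(n+n')(b-1)}\bigl[\,\sum_j m_j n'_j - n' \sum_j m_j S_j\,\bigr].}
\end{align*}

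Next I would optimize this linear functional over the attacker. Subject to $\sum_j n'_j = n'$ and $n'_j \ge 0$, the gain term is maximized by concentrating all $n'$ messages in a bucket achieving $\max_j m_j$, giving $\sum_j m_j n'_j \le n' \max_j m_j \le n'|\calT|$, with equality when all target items collide in one bucket. For the ``loss'' term, note $\sum_j m_j S_j = \sum_{i \in \calT} S_{h(i)} \ge \sum_{i \in \calT} f_i = f_\calT$, since each $S_{h(i)}$ already contains the contribution $f_i$ and all other terms are non-negative. Plugging both bounds in, using $\lambda = \frac{n'}{n+n'}$, and noting $b/(b-1) \to 1$, yields the claimed formula $\GMGA \le \lambda(|\calT| - f_\calT)$.

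The main obstacle is that the two bounds $\max_j m_j \le |\calT|$ and $\sum_j m_j S_j \ge f_\calT$ are not simultaneously tight for a single realization of $h$: forcing all targets into one bucket inflates $\sum_j m_j S_j$ up to $|\calT| f_\calT$. Each bound is separately valid, however, so combining them produces the closed-form upper bound asserted in the theorem (and one recovers the LNF bound $\lambda(1 - f_\calT)$ in the case $|\calT| = 1$). A secondary modelling point is that the attacker may adapt $\bmm$ to $h$ since $h$ is public; this adaptivity is what legitimizes maximizing $\sum_j m_j n'_j$ per realization of $h$, but it does not affect the final bound, which is already obtained by worst-casing via $\max_j m_j \le |\calT|$.
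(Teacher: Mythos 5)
Your bucket-level expectation computation is correct: conditioned on $h$, one indeed gets $\sum_{i\in\calT}\E[\hf'_i-\hf_i\mid h]=\tfrac{b}{(n+n')(b-1)}\left(\sum_j m_j n'_j - n'\sum_j m_j S_j\right)$, and this is a finer-grained decomposition than the paper's, which works with expectations taken over the 2-wise independent hash throughout. However, there is a genuine gap between what you derive and what the theorem asserts. First, your two separate bounds yield only $\GMGA\le\tfrac{b}{b-1}\lambda(|\calT|-f_\calT)$, and the prefactor $\tfrac{b}{b-1}>1$ cannot be dismissed by ``$b/(b-1)\to 1$'' for the finite $b$ the protocol actually uses. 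The resolution is that the expectations in the theorem (exactly as in the unbiasedness proof of Theorem~\ref{thm:CH_accuracy}) are taken over the random hash as well: then $\E_h[S_{h(i)}]=f_i+\tfrac{1-f_i}{b}$ rather than your pointwise lower bound $f_i$, and the extra collision mass $\tfrac{1-f_i}{b}$ exactly cancels the $\tfrac{b}{b-1}$ normalization, giving $\lambda(1-f_i)$ per target item and hence $\lambda(|\calT|-f_\calT)$ after summation. By bounding $\sum_j m_j S_j\ge f_\calT$ you discard precisely the term that effects this cancellation.

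Second, the theorem claims an equality, and you explicitly concede that your two bounds are not simultaneously tight, so you exhibit no attack attaining $\lambda(|\calT|-f_\calT)$. The paper's proof instead declares the maximizing configuration to be the one in which every fake user sends a single hash value $m^*$ with $h(i)=m^*$ for all $i\in\calT$ (all targets collide), evaluates the gain of that configuration with the expectation over $h$ included, and obtains $\lambda(|\calT|-f_\calT)$ exactly; it then remarks separately that such an $m^*$ may not exist for a given realization of $h$, in which case the gain degrades toward the LNF value $\lambda(1-f_\calT)$. Your observation that the two bounds conflict is essentially the content of that remark, but as written your argument establishes neither the claimed equality nor even the claimed value as a valid upper bound, because of the residual $\tfrac{b}{b-1}$ factor.
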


\begin{table}[t]
\caption{Basic notations.}
\vspace{-3mm}
\centering
\hbox to\hsize{\hfil
\begin{tabular}{l|l}
\hline
Symbol		&	Description\\
\hline
$n$            &   Number of users.\\
$d$            &	Number of items.\\
$u_i$          &	$i$-th user.\\
$x_i$          &	Input value of user $u_i$.\\
$\calX$        &	Space of input data.\\
$f_i$        &	Frequency of item $i$ in categorical data. $\bmf = (f_1, \ldots, f_d)$.\\
$\hf_i$        &	Estimate of $f_i$. $\bmhf = (\hf_1, \ldots, \hf_d)$.\\
$\Phi_i$        &	Frequency of key $i$ in KV data. $\bmPhi = (\Phi_1, \ldots, \Phi_d)$.\\
$\hPhi_i$        &	Estimate of $\Phi_i$. $\bmhPhi = (\hPhi_1, \ldots, \hPhi_d)$.\\
$\Psi_i$        &	Mean value of key $i$ in KV data. $\bmPsi = (\Psi_1, \ldots, \Psi_d)$.\\
$\hPsi_i$        &	Estimate of $\Psi_i$. $\bmhPsi = (\hPsi_1, \ldots, \hPsi_d)$.\\
$\calT$         &   Set of target items.\\
$n'$            &   Number of fake users.\\
$\lambda$       &   Fraction of fake users ($\lambda = \frac{n}{n+n'}$).\\
$\GMGA$         &   Maximum gain in categorical data.\\
$\GMTGAPhi$     &   Maximum frequency gain in KV data.\\
$\GMTGAPsi$     &   Maximum mean gain in KV data.\\
$C_{tot}$       &   Expected number of bits sent from one party to another.\\ 
$h$             &   Hash function.\\
$b$             &   Range of hash function $h$.\\
\colorB{$\calD^*$}         &   \colorB{Dummy-count distributions ($\calD^* = (\calD_1,\calD_2)$; $\calD_1$ has mean} \\
                &   \colorB{$\mu_1$ and variance $\sigma_1^2$; $\calD_2$ has mean $\mu_2$ and variance $\sigma_2^2$).}\\
$\beta$         &   Sampling probability.\\
$\alpha$        &   Significance level.\\
$l$             &   Maximum number of selected hashes.\\
$\Lambda^H$     &   Set of selected hash values after filtering.\\
$\Lambda$       &   Set of selected items after filtering.\\
$\kappa$        &   Padding length in the padding-and-sampling technique.\\
\hline
\end{tabular}
\hfil}
\label{tab:notations}
\end{table}

\setlength{\algomargin}{5mm}
\begin{algorithm}[t]
  \SetAlgoLined
  \KwData{Input values $(x_1, \ldots, x_n) \in [d]^n$, 
  \#items $d \in \nats$, 
  dummy-count distribution $\calD$ 
  (mean: $\mu$, variance: $\sigma^2$), 
  sampling probability $\beta \in [0,1]$.
  }
  \KwResult{Estimates $\bmhf = (\hf_1, \cdots, \hf_d)$.}
  \tcc{Send input values}
  \ForEach{$i \in [n]$}{
    [$u_i$] Send $E_\pkd[x_i]$ to the shuffler\;
  }
  \tcc{Random sampling}
  [s] Sample $E_\pkd[x_i]$ $(i\in[n])$ with probability $\beta$\;
  \tcc{Dummy data addition.}
  \ForEach{$i \in [d]$}{
    [s] $z_i \leftarrow \calD$;
    Add a dummy $E_\pkd[i]$ for $z_i$ times\;
  }
  \tcc{Random shuffling}
  [s] Let $y_1, \ldots, y_{\tn} \in [d]$ the selected input values and dummies. 
  Sample a random permutation $\pi$ over $[\tn]$\;
  \tcc{Send shuffled values}
  [s] Send shuffled values $(E_\pkd[y_{\pi(1)}], \ldots, E_\pkd[y_{\pi(\tn)}])$ to the data collector\;
  \tcc{Compute an unbiased estimate}
  [d] Decrypt $y_{\pi(1)}, \ldots, y_{\pi(\tn)}$\;
  [d] $(\tc_1, \ldots, \tc_d) 
  \leftarrow$\texttt{Count}$(y_{\pi(1)}, \ldots, y_{\pi(\tn)})$\;
  \ForEach{$i \in [d]$}{
    [d] $\hf_i \leftarrow \frac{1}{n\beta}(\tc_i-\mu)$\;
  }
  \KwRet{$\bmhf = (\hf_1, \cdots, \hf_d)$}
  \caption{LNF protocol $\calS_{\calD, \beta}^{\LNF}$ \cite{Murakami_SP25}. 
  [$u_i$], [s], and [d] represents that the process is run by user $u_i$, the shuffler, the data collector. 
  $\pkd$ represents a public key of the data collector. 
  }\label{alg:S_LNF}
\end{algorithm}

\subsection{\colorB{Other Baselines}}
\label{sub:other_baselines}

\colorB{Below, we explain two other baselines than $\calS_{\calD, \beta}^{\LNF}$ and $\calS_{\calD, \beta}^{\CH}$:} 

\smallskip{}
\noindent{\textbf{User/Group-Dependent Hash Protocol.}}~~We can 
consider 
a variant of $\calS_{\calD, \beta}^{\CH}$ that uses 
a different hash function $h_i$ for each user $u_i$ 
(or each user group) 
to avoid the hash collision among users. 
However, this variant also results in low accuracy, as it needs to add dummy values for each hash function $h_i$ and each hash value in $[b]$ to provide DP. 
In Appendix~\ref{sub:comparison_other_baselines}, we show that this variant provides \textit{worse} accuracy than $\calS_{\calD, \beta}^{\CH}$. 

\smallskip{}
\noindent{\textbf{Protocol in~\cite{Luo_CCS22}.}}~~Yet another baseline is a protocol for large-domain data in~\cite{Luo_CCS22}, denoted by \LWYL{}. 
Specifically, \LWYL{} applies a hash function $h_i$ different for each user $u_i$ and then repeatedly adds dummy values to a tuple of a hash function and a hash value uniformly chosen from $\calH \times [b]$. 
This protocol also does not work well. 
This is because it generates dummy values uniformly at random, which is shown to be ineffective in \cite{Murakami_SP25}. 
In Section~\ref{sec:exp}, we 
show that \LWYL{} does not provide high accuracy. 

\section{\colorB{More Details on the FME Protocol}}
\label{sec:details_FME}

\noindent{\textbf{\colorB{Toy Example of $\calS_{\calD^*,\beta}^{\FME}$}.}}~~Assume 
that $n=5$, $d=8$, $b=4$, $(x_1, \ldots, x_5) = (2,8,4,8,2)$, 
$(h(x_1), \ldots, h(x_5)) \allowbreak = (1,1,3,1,1)$, $\beta=1$, and 
the binomial distribution $B(2,0.5)$ is used as 
\colorB{$\calD_1$ and $\calD_2$}. 
The shuffler adds $(z_1,z_2,z_3,z_4) \allowbreak = (1,0,1,1)$ dummies for hash values. 
Then, the shuffled data are, e.g., 
$(y^H_{\pi(1)}, \ldots, \allowbreak y^H_{\pi(8)}) = (1,3,1,4,1,1,1,3)$ and  
$(y_{\pi(1)}, \ldots, y_{\pi(8)}) \allowbreak = (8,4,2,\bot,8,\bot,2,\bot)$. 
The data collector filters items and 
selects $\Lambda = \{2,8\}$ and $\Lambda_H=\{1\}$. 
In this case, the shuffled input values become $(y_{\pi(1)}, \ldots, y_{\pi(8)}) \allowbreak = (8,\bot,2,\bot,8,\bot,2,\bot)$ ($4$ is replaced with $\bot$). 
Finally, the shuffler removes $\bot$ generated by the shuffler and 
adds $(z_2,z_8) = (1,2)$ dummies for input values. 
Then, the shuffled input values are, e.g., 
$(y^*_{\rho(1)}, \ldots, y^*_{\rho(8)}) \allowbreak = (2,8,\bot,8,2,8,2,8)$, 
and the estimate $\bmhf$ is $\bmhf \allowbreak = (0,0.4,0,0,0,0,0,0.6)$. 

\smallskip{}
\noindent{\textbf{\colorB{On the Composition in $\colorB{\calS_{\calD^*,\beta}^{\FME}}$}.}}~~\colorB{Below, we explain that the basic composition theorem in Theorem~\ref{thm:FME_DP} is almost tight. 
To show this, we consider the optimal composition theorem~\cite{Kairouz_ICML15}. 
This theorem provides the tightest bound when each sub-mechanism provides $(\epsilon,\delta)$-DP. 
However, even this theorem cannot improve the bound in Theorem~\ref{thm:FME_DP}, as our protocol composes \textit{only two} sub-mechanisms -- one outputting hash values and the other outputting input values. 
Specifically, assume that each sub-mechanism provides $(\frac{\epsilon}{2},\frac{\delta}{2})$-DP, i.e., $\epsilon_1 = \epsilon_2 = \frac{\epsilon}{2}$ and $\delta_1 = \delta_2 = \frac{\delta}{2}$ in Theorem~\ref{thm:FME_DP}. 
Then, the theorem in \cite{Kairouz_ICML15} states that the entire protocol provides $(\epsilon, \delta - (\frac{\delta}{2})^2)$-DP (or $(0,\delta')$-DP with an extremely large $\delta'$), 
which is almost equivalent to $(\epsilon,\delta)$-DP. 
Thus, the basic composition theorem is almost tight in our case.} 

\section{\colorB{$\varepsilon$ in DP and Privacy Guarantees}}
\label{sec:DP_epsilon}
\colorB{Below, 
we show the relationship between $\epsilon$ in DP and privacy guarantees against the inference of input values through hypothesis testing interpretations \cite{Kairouz_ICML15}. 
Specifically, DP considers two neighboring databases, $D$ and $D'$, that differ in the input value of the victim (see Definition~\ref{def:DP}). 
Thus, given an output $Y$ of a randomized algorithm $\calM$, we can define the following hypotheses: $H_0$: ``$Y$ came from $D$.''; $H_1$: ``$Y$ came from $D'$.'' 
Assume that, given $D$ and $D'$, the attacker who obtains $Y$ guesses which of $H_0$ or $H_1$ is correct. 
The attacker may choose $H_1$ when $H_0$ is true (type I error). 
Conversely, the attacker may choose $H_0$ when $H_1$ is true (type II error). 
Let $p_I, p_{II} \in [0,1]$ be the probabilities of type I and II errors, respectively. 
Then, DP is closely related to $p_I$ and $p_{II}$:}

\begin{theorem}[\cite{Kairouz_ICML15}]
\label{thm:DP_hypothesis}
\colorB{A randomized algorithm $\calM$ provides $(\epsilon,\delta)$-DP if and only if the following inequalities holds for any neighboring databases $D$ and $D'$ and any $Y \in \mathrm{Range}(\calM)$:}
\begin{align}
\colorB{p_I + e^\epsilon p_{II} \geq 1 - \delta,~~ e^\epsilon p_I + p_{II} \geq 1 - \delta.}
\label{eq:DP_hypothesis}
\end{align}
\end{theorem}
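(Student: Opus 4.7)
The plan is to identify each attacker's decision strategy with its rejection region $S = \{Y \in \mathrm{Range}(\calM) : \text{attacker outputs } H_1\} \subseteq \mathrm{Range}(\calM)$, so that $p_I = \Pr[\calM(D) \in S]$ and $p_{II} = 1 - \Pr[\calM(D') \in S]$. Since both claimed inequalities are linear in $(p_I, p_{II})$, restricting to deterministic rejection regions is without loss of generality: the error probabilities of any randomized rule lie in the convex hull of those of deterministic ones, so if both inequalities hold for every $S$ they hold for every randomized rule as well. Thus the theorem reduces to an equivalence, quantified over all measurable $S$, between the two inequalities and (\ref{eq:DP_inequality}).

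For the \emph{only if} direction I would assume $(\epsilon,\delta)$-DP and instantiate (\ref{eq:DP_inequality}) with the complement $S^c$ in place of the usual measurable set, obtaining $\Pr[\calM(D) \in S^c] \leq e^\epsilon \Pr[\calM(D') \in S^c] + \delta$. Substituting $1 - p_I$ and $p_{II}$ for the two probabilities and rearranging yields the first claimed inequality $p_I + e^\epsilon p_{II} \geq 1 - \delta$. To obtain the second, I would swap the roles of $D$ and $D'$ in (\ref{eq:DP_inequality}) --- which is permitted by Definition~\ref{def:DP} since the condition must hold for every ordered pair of neighboring databases --- and apply that swapped inequality to $S$ itself, giving $1 - p_{II} \leq e^\epsilon p_I + \delta$, i.e., $e^\epsilon p_I + p_{II} \geq 1 - \delta$.

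The \emph{if} direction is the same calculation read in reverse. For any measurable $S \subseteq \mathrm{Range}(\calM)$, I would apply the first assumed inequality to the decision rule whose rejection region is $S^c$; for that rule, $p_I = 1 - \Pr[\calM(D) \in S]$ and $p_{II} = \Pr[\calM(D') \in S]$, so the inequality becomes $(1 - \Pr[\calM(D)\in S]) + e^\epsilon \Pr[\calM(D')\in S] \geq 1 - \delta$, which is exactly (\ref{eq:DP_inequality}). Applying the second assumed inequality with rejection region $S^c$ (equivalently, with the roles of $D$ and $D'$ swapped) recovers the symmetric DP inequality, completing the characterization.

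The main subtlety --- rather than an obstacle --- will be matching the correct rejection region ($S$ or $S^c$) to the correct assumed inequality. Plugging $S$ directly into (\ref{eq:DP_inequality}) only produces the trivial upper bound $p_I + e^\epsilon p_{II} \leq e^\epsilon + \delta$; the non-trivial lower bound $\geq 1 - \delta$ emerges only after first passing to the complementary event, and the two directions of DP (with $D, D'$ and with $D', D$) correspond to the two inequalities in the theorem. Measurability of $S$ and the reduction from randomized to deterministic rules are standard technicalities that I would dispatch with the convex-hull remark above.
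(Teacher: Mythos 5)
The paper does not prove this statement itself---it is quoted verbatim as a cited result from Kairouz et al.~\cite{Kairouz_ICML15}---so there is no in-paper argument to compare against; your derivation is the standard one and it is correct. Both directions check out: applying the DP inequality to the complement $S^c$ of the rejection region gives $p_I + e^\epsilon p_{II} \geq 1-\delta$, swapping the roles of $D$ and $D'$ gives the symmetric inequality, and reading the same substitutions in reverse yields the converse, with the convex-hull remark correctly disposing of randomized decision rules (and correctly reinterpreting the statement's loose quantification over ``$Y \in \mathrm{Range}(\calM)$'' as quantification over rejection regions).
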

\colorB{Theorem~\ref{thm:DP_hypothesis} states that $(\epsilon,\delta)$-DP is equivalent to lower bounding type I and II errors by (\ref{eq:DP_hypothesis}). 
Fig.~\ref{fig:res_epsilon_hypothesis} shows the relationship between $\epsilon$ and the lower bound on the error probability $p^*$ when the type I and II errors are equal, i.e., $p^* = p_I = p_{II}$. 
Note that $p^*$ is the error probability when the attacker is given two candidates for the victim's input value. 
The inference of the victim's input value is much more difficult in practice, as there are $d$ ($\gg 2$) candidates for it.} 

\colorB{Fig.~\ref{fig:res_epsilon_hypothesis} shows that when $\epsilon$ is close to $0$, $p^*$ is close to $0.5$ (the error probability of a random guess). 
In contrast, when $\epsilon \geq 5$, $p^*$ is almost zero. 
This means that DP cannot provide strong privacy guarantees in this case, as claimed in \cite{DP_Li}. 
Taking this into account, we varied $\epsilon$ from $0.1$ to $5$ in Section~\ref{sec:exp}.} 

\section{Additional Experiments}
\label{sec:additional}

\subsection{\colorB{Changing Parameters in the FME Protocol}}
\label{sub:FME_parameters}
\colorB{We evaluated the MSE of $\calS_{\calD^*,\beta}^{\FME}$ when we changed parameters $\epsilon_1$ $(=\epsilon - \epsilon_2)$, $\alpha$, and $l$ in the Foursquare dataset. 
Here, we set $\epsilon_1 = 0.5\epsilon$, $\alpha = 0.05$, and $l = \max\{\frac{n^2}{d},50\}$ as default values and changed each parameter while fixing the others.} 

\begin{figure}[t]
  \centering
  \includegraphics[width=0.7\linewidth]{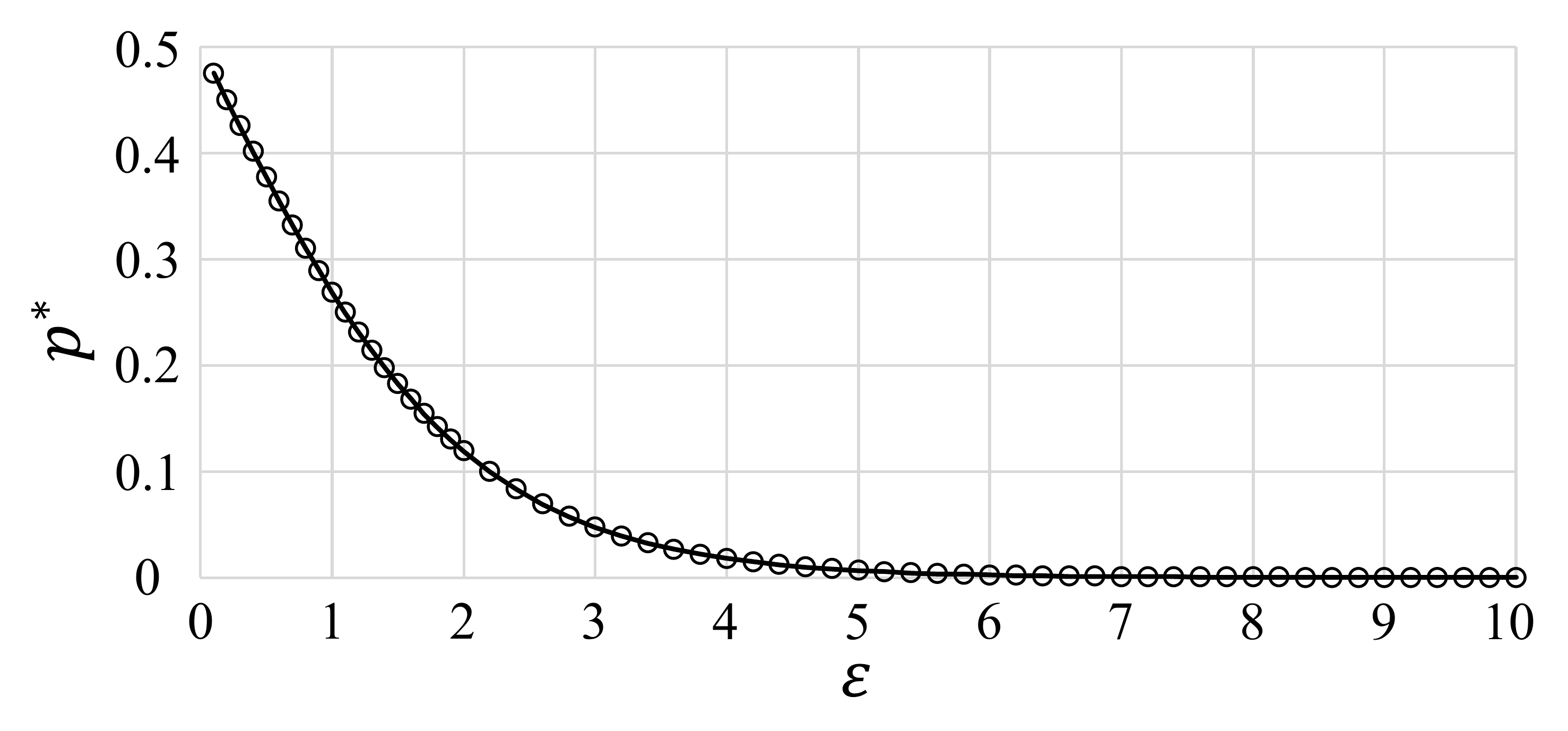}
  \vspace{-5mm}
  \caption{\colorB{$\epsilon$ in DP and the lower bound on the attacker's error probability $p^*$ $(= p_I = p_{II})$ obtained from (\ref{eq:DP_hypothesis}) ($\delta=10^{-12}$).}} 
  \label{fig:res_epsilon_hypothesis}
\vspace{1mm}
  \centering
  \includegraphics[width=0.99\linewidth]{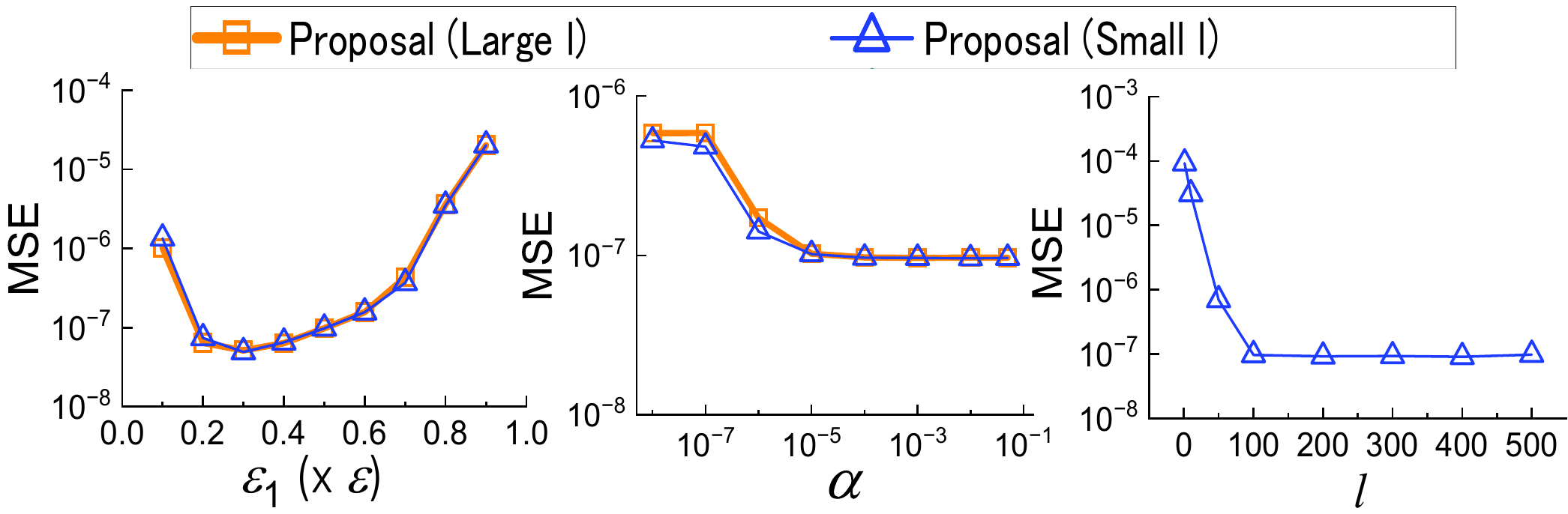}
  \vspace{-7mm}
  \caption{\colorB{MSE of our FME protocol when changing parameters $\epsilon_1$ $(=\epsilon - \epsilon_2)$, $\alpha$, and $l$ in the Foursquare dataset ($\epsilon = 1$, $\delta=10^{-12}$).}}
  \label{fig:res_changing_FME_parameters}
\end{figure}

\colorB{Fig.~\ref{fig:res_changing_FME_parameters} shows the results. 
The MSE is roughly the same when $\epsilon_1$ is between $0.2\epsilon$ and $0.5\epsilon$, which indicates that this range of $\epsilon_1$ can balance the trade-off between the noise for hash values and the noise for input values. 
Fig.~\ref{fig:res_changing_FME_parameters} also shows that the MSE rapidly increases as we decrease $\alpha$ and $l$ from $10^{-6}$ and $50$, respectively. 
This is because $\calS_{\calD^*,\beta}^{\FME}$ filters out almost all items in this case. 
Our suggestion is to avoid such extreme settings; e.g., if we are interested in the frequency of top-$k$ items, 
$\alpha$ and $l$ should be: $\alpha \geq 10^{-5}$ and $l \geq k$.}

\colorB{Note that $\calS_{\calD^*,\beta}^{\FME}$ also has the sampling probability $\beta$ as a parameter. 
\cite{Murakami_SP25} shows that the communication cost is improved by reducing $\beta$. 
However, it comes at the cost of accuracy. 
Since we can significantly improve the communication cost by setting $l$ small, our suggestion is to set $\beta=1$.}

\subsection{Existing Defenses against Collusion and Poisoning Attacks}
\label{sub:existing_defenses}
We evaluated the existing defenses against collusion and poisoning attacks not evaluated in Section~\ref{sec:exp}. 

\smallskip{}
\noindent{\textbf{Collusion Attacks.}}~~For collusion attacks, we evaluated a defense in \cite{Wang_PVLDB20}, which adds dummies uniformly at random from the domain of noisy data. 
This defense increases the MSE by $(1+\upsilon)^2$ times by adding $\upsilon n$ dummies, where $\upsilon \in \nnreals$. 
We set $\upsilon = 0.5$, in which case the MSE is increased by $2.25$ times. 
We applied the defense in \cite{Wang_PVLDB20} to the existing protocols and 
evaluated the relationship between the actual $\epsilon$ and the ratio $|\Omega|/n$ of colluding users using the Foursquare dataset. 

Fig.~\ref{fig:res_existing_defenses_collusion} shows the results. 
The actual $\epsilon$ still increases with an increase in $|\Omega|/n$, which indicates that the defense in \cite{Wang_PVLDB20} is insufficient as a defense against collusion attacks. 

\smallskip{}
\noindent{\textbf{Poisoning Attacks.}}~~For poisoning attacks, 
we evaluated a defense 
in \cite{Wu_USENIX22}. 
The defense in \cite{Wu_USENIX22} 
detects fake users based on isolation forest in KV statistics estimation.
Note that the isolation forest simply divides users into two groups. 
Since we can set the size of each group in the isolation forest, we assume that the data collector knows the number $n'$ of fake users, 
divides the users into a group with $n'$ users and another group, and treats the former group as a fake group. 
Following \cite{Wu_USENIX22}, we used the FPR (False Positive Rate) and FNR (False Negative Rate) as performance measures. 
The FPR (resp.~FNR) is the ratio of genuine users decided as fake (resp.~fake users decided as genuine). 
We evaluated the FPR and FNR of the defense in \cite{Wu_USENIX22} applied to PCKV-GRR/UE using the Amazon dataset. 

\begin{figure}[t]
  \centering
  \includegraphics[width=0.7\linewidth]{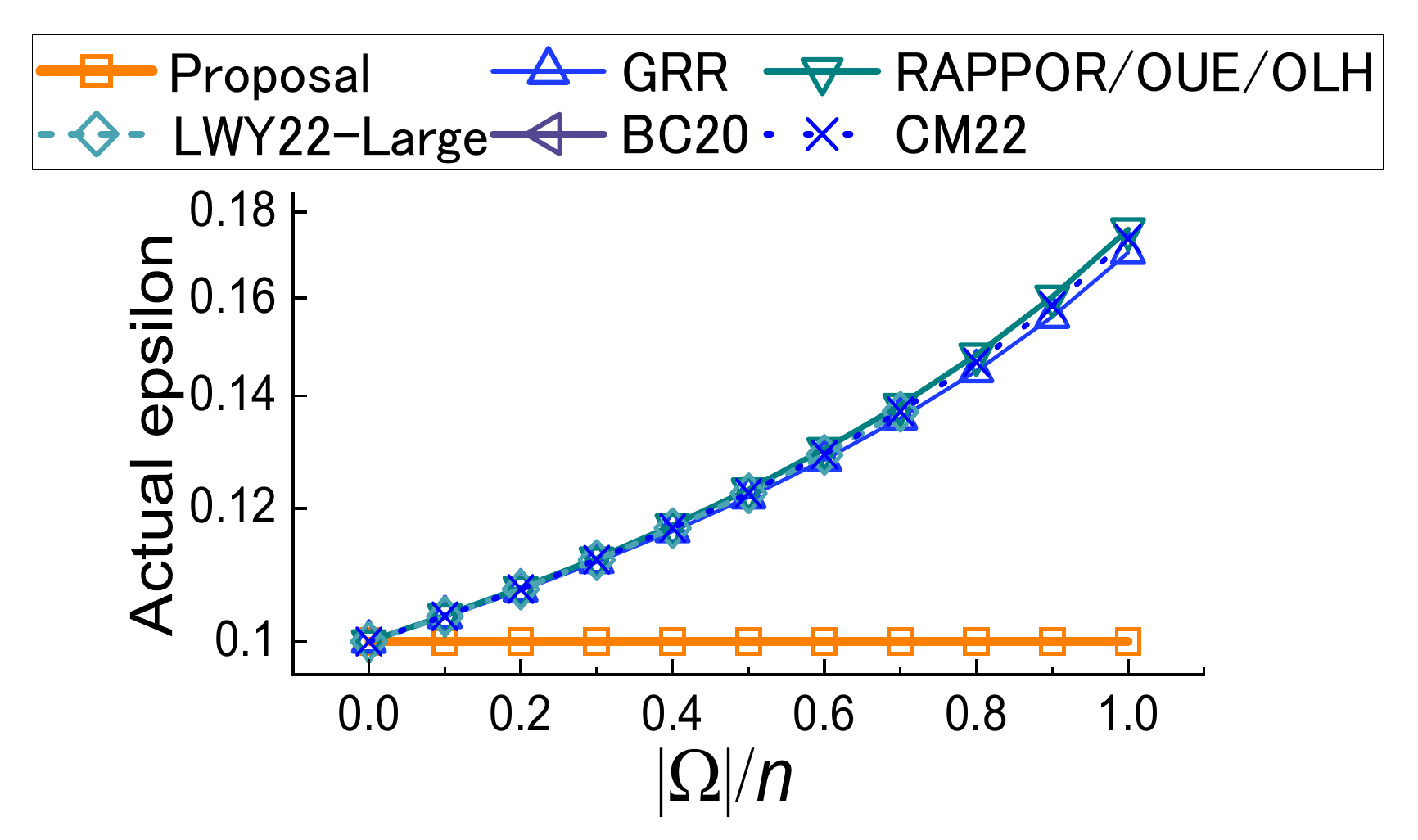}
  \vspace{-5mm}
  \caption{Actual $\epsilon$ vs. the ratio $|\Omega|/n$ of colluding users when the defense in \cite{Wang_NDSS20} is applied to the existing protocols in the Foursquare dataset ($\epsilon=0.1$, $\delta=10^{-12}$).} 
  \label{fig:res_existing_defenses_collusion}
\vspace{1mm}
  \centering
  \includegraphics[width=0.99\linewidth]{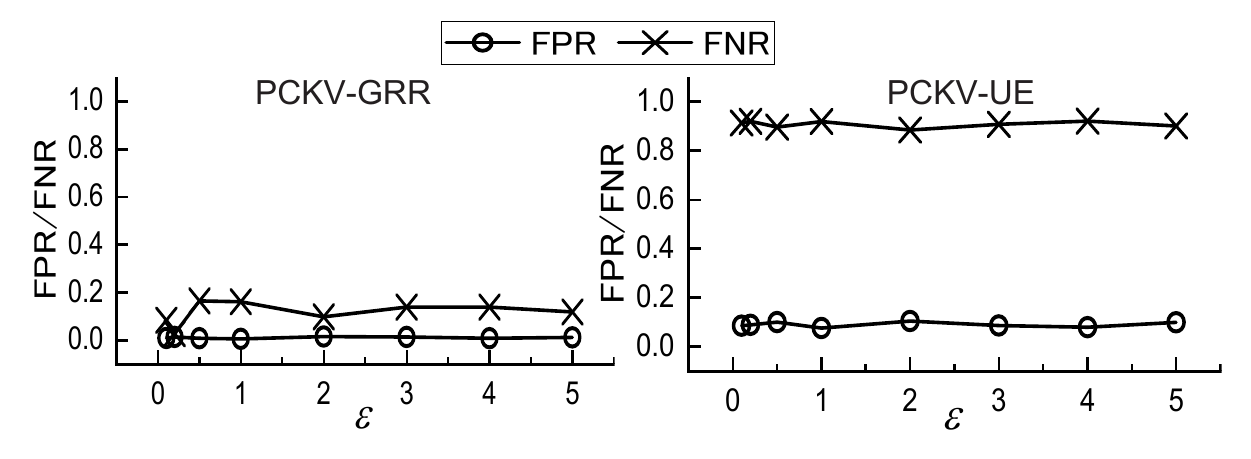}
  \vspace{-9mm}
  \caption{FPR and FNR of the defense in \cite{Wu_USENIX22} in the Amazon dataset ($\lambda=0.1$, $\delta=10^{-12}$).} 
  \label{fig:res_existing_defenses_poisoning}
\end{figure}

Fig.~\ref{fig:res_existing_defenses_poisoning} shows the results. 
Although the FPR and FNR are low for PCKV-GRR, the FNR is very high for PCKV-UE. 
This is because, for PCKV-UE, the M2GA sets bits corresponding to the target keys to $1$ and randomly samples other bits to evade detection \cite{Wu_USENIX22}. 
Our result shows that the effectiveness of the defense in \cite{Wu_USENIX22} is limited for PCKV-GRR. 
This is consistent with the experimental results in \cite{Wu_USENIX22}. 

\subsection{User/Group-Dependent Hash Protocol}
\label{sub:comparison_other_baselines}
We also evaluated the UH (User-Dependent Hash) and GH (Group-Dependent Hash) protocols. 

\smallskip{}
\noindent{\textbf{UH/GH Protocol.}}~~Below, we explain the GH protocol, as it is more general than the UH protocol. 
Let $g \in [n]$ be the number of groups. 
For each $i \in [g]$, the data collector randomly selects a hash function $h_i: [d] \rightarrow [b]$ from $\calH$. 
Then, the data collector selects a group ID $r_i \in [g]$ for each user $u_i$ ($i \in [n]$) and sends $r_i$ and 
$h_{r_i}$ to $u_i$. 
User $u_i$ ($i \in [n]$) sends a pair $\la r_i, h_{r_i}(x_i) \ra$ of the group ID and the hash value to the shuffler. 

The shuffler randomly selects each pair with probability $\beta$. 
For each group ID $j \in [g]$ and each hash value $k \in [b]$, the shuffler randomly generates $z_{jk}$ from the dummy-count distribution $\calD$ ($z_{jk} \sim \calD$) and adds a dummy pair $\la j, k \ra$ for $z_{jk}$ times. 
The shuffler randomly shuffles the selected hash values and dummies to the data collector. 
Finally, the data collector calculates an unbiased estimate $\hf_i$ ($i \in [d]$) of $f_i$ as: $\hf_i = \frac{b}{n\beta(b-1)}(\sum_{(j,k) \in S_i}\tc_{jk} - \frac{n\beta}{b} - g\mu)$, where $\tc_{jk}$ is the number of $\la j, k \ra$ in the shuffled data, and $S_i = \{(j,k) | h_j(i) = k\}$, 
i.e., the set $\la j, k \ra$ of pairs that could be produced from item $i$. 

Note that this protocol is equivalent to a protocol that independently applies the CH protocol to each group. 
Thus, the GH protocol inherits the theoretical properties of the CH protocol, e.g., $(\epsilon, \delta)$-CDP. 
The UH protocol is a special case of the GH protocol where $g=n$ and $r_i = i$ ($i \in [n]$).

\smallskip{}
\noindent{\textbf{Experimental Results.}}~~We compared our proposals with the UH and GH protocols using the Foursquare dataset. 
Note that the GH protocol 
is inefficient when $g$ is large. 
To reduce the run time, we randomly selected $n=100$ users and divided the city into $100 \times 100$ regions ($d=10000$) at regular intervals. 
Then, we evaluated the MSE over all items (regions) and the MSE over the top-50 items while changing $g$ in the GH protocol from $1$ to $100$. 
Note that the CH and UH protocols are the GH protocol with $g=1$ and $100$, respectively. 

\begin{figure}[t]
  \centering
  \includegraphics[width=0.99\linewidth]{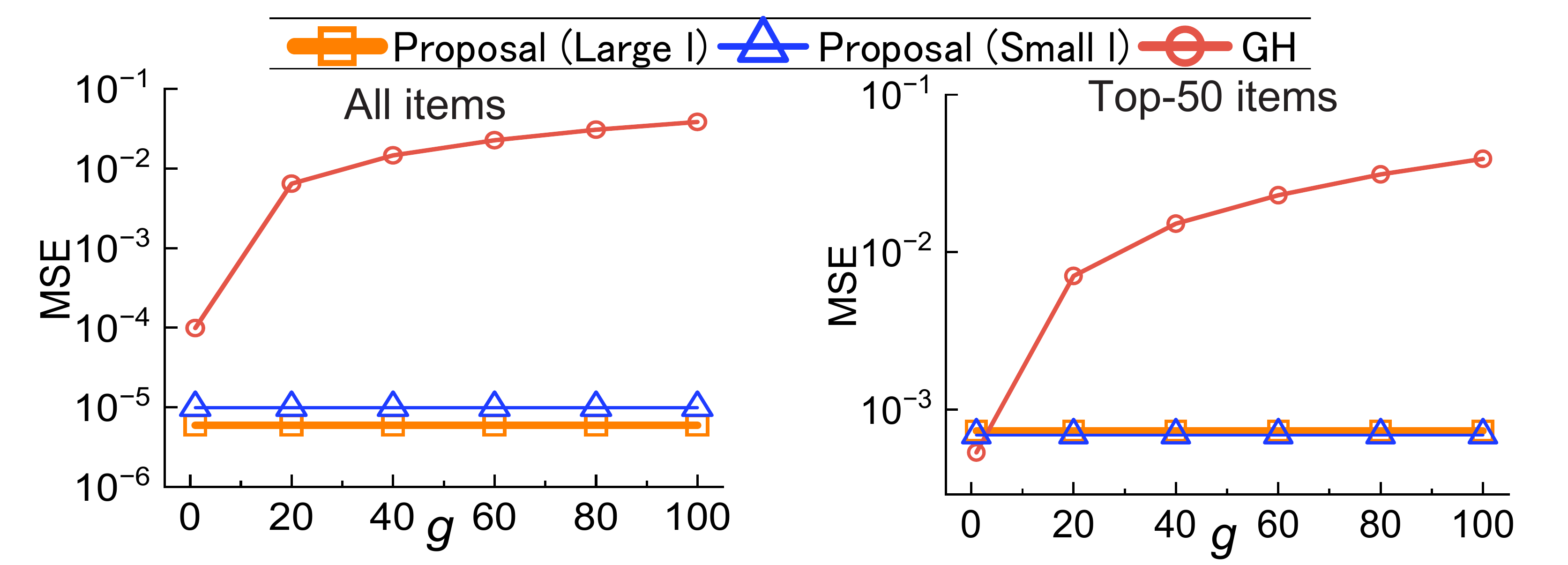}
  \vspace{-8mm}
  \caption{MSE of our proposal and the CH/GH/UH protocol in the Foursquare dataset (left: MSE over all items, right: MSE over the top-$50$ items). 
  The CH and UH protocols are the GH protocol with $g=1$ and $100$, respectively.} 
  \label{fig:res_GH}
\vspace{1mm}
  \centering
  \includegraphics[width=0.99\linewidth]{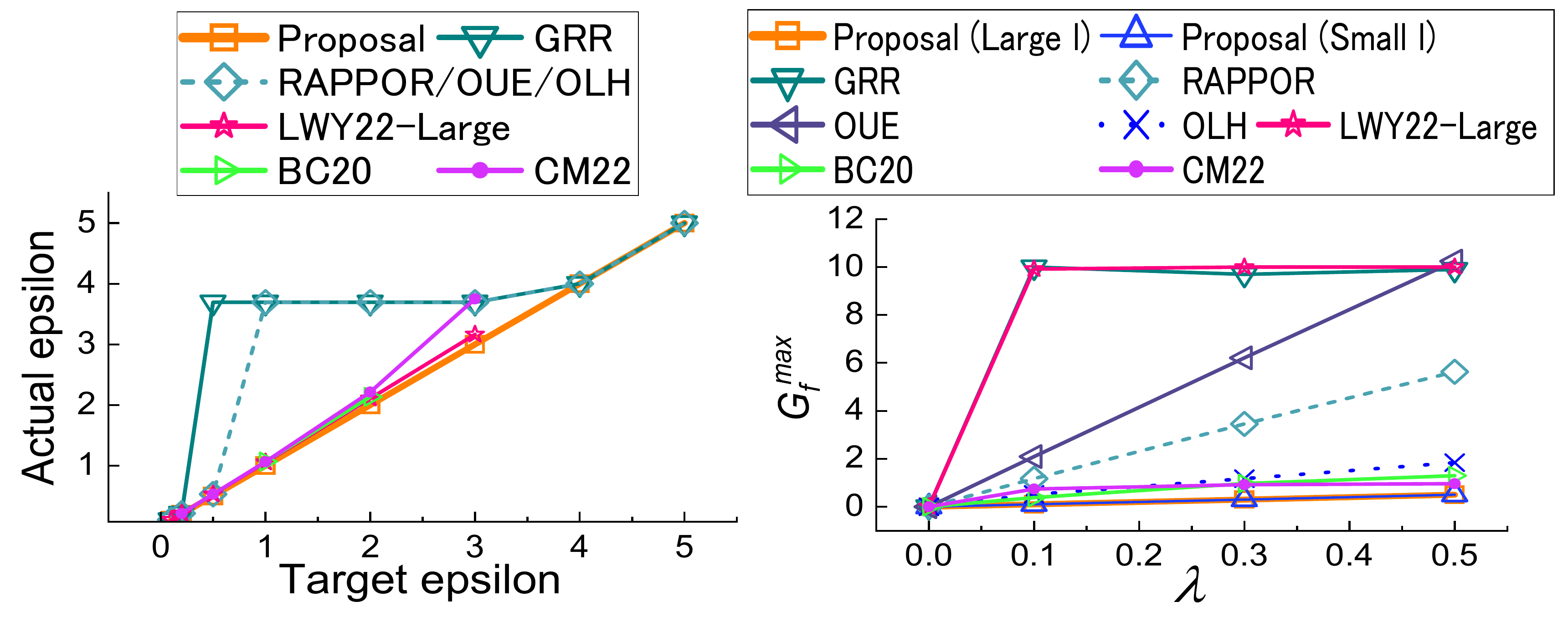}
  \vspace{-8mm}
  \caption{Robustness against collusion and poisoning attacks when varying the target $\epsilon$ and the fraction $\lambda$ of fake users in the Foursquare dataset (left: $|\Omega|/n = 0.1$, $\delta = 10^{-12}$; right: $\epsilon = 1$, $\delta = 10^{-12}$).} 
  \label{fig:res_changing_parameters}
\end{figure}

Fig.~\ref{fig:res_GH} shows the results. 
In the GH protocol, the MSE increases as $g$ increases. 
This is because the GH protocol needs to add dummies for each group. 
In addition, the CH protocol (i.e., GH with $g=1$) suffers from a large MSE over all items. 
This is because the CH protocol has poor accuracy for unpopular items due to the hash collision. 
The poor accuracy cannot be improved by using user/group-dependent hashes. 

\subsection{Changing Parameters in Collusion and Poisoning Attacks}
\label{sub:parameterx_collusion_poisoning}
We evaluated the robustness against collusion and poisoning attacks when varying the target $\epsilon$ and the fraction $\lambda$ of fake users in the Foursquare dataset. 
Fig.~\ref{fig:res_changing_parameters} shows the results. 

\begin{figure}[t]
  \centering
  \includegraphics[width=0.99\linewidth]{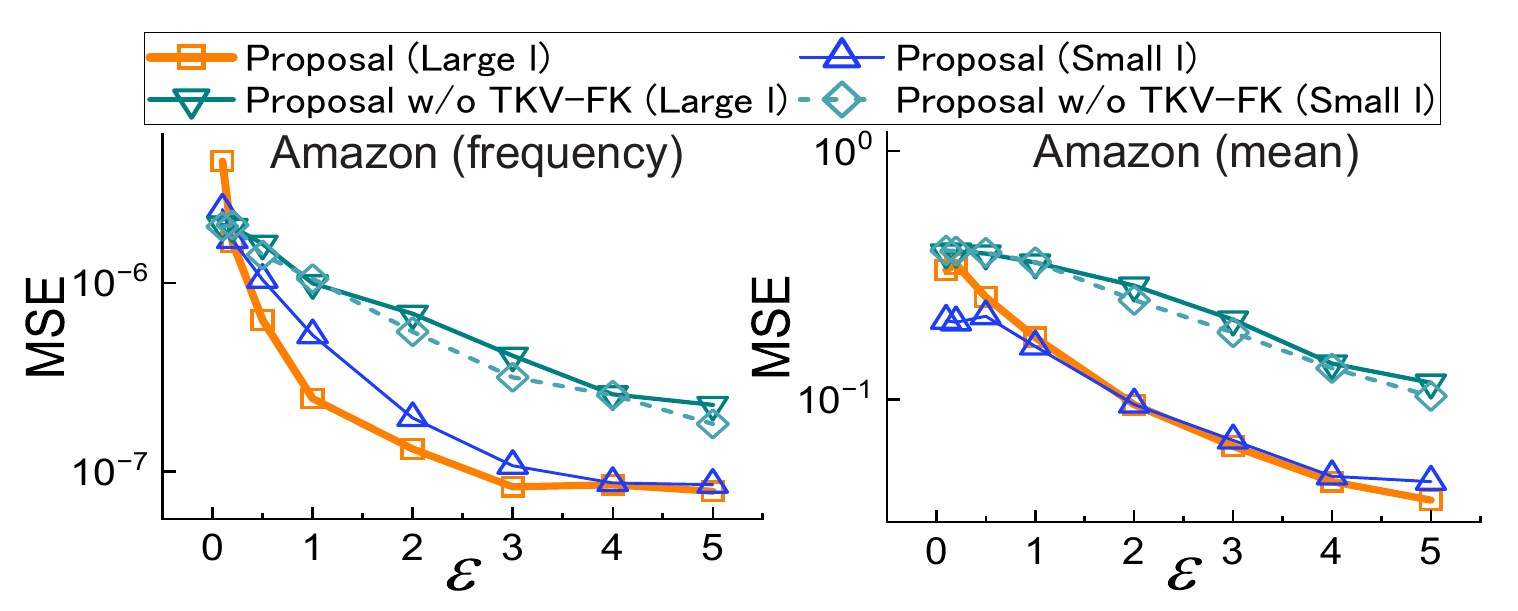}
  \vspace{-8mm}
  \caption{MSE of our proposals with/without the TKV-FK technique in the Amazon dataset  ($\delta=10^{-12}$).} 
  \label{fig:res_wo_TKV-FK}
\vspace{1mm}
  \centering
  \includegraphics[width=0.99\linewidth]{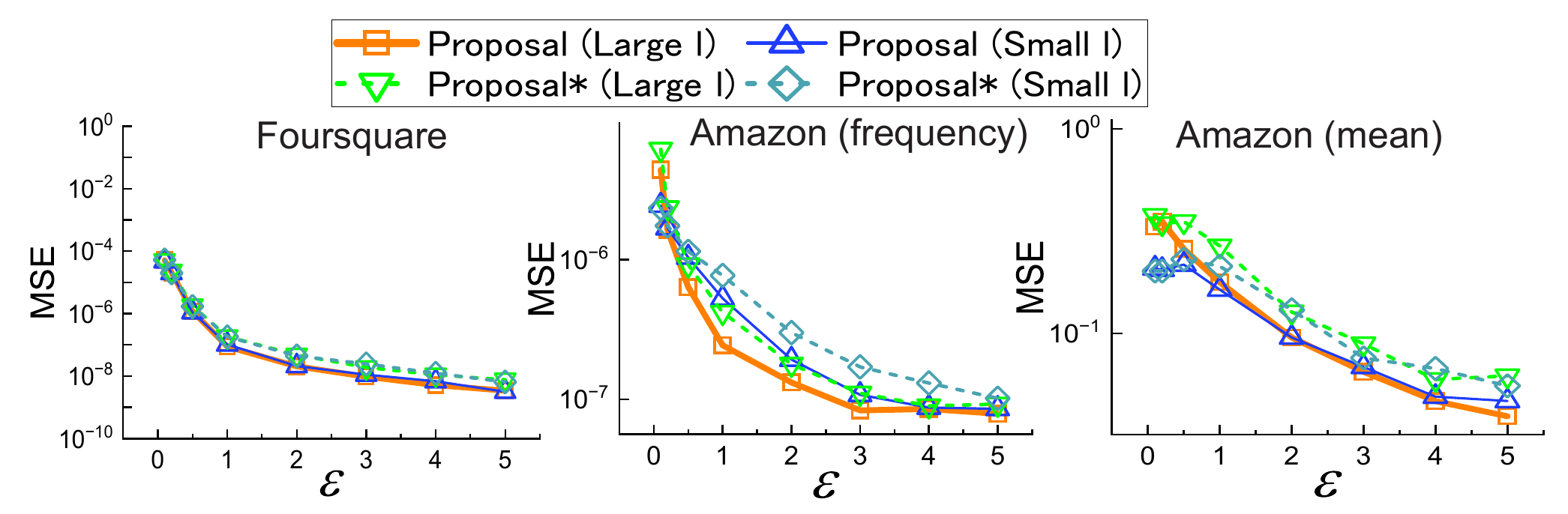}
  \vspace{-8mm}
  \caption{MSE of our proposals with additional noise in the Foursquare and Amazon datasets ($\delta=10^{-12}$).} 
  \label{fig:res_additional_noise}
\end{figure}

Fig.~\ref{fig:res_changing_parameters} shows that the actual $\epsilon$ rapidly increases as the target $\epsilon$ increases in the existing protocols. 
In contrast, the actual $\epsilon$ is always the same as with the target $\epsilon$ in our FME protocol, demonstrating robustness against collusion attacks. 
Our proposals also achieve the smallest gain $\GMGA{}$ and is the most robust to poisoning attacks for all values of $\lambda$. 

\subsection{Effectiveness of TKV-FK}
\label{sub:effectiveness_TKV_FK}
For an ablation study, we evaluated our proposal \textit{with} our TKV-FK technique (Fig.~\ref{fig:TKV-FK}(b)) and our proposal \textit{without} our TKV-FK technique (Fig.~\ref{fig:TKV-FK}(a)) in the Amazon dataset. 

Fig.~\ref{fig:res_wo_TKV-FK} shows the results. 
The MSE for means is significantly improved by introducing TKV-FK, demonstrating the effectiveness of TKV-FK for reducing bias. 
The MSE for frequencies is also significantly reduced by introducing TKV-FK. 
This is because our proposal with TKV-FK effectively finds popular items by filtering data at a key level. 

\subsection{Our Proposals with Additional Noise}
\label{sub:proposals_additional}

Theorem~\ref{thm:FME_DP} assumes 
that the shuffler does not disclose dummies after running the protocol. 
However, even if the shuffler discloses dummies, we can guarantee DP for the outputs 
of $\calS_{\calD^*,\beta}^{\FME}$ by adding additional DP noise before publishing them. 
We denote this modified protocol by \ProposalA{}. 

Specifically, \ProposalA{} adds noise randomly generated from the two-sided geometric distribution 
$\Geo(e^{-\epsilon/4})$ 
\cite{Ghosh_SICOMP12} with parameter $e^{-\epsilon/4}$ to each of the counts $\tc^H_1, \ldots, \tc^H_b$ (cf. Algorithm~\ref{alg:S_FME}, line 11). 
Since the sensitivity of the count is $2$, adding noise from $\Geo(e^{-\epsilon/4})$ provides $\frac{\epsilon}{2}$-DP for $\Lambda$. 
Similarly, \ProposalA{} adds noise from $\Geo(e^{-\epsilon/4})$ to each of the counts $\tc_1, \ldots, \tc_d$ (cf. Algorithm~\ref{alg:S_FME}, line 26) to provide $\frac{\epsilon}{2}$-DP for $\bmf$. 
Then, 
publishing $\Lambda$ and $\bmf$ provides $\epsilon$-DP, even if the shuffler discloses dummies after running the protocol. 

We evaluated the MSE of this 
protocol using the Foursquare and Amazon datasets. 
Fig.~\ref{fig:res_additional_noise} shows the results. 
The accuracy is hardly affected by introducing additional noise.

\arxiv{
\section{Proofs of Statements for Baselines}
\label{sec:proofs_baseline}

\subsection{Proof of Theorem~\ref{thm:CH_accuracy}}
First, we show that $\calS_{\CH}$ outputs an unbiased estimate. 
For $i \in [d]$, $n f_i$ users have item $i$, and $n (1-f_i)$ users do not have item $i$. 
Each of the former users 
increases the count $\tc_{h(i)}$ by one with probability $\beta$. 
Due to the hash collision, 
each of the latter users also increases $\tc_{h(i)}$ by one with probability $\frac{\beta}{b}$. 
In addition, the dummy data addition increases $\tc_{h(i)}$ by $z_{h(i)}$, whose expectation is $\E[z_{h(i)}] = \mu$. 
Thus, for $i \in [d]$, 
\begin{align*}
\textstyle{\E[\tc_{h(i)}] = n \beta f_i \cdot 1 + n \beta (1 - f_i) \cdot \frac{1}{b} + \mu}
\end{align*}
and 
\begin{align*}
\textstyle{\E[\hf_i] 
= \frac{b}{n\beta(b-1)}(\E[\tc_{h(i)}] - \frac{n\beta}{b} - \mu) 
= f_i,} 
\end{align*}
which means that $\calS_{\CH}$ outputs an unbiased estimate. 

Next, we analyze 
the expected squared error. 
Since $\bmhf$ is an unbiased estimate of $\bmf$, the expected squared error is equal to the variance: 
\begin{align}
\textstyle{\E [ (\hf_i - f_i)^2 ]} 
= \textstyle{\V[\hf_i]} 
= \textstyle{\frac{b^2}{n^2 \beta^2 (b-1)^2} \V[\tc_{h(i)}]}
\label{eq:CH_l2_loss_variance}
\end{align}
Below, we analyze $\V[\tc_{h(i)}]$. 
As explained above, $\tc_{h(i)}$ can be increased by users who do not have item $i$. 
For $j\in[d]$, let $\tc_{h(i),j}$ be the number of users who have item $j$ and have increased $\tc_{h(i)}$. 
Then, $\tc_{h(i)} = \sum_{j=1}^d \tc_{h(i),j} + z_{h(i)}$, and 
\begin{align}
&\V[\tc_{h(i)}] 
= \textstyle{\V[\sum_{j=1}^d \tc_{h(i),j}] + \V[z_{h(i)}]} \nonumber\\
&= \textstyle{\sum_{j=1}^d \V[\tc_{h(i),j}] + \sum_{j \ne j'} \text{Cov}(\tc_{h(i),j}, \tc_{h(i),j'}) + \sigma^2} \nonumber\\
&= \textstyle{\sum_{j=1}^d \V[\tc_{h(i),j}] + \sigma^2} \nonumber\\
& \hspace{5mm} \text{(as the hash function $h$ is 2-wise independent).}
\label{eq:CH_V_tc_hi}
\end{align}
For $j \ne i$, let $X_j$ be a random variable representing the number of users who have item $j$ and are selected after random sampling. 
Then, by the law of total variance, 
\begin{align*}
\V[\tc_{h(i),j}] 
&= \E[\V[\tc_{h(i),j} | X_j]] + \V[\E[\tc_{h(i),j} | X_j]] \\
&= \textstyle{\E[X_j^2 \cdot \frac{1}{b} \cdot (1-\frac{1}{b})] + \V[X_j \cdot \frac{1}{b}]}.
\end{align*}
Since $X_j$ is randomly generated from the Binomial distribution $B(n f_j, \beta)$, we have $\E[X_j] = n f_j \beta$, $\V[X_j] = n f_j \beta (1 - \beta)$, and $\E[X_j^2] = \E[X_j]^2 + \V[X_j]$. Thus, 
\begin{align}
\V[\tc_{h(i),j}] 
= \textstyle{\frac{(n^2 f_j^2 \beta^2 + n f_j \beta (1 - \beta))(b-1)}{b^2} + \frac{n f_j \beta (1 - \beta)}{b^2}}.
\label{eq:CH_V_tc_hi_j}
\end{align}
Recall that $n f_i$ users have item $i$, and each of them increases $\tc_{h(i)}$ by one with probability $\beta$. 
Thus, $\V[\tc_{h(i),i}] = n f_i \beta (1 - \beta)$. 
By this and (\ref{eq:CH_l2_loss_variance}), (\ref{eq:CH_V_tc_hi}), and (\ref{eq:CH_V_tc_hi_j}), we obtain (\ref{eq:CH_l2_loss}). 
\qed

\subsection{Proof of Theorem~\ref{thm:CH_DP}}
\label{sub:proof_CH_DP}
Let $D=(x_1,\ldots,x_n)$ and $D'=(x'_1,\ldots,x'_n)$ be two neighboring databases. 
Let $D_h=(h(x_1),\ldots,h(x_n))$ and $D'_h=(h(x'_1),\ldots,h(x'_n))$. 
Let $\calG_{\calD, \beta}^{\LNF}$ be an augmented shuffler part of $\calS_{\calD, \beta}^{\LNF}$, which runs lines 1-8 of Algorithm~\ref{alg:S_LNF} and outputs the shuffled data $(E_\pkd[y_{\pi(1)}], \ldots, E_\pkd[y_{\pi(\tn)}])$. 
Similarly, Let $\calG_{\calD, \beta}^{\CH}$ be an augmented shuffler part of $\calS_{\calD, \beta}^{\CH}$, which applies a hash function $h$ to all input values, runs lines 1-8 of Algorithm~\ref{alg:S_LNF}, and outputs the shuffled data. 

$\calG_{\calD, \beta}^{\CH}$ is equivalent to $\calG_{\calD, \beta}^{\LNF}$ where 
$D_h$ is used as a database. 
In other words, $\calG_{\calD, \beta}^{\CH}$ can be regarded as the LNF protocol preceded with hashing, i.e., $\calG_{\calD, \beta}^{\CH} = \calG_{\calD, \beta}^{\LNF} \circ h$. 
If $\calM_{\calD, \beta}$ provides $(\frac{\epsilon}{2}, \frac{\delta}{2})$-DP, then $\calG_{\calD, \beta}^{\LNF}$ provides 
$(\epsilon,\delta)$-CDP~\cite{Murakami_SP25}. 
In addition, $D$ and $D'$ remain neighboring after applying the hash function $h$; i.e., $D_h$ and $D'_h$ also differ on (at most) one entry.  
Thus, $\calG_{\calD, \beta}^{\CH}$ also provides 
$(\epsilon,\delta)$-CDP. 
By the immunity to post-processing \cite{DP}, $\calS_{\calD, \beta}^{\CH}$ provides 
$(\epsilon,\delta)$-CDP. 

The robustness of $\calG_{\calD, \beta}^{\CH}$ to collusion with users can be shown in the same way. 
Specifically, 
$\calG_{\calD, \beta}^{\LNF}$ is robust to collusion with users~\cite{Murakami_SP25}, and 
two neighboring databases $D$ and $D'$ remain neighboring after applying the hash function $h$. 
Thus, $\calG_{\calD, \beta}^{\CH}$ is also robust to collusion with users. 
By the immunity to post-processing, $\calS_{\calD, \beta}^{\CH}$ is robust to collusion with users. 
\qed

\subsection{Proof of Theorem~\ref{thm:CH_poisoning}}
\label{sub:proof_CH_poisoning}
Before data poisoning, 
$\E[\hf_i]$ can be written as follows: 
\begin{align*}
\E[\hf_i] = \textstyle{\frac{b}{n\beta(b-1)}(\E[\tc_{h(i)}] - \frac{n\beta}{b} - \mu).}
\end{align*}
Since $\tc_{h(i)}$ is the count of hash $h(i)$ sent to the data collector, the overall gain $G_f(\bmm)$ 
$(= \sum_{i \in \calT}(\E[\hf'_i] - \E[\hf_i]))$ 
is maximized 
when each of $n'$ fake users 
sends a message $m^*$ such that all target items $\calT$ are mapped to $m^*$, i.e., $m^* = h(i)$ for all $i \in \calT$ and $\bmm = (m^*, \ldots, m^*)$. 
Each fake message is sampled with probability $\beta$. 
In addition, $\calS_{\calD, \beta}^{\CH}$ outputs an unbiased estimate, i.e., $\E[\hf_i] = \frac{b}{n\beta(b-1)}(\E[\tc_{h(i)}] - \frac{n\beta}{b} - \mu) = f_i$ (see Theorem~\ref{thm:CH_accuracy}). 
Thus, 
we have 
\begin{align*}
&\GMGA 
= \textstyle{\max_{\bmm} \sum_{i \in \calT}(\E[\hf'_i] - \E[\hf_i])} \\
&= \textstyle{\sum_{i \in \calT} \left(\frac{b}{(n+n')\beta(b-1)}(\E[\tc_{h(i)}] + n'\beta - \frac{(n+n')\beta}{b} - \mu) - f_i \right)} \\
&= \textstyle{\sum_{i \in \calT} \left(\frac{n}{n+n'} f_i + \frac{b}{(n+n')\beta(b-1)}(n'\beta - \frac{n'\beta}{b}) - f_i \right)} \\
&= \textstyle{\lambda (|\calT| - f_\calT),} 
\end{align*}
where $\lambda = \frac{n'}{n+n'}$ and $f_\calT = \sum_{i \in \calT} f_i$. 
The first term $\lambda |\calT|$ is proportional to $|\calT|$ because each fake user increases the estimates for all target items $\calT$ by sending a message $m^*$ such that all target items $\calT$ are mapped to $m^*$. 
\qed

Note that in practice, it may be difficult for the attacker to find a message $m^*$ such that all target items $\calT$ are mapped to $m^*$. 
For example, if only one target item is mapped to $m^*$, then $\GMGA$ of the CH protocol is $\GMGA = \lambda (1 - f_\calT)$, which is the same as $\GMGA$ of the LNF protocol in (\ref{LNF_GMGA}). 

\section{Proofs of Statements in Section~\ref{sec:proposed}}
\label{sec:proofs_proposed}
\subsection{Proof of Theorem~\ref{thm:FME_DP}}
\label{suc:thm:FME_DP_proof}

Let $\colorB{\calG_{\calD^*, \beta}^{\FME\plain}}$ be an algorithm that takes $(x_1, \ldots, x_n)$ as input and outputs $(y^H_{\pi(1)}, \ldots, y^H_{\pi(\tn)})$ and $(y^*_{\rho(1)}, \ldots, y^*_{\rho(\tn^*)})$ in Algorithm~\ref{alg:S_FME}, i.e., \textit{plaintext messages} the data collector obtains in $\colorB{\calS_{\calD^*, \beta}^{\FME}}$. 
We divide the proof of Theorem~\ref{thm:FME_DP} into two parts. 
First, we prove that 
$\colorB{\calG_{\calD^*, \beta}^{\FME\plain}}$ 
provides $(\epsilon, \delta)$-DP (Lemma~\ref{lem:FME_plain_DP}). 
Then, we prove that $\colorB{\calS_{\calD^*, \beta}^{\FME}}$ provides $(\epsilon, \delta)$-CDP and is robust to collusion with users (Lemma~\ref{lem:FME_CDP}). 

\begin{lemma}
\label{lem:FME_plain_DP}
\colorB{If the binary input mechanisms $\calM_{\calD_1, \beta}$ and 
$\calM_{\calD_2, 1}$ 
provide $(\frac{\epsilon_1}{2}, \frac{\delta_1}{2})$-DP and $(\frac{\epsilon_2}{2}, \frac{\delta_2}{2})$-DP, respectively, then $\calG_{\calD^*, \beta}^{\FME\plain}$ provides 
$(\epsilon, \delta)$-DP, where $(\epsilon, \delta) = (\epsilon_1 + \epsilon_2, \delta_1 + \delta_2)$, and is robust to collusion with users.} 
\end{lemma}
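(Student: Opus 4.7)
The plan is to view $\calG^{\FME\plain}_{\calD^*, \beta}$ as the sequential release of two statistics---first the shuffled hash list $H = (y^H_{\pi(1)}, \dots, y^H_{\pi(\tn)})$, and then the shuffled input list $Y = (y^*_{\rho(1)}, \dots, y^*_{\rho(\tn^*)})$---and to invoke the basic composition theorem for DP on this decomposition.

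For the hash release, observe that $H$ is exactly what $\calS_{\calD_1, \beta}^{\LNF}$ (viewed as a plaintext mechanism, since there is no encryption in $\calG^{\FME\plain}_{\calD^*, \beta}$) produces when applied to the hashed database $(h(x_1), \dots, h(x_n))$, with $\calD_1$ supplying the dummies and $\beta$ the sampling. Because neighboring databases $D, D'$ that differ at one index give coordinate-wise hashed databases that still differ at (at most) one index, Theorem~\ref{thm:LNF_DP}, applied under the hypothesis that $\calM_{\calD_1, \beta}$ is $(\epsilon_1/2, \delta_1/2)$-DP, immediately yields $(\epsilon_1, \delta_1)$-DP on $H$.

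For the input release conditional on $H$, I would argue as follows. Since shuffling destroys order, $Y$ is informationally equivalent to the count vector $(\tc_k)_{k \in \Lambda \cup \{\bot\}}$, where $\Lambda$ is a deterministic function of $H$. For $k \in \Lambda$, the count has the form $\tc_k = \sum_{j=1}^{n} \mathone_{x_j = k}\,\alpha_j + z_k$ with independent noise $z_k \sim \calD_2$, so an $x_i \to x'_i$ swap alters at most the two coordinates $k = x_i$ and $k = x'_i$ by $\pm \alpha_i$. Conditioning on $H$ can only leak information about the shared sampling mask, so the worst case is the one in which $\alpha_i$ is entirely revealed; then the change per affected coordinate is a $\{0,1\}$-shift, exactly the input on which $\calM_{\calD_2, 1}$ is required to be $(\epsilon_2/2, \delta_2/2)$-DP. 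A two-coordinate accounting (one decrement, one increment) then gives $(\epsilon_2, \delta_2)$-DP for $Y \mid H$ uniformly over $H$, and basic composition delivers $(\epsilon_1 + \epsilon_2, \delta_1 + \delta_2)$-DP for $(H, Y) = \calG^{\FME\plain}_{\calD^*, \beta}(D)$.

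Robustness to collusion with users follows because each $u_i$'s contribution $\nu_i = (h(x_i), x_i)$ depends on $x_i$ alone: for any $\Omega$-neighboring $D, D'$ the tuple $(\nu_i)_{i \in \Omega}$ is identical under both, so appending it to the output is data-independent post-processing and the DP parameters carry over verbatim. The main obstacle is the conditional step for $Y \mid H$: because the sampling mask is shared between the hash and input releases, the reduction to a ``fully revealed mask'' worst case must be done carefully---this is precisely why the hypothesis invokes $\calM_{\calD_2, 1}$ rather than $\calM_{\calD_2, \beta}$ (which could otherwise have offered subsampling amplification), and pinning down the accounting so that neither $\epsilon_2$ nor $\delta_2$ incurs extra slack is the delicate point of the proof.
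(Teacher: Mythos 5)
Your proof follows essentially the same route as the paper's: decompose the plaintext output into the shuffled-hash release (treated as the LNF/CH protocol applied to the hashed database, which stays neighboring under $h$ and yields $(\epsilon_1,\delta_1)$-DP via the two-coordinate doubling) and the shuffled-input release (treated as the LNF protocol with sampling probability $1$ on the selected items, yielding $(\epsilon_2,\delta_2)$-DP), then apply basic composition and observe that the colluding users' messages are identical on $\Omega$-neighboring databases. The shared-sampling-mask subtlety you flag --- which forces the hypothesis on $\calM_{\calD_2,1}$ rather than $\calM_{\calD_2,\beta}$ --- is exactly the point the paper encodes by crediting the sampling amplification entirely to the first stage and modeling the second stage as the $\beta=1$ LNF protocol, so the two arguments coincide in substance.
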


\begin{proof}[Proof of Lemma~\ref{lem:FME_plain_DP}]
Assume that \colorB{$\calM_{\calD_1, \beta}$ and 
$\calM_{\calD_2, 1}$ 
provide $(\frac{\epsilon_1}{2}, \frac{\delta_1}{2})$-DP and $(\frac{\epsilon_2}{2}, \frac{\delta_2}{2})$-DP, respectively.} 
Let $\colorB{\calG_{\calD^*, \beta}^{\FME\plain 1}}$ be an algorithm that takes $(x_1, \ldots, x_n)$ as input and outputs $(y^H_{\pi(1)}, \ldots, y^H_{\pi(\tn)})$. 
As with the augmented shuffler part $\calG_{\calD, \beta}^{\CH}$ of the CH protocol 
(see Appendix~\ref{sub:proof_CH_DP}), 
$\colorB{\calG_{\calD^*, \beta}^{\FME\plain 1}}$ applies a hash function $h$ to input values $x_1,\ldots,x_n$, 
runs lines 1-8 of Algorithm~\ref{alg:S_LNF}, and outputs the shuffled data. 
Thus, by Theorem~\ref{thm:CH_DP}, $\colorB{\calG_{\calD^*, \beta}^{\FME\plain 1}}$ provides 
\colorB{$(\epsilon_1, \delta_1)$-DP} 
and is robust to collusion with users. (Note that $\colorB{\calG_{\calD^*, \beta}^{\FME\plain 1}}$ provides DP rather than CDP because it outputs plaintext messages.)

Similarly, let $\colorB{\calG_{\calD^*, \beta}^{\FME\plain 2}}$ be an algorithm that takes $(x_1, \ldots, x_n)$ as input and outputs $(y^*_{\rho(1)}, \ldots, y^*_{\rho(\tn^*)})$. 
Note that $(y^*_{\rho(1)}, \ldots, y^*_{\rho(\tn^*)})$ is randomly shuffled using $\rho$. 
In addition, the data collector knows the number of bots $\bot$ in $(y^*_{\rho(1)}, \ldots, y^*_{\rho(\tn^*)})$, as she has replaced unselected items with $\bot$ by herself. 
Thus, $\bot$ in  $(y^*_{\rho(1)}, \ldots, y^*_{\rho(\tn^*)})$ does not leak any information to the data collector. 
Then, $\colorB{\calG_{\calD^*, \beta}^{\FME\plain 2}}$ is equivalent to the LNF protocol~\cite{Murakami_SP25} (with sampling probability $= 1$) applied to the selected items $\Lambda$. 
\colorB{If 
$\calM_{\calD_2, 1}$ 
provides $(\frac{\epsilon_2}{2}, \frac{\delta_2}{2})$-DP, then the LNF protocol provides $(\epsilon_2, \delta_2)$-DP and is robust to collusion with users (see Theorem~2 in~\cite{Murakami_SP25}).} 
Thus, $\colorB{\calG_{\calD^*, \beta}^{\FME\plain 2}}$ also provides 
\colorB{$(\epsilon_2, \delta_2)$-DP} 
and is robust to collusion with users. 

Then, by the basic composition theorem~\cite{DP}, \colorB{$\calG_{\calD^*, \beta}^{\FME\plain} = (\calG_{\calD^*, \beta}^{\FME\plain 1}, \calG_{\calD^*, \beta}^{\FME\plain 2})$ provides $(\epsilon, \delta)$-DP, where $(\epsilon, \delta) = (\epsilon_1 + \epsilon_2, \delta_1 + \delta_2)$, and is robust to collusion with users.} 
\end{proof}

\begin{lemma}
\label{lem:FME_CDP}
If $\colorB{\calG_{\calD^*, \beta}^{\FME\plain}}$ provides $(\epsilon, \delta)$-DP and is robust to collusion with users, then $\colorB{\calS_{\calD^*, \beta}^{\FME}}$ provides $(\epsilon, \delta)$-CDP and is robust to collusion with users. 
\end{lemma}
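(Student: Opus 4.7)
The plan is to reduce CDP of $\colorB{\calS_{\calD^*, \beta}^{\FME}}$ to DP of $\colorB{\calG_{\calD^*, \beta}^{\FME\plain}}$ via a hybrid argument that exploits the IND-CPA security of the underlying PKE scheme. The guiding observation is that the shuffler and the data collector each lack exactly one of the two private keys, so each party's view contains only a polynomial number (in $n$ and $d$) of ciphertexts that they cannot open; replacing every such ciphertext by an encryption of a fixed dummy value (say $\bot$) changes the adversary's distinguishing probability by at most $\mathsf{negl}(\gamma)$, after which the view becomes a randomized post-processing of the plaintext transcript $\colorB{\calG_{\calD^*, \beta}^{\FME\plain}}(D)$.

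I would case-split on which party is the attacker $\calA$. When the data collector is the attacker, her view comprises (i) the plaintexts $(y^H_{\pi(\cdot)})$ and $(y^*_{\rho(\cdot)})$, i.e., exactly the output of $\colorB{\calG_{\calD^*, \beta}^{\FME\plain}}$, together with (ii) the intermediate ciphertexts $E_\pks[E_\pkd[y_{\pi(\cdot)}]]$ that accompany the first-round transmission. A sequence of hybrids, each swapping one $E_\pks[\cdot]$ ciphertext, leaves her view computationally indistinguishable from one in which these ciphertexts are all $E_\pks[E_\pkd[\bot]]$; DP of $\colorB{\calG_{\calD^*, \beta}^{\FME\plain}}$ together with the immunity of DP to post-processing then yields (\ref{eq:DP_inequality_comp}) with the same $(\epsilon,\delta)$ plus a $\mathsf{negl}(\gamma)$ slack. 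When the shuffler is the attacker, her view consists of ciphertexts under $\pkd$ (which remain opaque even after she peels the outer $\pks$ layer in the second round), her own coins for sampling, dummy generation, and permutations $\pi,\rho$, together with the set $\Lambda$ received from the data collector; an analogous hybrid replaces every $\pkd$-ciphertext by $E_\pkd[\bot]$, reducing her view (modulo $\mathsf{negl}(\gamma)$) to a post-processing of $\Lambda$, which is itself a function of $\colorB{\calG_{\calD^*, \beta}^{\FME\plain}}(D)$.

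Robustness to collusion is handled by the same template applied to $\Omega$-neighboring databases: the colluding messages $(\nu_i)_{i \in \Omega} = (x_i, h(x_i))_{i \in \Omega}$ agree on $D$ and $D'$, and by hypothesis $\colorB{\calG_{\calD^*, \beta}^{\FME\plain}}$ is itself robust to collusion, so the hybrid-plus-post-processing chain carries the property through to $\colorB{\calS_{\calD^*, \beta}^{\FME}}$.

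The hard part will be the bookkeeping induced by triple encryption and the round-trip between the two servers: each input value appears in up to three distinct ciphertexts, some of which the shuffler partially strips and others of which the data collector overwrites with $E_\pks[E_\pkd[\bot]]$ during filtering. When defining the hybrids I must verify that (a) at each step the ciphertext being swapped is indeed protected by the key the current adversary does not hold, (b) the replacement remains consistent with the subsequent decryption-and-replacement performed by honest parties in later steps (in particular, the data collector's overwriting of unselected items must be simulatable without knowing the replaced plaintext), and (c) the total number of hybrid transitions stays polynomial so that the cumulative IND-CPA loss is negligible in $\gamma$.
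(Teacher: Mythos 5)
Your proposal is correct and follows essentially the same route as the paper's proof: a case split on whether the attacker is the data collector or the shuffler, a reduction of the opaque ciphertexts in each view to the (multi-message) IND-CPA security of the PKE scheme — which is exactly your per-ciphertext hybrid packaged as a single reduction — followed by observing that the residual view (the plaintext transcript for the data collector, and $\Lambda$ for the shuffler) is a post-processing of $\calG_{\calD^*, \beta}^{\FME\plain}(D)$, with the identical argument repeated for $\Omega$-neighboring databases to carry over robustness. The bookkeeping concerns you flag (which key protects each swapped ciphertext, and simulatability of the data collector's overwriting) are precisely the points the paper's two-step replacement for the shuffler's view is designed to handle, and they go through.
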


\begin{proof}[Proof of Lemma~\ref{lem:FME_CDP}]
We prove 
this lemma 
by reducing DP (and the robustness) of $\colorB{\calS_{\calD^*, \beta}^{\FME}}$ to 
the security 
of the PKE scheme. 
See Appendix~\ref{sec:computational_DP_FME} for details. 
\end{proof}

By Lemmas~\ref{lem:FME_plain_DP} and \ref{lem:FME_CDP}, Theorem~\ref{thm:FME_DP} holds. 
\qed

\subsection{Proof of Theorem~\ref{thm:FME_poisoning}}
\label{sub:proof_FME_poisoning}
Before data poisoning, $\E[\hf_i]$ can be written as 
\begin{align*}
\E[\hf_i] = \textstyle{\frac{1}{n\beta}(\E[\tc_i] - \colorB{\mu_2}).}
\end{align*}
Since $\tc_i$ is the number of item $i$ sent to the data collector, the overall gain $G_f(\bmm)$ 
$(= \sum_{i \in \calT}(\E[\hf'_i] - \E[\hf_i]))$ 
is maximized when (i) each of $n'$ fake users sends some target item $i \in \calT$ (and its hash value) and (ii) all target users are selected in the filtering step (i.e., $\calT \subseteq \Lambda$). 
Since $\colorB{\calS_{\calD^*, \beta}^{\FME}}$ outputs an unbiased estimate for any item $i \in \Lambda$ selected in the filtering step (see Theorem~\ref{thm:FME_accuracy}), we have 
\begin{align*}
\GMGA = \textstyle{\max_{\bmm} \sum_{i \in \calT}(\E[\hf'_i] - \E[\hf_i]) = \sum_{i \in \calT}(f'_i - \E[\hf_i])}
\end{align*}
(by (ii)), where $f'_i = \frac{1}{n+n'}(\sum_{j=1}^n \mathone_{x_j = i} + \sum_{j=1}^{n'} \mathone_{y_j = i})$ and $y_j$ is an input value of the $j$-th fake user sent to the shuffler. 
In addition, we have $(n+n') (\sum_{i \in \calT} f'_i) = n (\sum_{i \in \calT} f_i) + n'$ in this case (by (i)). 
Moreover, we have $\E[\hf_i] = (1 - \eta_i) \cdot f_i + \eta_i \cdot 0$, as $i \in \Lambda$ with probability $1 - \eta_i$. 
Thus, we have 
\begin{align*}
\GMGA 
&= \textstyle{\frac{n'}{n+n'} + \frac{n}{n+n'} (\sum_{i \in \calT} f_i) - \sum_{i \in \calT} (1 - \eta_i) f_i} \\
&= \textstyle{\lambda (1 - f_\calT) + \sum_{i\in\calT} \eta_i f_i,}
\end{align*}
which proves (\ref{eq:FME_GMGA}). 
\qed

\smallskip{}
\noindent{\textbf{Probability $\eta_i$.}}~~The probability $\eta_i$ in (\ref{eq:FME_GMGA}) 
rapidly decreases as $f_i$ increases. 
For example, assume that $l=b$. 
In this case, the $i$-th item is selected if the count of the corresponding hash value is larger than or equal to $c_{th}$. 
Since $n f_i$ users have the $i$-th item and each of them is selected by random sampling with probability $\beta$, the probability $\eta_i$ is upper bounded as 
\begin{align*}
\eta_i \leq \Pr(B(nf_i,\beta) < c_{th})
\end{align*}
When $c_{th} = (1 - \zeta) nf_i \beta$ for some $\zeta \in [0,1]$, we can use the Chernoff bound as follows: 
\begin{align*}
\eta_i \leq \Pr(B(nf_i,\beta) < c_{th}) \leq e^{\frac{-\zeta^2 nf_i \beta}{2}},
\end{align*}
which means that $\eta_i$ decreases exponentially as $f_i$ increases. 

\subsection{Proof of Theorem~\ref{thm:FME_communication}}
In $\calS_{\FME}$, user $u_i$ sends $\la E_\pkd[h(x_i)], E_\pkd[E_\pks[E_\pkd[x_i]]] \ra$, whose size is $\tau_1 + \tau_3$, to the shuffler. 
Thus, $C_{U-S} = (\tau_1 + \tau_3) n$. 
Then, the shuffler samples $\la E_\pkd[h(x_i)], \allowbreak E_\pkd[E_\pks[E_\pkd[x_i]]] \ra$ with probability $\beta$ and adds $\colorB{\mu_1}$ dummies on average for each $i \in [b]$. 
Therefore, the size of the first shuffled data 
is $(\tau_1 + \tau_3)(\beta n + \colorB{\mu_1} b)$. 
Then, the data collector 
sends the set $\Lambda$ of selected items and the second shuffled data $(E_\pks[E_\pkd[y_{\pi(1)}]], \ldots, \allowbreak E_\pks[E_\pkd[y_{\pi(\tn)}]])$ to the shuffler. 
The expected size of these data is 
at most $\tau_1 \E[|\Lambda|] + \tau_2 (\beta n + \colorB{\mu_1} b)$, as the size of each item in $\Lambda$ is at most $\tau_1$. 
Finally, the shuffler adds $\colorB{\mu_2}$ dummies on average for each $i \in \Lambda$ and sends the third shuffled data, whose expected size is 
at most 
$\tau_1 (\beta n + \colorB{\mu_1} b + \colorB{\mu_2} \E[|\Lambda|])$, 
to the data collector. 
Therefore, the communication cost between the shuffler and the data collector is 
$C_{S-D} 
\leq 
(2 \tau_1 + \tau_2 + \tau_3)(\beta n + \colorB{\mu_1} b) + \tau_1 (\colorB{\mu_2} + 1) \E[|\Lambda|]$. 

The remaining issue is an upper bound on the expected number $\E[|\Lambda|]$ of selected items. 
Below, we consider average values in all cases. 
Assume that $\beta n \leq l \leq b$. 
After random sampling, $\beta n$ input values are selected, and there are at most $\beta n$ 
corresponding hash values. 
All of these hash values can be selected in the filtering step. 
In addition, other $(l - \beta n)$ hash values 
can be selected with probability $\alpha$ in the filtering step. 
Note that $\calS_{\FME}$ uses a universal hash function $h$. 
Therefore, for each hash value $j \in [b]$, there are on average $\frac{d}{b}$ items $i \in [d]$ that support it (i.e., $j=h(i)$). 
Thus, $\E[|\Lambda|]$ can be upper bounded as follows: 
$\E[|\Lambda|] \leq \frac{(\beta n + \alpha (l - \beta n))d}{b}$. 

Next, assume that $l \leq \beta n$. 
In this case, at most $l$ hash values are selected in the filtering step. 
Thus, we have $\E[|\Lambda|] \leq \frac{ld}{b}$, which proves Theorem~\ref{thm:FME_communication}. 
\qed

\subsection{Proof of Theorem~\ref{thm:FME_accuracy}}
\label{sub:proof_FME_accuracy}

\noindent{\textbf{Expectation and Variance of $\hf_i$.}}~~In $\colorB{\calS_{\calD^*,\beta}^{\FME}}$, $n f_i$ users have item $i$, and each of them is selected by random sampling with probability $\beta$. 
In addition, for any selected item $i \in \Lambda$, $z_i$ dummy values are added after filtering, and $\E[z_i] = \colorB{\mu_2}$. 
Thus, for $i \in \Lambda$, 
\begin{align*}
\E[\hf_i | \Lambda] 
= \textstyle{\frac{1}{n\beta}(\E[\tc_i | \Lambda] - \colorB{\mu_2})}
= \textstyle{\frac{1}{n\beta}(n f_i \beta + \colorB{\mu_2} - \colorB{\mu_2})}
= f_i, 
\end{align*}
which means that $\hf_i$ is unbiased. 

Next, we analyze 
the variance. 
For any selected item $i \in \Lambda$, $\V[\hf_i | \Lambda]$ can be written as 
\begin{align}
\V[\hf_i | \Lambda] = \textstyle{\frac{\V[\tc_i | \Lambda]}{n^2 \beta^2}}.
\label{eq:FME_l2_loss_Lambda}
\end{align}
By the law of total variance, 
$\V[\tc_i | \Lambda]$ 
can be written as 
\begin{align}
\V[\tc_i | \Lambda] 
&= \E[\V[\tc_i | z_i, \Lambda]] + \V[\E[\tc_i | z_i, \Lambda]] \nonumber\\
&= \E[n f_i \beta (1 - \beta)] + \V[nf_i \beta + z_i] \nonumber\\
&= n f_i \beta (1 - \beta) + \V[z_i] 
= n f_i \beta (1 - \beta) + \colorB{\sigma_2^2}.
\label{eq:FME_V_Lambda}
\end{align}
By (\ref{eq:FME_l2_loss_Lambda}) and (\ref{eq:FME_V_Lambda}), we have 
\begin{align*}
\V[\hf_i | \Lambda] 
= \textstyle{\frac{f_i (1 - \beta)}{n \beta} + \frac{\colorB{\sigma_2^2}}{n^2 \beta^2}}. 
\end{align*}

\smallskip{}
\noindent{\textbf{Expected Squared Error.}}~~Item $i$ is not selected in the filtering step (i.e., $i \notin \Lambda$) with probability $\eta_i$, and $\hf_i = 0$ in this case. 
Thus, the expected squared error is written as 
\begin{align*}
\textstyle{\E[(\hf_i - f_i)^2] = (1 - \eta_i)\V[\hf_i | \Lambda] + \eta_i f_i^2 }. 
\end{align*}
\qed

\section{Computational DP of the FME PROTOCOL}
\label{sec:computational_DP_FME}

We formally prove that our FME protocol $\colorB{\calS_{\calD^*, \beta}^{\FME}}$ provides $(\epsilon,\delta)$-CDP and is robust to collusion with users if plaintext messages for the data collector 
provide $(\epsilon,\delta)$-DP and are robust to collusion with users (Lemma~\ref{lem:FME_CDP} in Appendix~\ref{sec:proofs_proposed}). 
Specifically, we reduce CDP (and the robustness) of $\colorB{\calS_{\calD^*, \beta}^{\FME}}$ to 
the security 
of the PKE scheme. 

\colorB{Below, we assume that an attacker $\calA$ is either the shuffler or the data collector without loss of generality, as they receive messages during the protocol (unlike other parties, such as users except the victim and parties outside of the system).} 

\subsection{Basic Facts on Public-Key Encryption}
A PKE (Public-Key Encryption) 
scheme consists of three algorithms $\Sigma=(\mathsf{Gen},\mathsf{Enc},\mathsf{Dec})$.
Consider the following security game between a challenger $\mathcal{C}$ and an attacker $\mathcal{A}$:
\begin{itemize}
    \item $\mathcal{C}$ computes a pair of public and secret keys: $(\mathsf{pk},\mathsf{sk})\leftarrow\mathsf{Gen}(1^\lambda)$.
    \item $\mathcal{C}$ gives $\mathsf{pk}$ to $\mathcal{A}$.
    \item $\mathcal{A}$ gives a pair $(m_0,m_1)$ of challenge messages of the same length to $\mathcal{C}$.
    \item $\mathcal{C}$ chooses $b\in\{0,1\}$ uniformly at random, computes a ciphertext $c_b\leftarrow\mathsf{Enc}(\mathsf{pk},m_b)$, and gives $c_b$ to $\mathcal{A}$.
    \item $\mathcal{A}$ outputs a bit $b'$.    
\end{itemize}
We say that $\Sigma$ is IND-CPA (Indistinguishability under Chosen-Plaintext Attack) secure \cite{Katz_book} if there exists a negligible function $\mathsf{negl}(\lambda)$ such that
\begin{align*}
    \Pr[b'=b]\leq\textstyle{\frac{1}{2}+\mathsf{negl}(\lambda)}
\end{align*} 
in the above security game.

It is known that an IND-CPA secure PKE scheme generally satisfies multi-message security.
Specifically, 
consider the following modified security game between 
$\mathcal{C}$ and 
$\mathcal{A}$:
\begin{itemize}
    \item $\mathcal{C}$ computes a pair of public and secret keys: $(\mathsf{pk},\mathsf{sk})\leftarrow\mathsf{Gen}(1^\lambda)$.
    \item $\mathcal{C}$ gives $\mathsf{pk}$ to $\mathcal{A}$.
    \item $\mathcal{A}$ gives a pair $(\mathbf{m}_0,\mathbf{m}_1)$ of two tuples of challenge messages of the same length to $\mathcal{C}$, where $\mathbf{m}_b=(m_{b,1},\ldots,m_{b,p})$.
    \item $\mathcal{C}$ chooses $b\in\{0,1\}$ uniformly at random, computes a tuple of ciphertexts $\mathbf{c}_b=(c_{b,1},\ldots,c_{b,p})$ where $c_{b,i}\leftarrow\mathsf{Enc}(\mathsf{pk},m_{b,i})$, and gives $\mathbf{c}_b$ to $\mathcal{A}$.
    \item $\mathcal{A}$ outputs a bit $b'$.    
\end{itemize}
Then, there exists a negligible function $\mathsf{negl}(\lambda)$ such that
\begin{align*}
    \Pr[b'=b]\leq\textstyle{\frac{1}{2}+\mathsf{negl}(\lambda)}
\end{align*} 
in the above security game.
In the following, we simply denote a ciphertext of $x$ by $E_{\mathsf{pk}}[x]=\mathsf{Enc}(\mathsf{pk},x)$.

\subsection{DP for the Data Collector}

Let $(x_1,\ldots,x_n)$ be inputs of $n$ users.
Observe that the view of the data collector in the 
FME 
protocol consists of
\begin{align*}
    &(E_{\pkd}[y_{\pi(i)}^H])_{i\in[\tn]},~
    (E_{\pkd}[E_{\pks}[E_{\pkd}[y_{\pi(i)}]]])_{i\in[\tn]},~\\
    &(E_{\pkd}[y_{\rho(i)}^*])_{i\in[\tn^*]}.
\end{align*}
This is simulated from the following elements
\begin{align}\label{view:1}
    (y_{\pi(i)}^H)_{i\in[\tn]},~
    (E_{\pks}[E_{\pkd}[y_{\pi(i)}]])_{i\in[\tn]},~
    (y_{\rho(i)}^*)_{i\in[\tn^*]}.
\end{align}
We argue that the elements~(\ref{view:1}) are computationally indistinguishable from 
\begin{align}\label{view:2}
    (y_{\pi(i)}^H)_{i\in[\tn]},~
    (E_{\pks}[E_{\pkd}(\bot)])_{i\in[\tn]},~
    (y_{\rho(i)}^*)_{i\in[\tn^*]},
\end{align}
where all the $y_{\pi(i)}$'s are replaced with $\bot$.
Indeed, assume otherwise that there exists an efficient algorithm $\mathcal{A}$ that distinguishes (\ref{view:1}) and (\ref{view:2}) with non-negligible advantage.
Then, we can construct an attacker $\mathcal{A}'$ breaking the multi-message security of the underlying PKE scheme (with respect to the public key $\pks$) as follows:
Given the public key $\pks$, $\mathcal{A}'$ computes $(y_{\pi(i)}^H)_{i\in[\tn]},(y_{\pi(i)})_{i\in[\tn]},(y_{\rho(i)}^*)_{i\in[\tn^*]}$ by simulating the execution of the protocol involving users with inputs $(x_i)_{i\in[n]}$, the shuffler, and the data collector.
Then, $\mathcal{A}'$ chooses $\mathbf{m}_0=(E_{\pkd}[y_{\pi(i)}])_{i\in[\tn]}$ and $\mathbf{m}_1=(E_{\pkd}[\bot])_{i\in[\tn]}$ as two tuples of challenge messages.
Upon receiving ciphertexts $\mathbf{c}_b$ encrypting $\mathbf{m}_b$, $\mathcal{A}'$ gives $(y_{\pi(i)}^H)_{i\in[\tn]},\mathbf{c}_b,(y_{\rho(i)}^*)_{i\in[\tn^*]}$ to $\mathcal{A}$ and outputs the bit that it outputs.
If $b=0$, then $\mathcal{A}$ receives (\ref{view:1}). 
If $b=1$, then it receives (\ref{view:2}). 
Thus, $\mathcal{A}'$ breaks the multi-message security of the PKE scheme. 

By assumption, 
the mechanism outputting $(y_{\pi(i)}^H)_{i\in[\tn]}, \allowbreak (y_{\rho(i)}^*)_{i\in[\tn^*]}$ in (\ref{view:2}) provides $(\epsilon,\delta)$-DP and is robust to collusion with users.
Since the other elements in (\ref{view:2}) are independent of users' inputs, a mechanism outputting the elements in (\ref{view:2}) also provides $(\epsilon,\delta)$-DP and is robust to collusion with users.
Therefore, 
a mechanism outputting the elements in (\ref{view:1}) (i.e., the view of the data collector) 
provides $(\epsilon,\delta)$-CDP and is robust to collusion with users. 
\qed

\subsection{DP for the Shuffler}

Let $(x_1,\ldots,x_n)$ be inputs of $n$ users.
Observe that the view of the shuffler in the 
FME 
protocol is simulated from the following elements:
\begin{align}
    &(E_{\pkd}[h(x_i)])_{i\in[n]},~
    (E_{\pkd}[E_{\pks}[E_{\pkd}[x_i]]])_{i\in[n]},~\nonumber\\
    &
    \Lambda,~
    (E_{\pks}[E_{\pkd}[y_{\pi(i)}]])_{i\in[\tn]}.\label{view:3}
\end{align}
First, we consider the elements obtained by 
replacing $(E_{\pkd}[h(x_i)])_{i\in[n]}$ and $(E_{\pkd}[E_{\pks}[E_{\pkd}[x_i]]])_{i\in[n]}$ in (\ref{view:3}) with $(E_{\pkd}[h(a)])_{i\in[n]}$ and $(E_{\pkd}[E_{\pks}[E_{\pkd}[a]]])_{i\in[n]}$, respectively, where $a$ is a default value independent of users' inputs (e.g., $a=1$), i.e., 
\begin{align}
    &(E_{\pkd}[h(a)])_{i\in[n]},~
    (E_{\pkd}[E_{\pks}[E_{\pkd}[a]]])_{i\in[n]},~\nonumber\\
    &
    \Lambda,~
    (E_{\pks}[E_{\pkd}[y_{\pi(i)}]])_{i\in[\tn]},\label{view:4}
\end{align}
We argue that the elements in (\ref{view:3}) are computationally indistinguishable from those in (\ref{view:4}).
Indeed, assume that there exists a distinguisher $\mathcal{A}$ between (\ref{view:3}) and (\ref{view:4}).
We can construct an attacker $\mathcal{A}'$ against the multi-message security of the underlying PKE scheme (with respect to the public key $\pkd$) as follows:
Given the public key $\pkd$, $\mathcal{A}'$ computes $(h(x_i))_{i\in[n]},
\Lambda,(y_{\pi(i)})_{i\in[\tn]}$ by simulating the execution of the protocol involving users with inputs $(x_i)_{i\in[n]}$, the shuffler, and the data collector.
$\mathcal{A}'$ also computes $(E_{\pks}[E_{\pkd}[x_i]])_{i\in[n]}$ and $(E_{\pks}[E_{\pkd}[y_{\pi(i)}]])_{i\in[\tn]}$ following the same procedure as the protocol.
Then, $\mathcal{A}'$ chooses $\mathbf{m}_0=((h(x_i))_{i\in[n]},(E_{\pks}[E_{\pkd}[x_i]])_{i\in[n]})$ and $\mathbf{m}_1=((h(a))_{i\in[n]},(E_{\pks}[E_{\pkd}[a]])_{i\in[n]})$ as two tuples of challenge messages.
Upon receiving ciphertexts $\mathbf{c}_b$ encrypting $\mathbf{m}_b$, $\mathcal{A}'$ gives 
\begin{align*}
    &\mathbf{c}_b,~
    \Lambda,~
    (E_{\pks}[E_{\pkd}[y_{\pi(i)}]])_{i\in[\tn]}
\end{align*}
to $\mathcal{A}$ and outputs the bit that it outputs.
If $b=0$, then $\mathcal{A}$ receives (\ref{view:3}). 
If $b=1$, then it receives (\ref{view:4}). 
Thus, $\mathcal{A}'$ breaks the multi-message security of the PKE scheme with the same advantage as $\mathcal{A}$. 

Second, we consider the elements obtained by 
replacing $(E_{\pks}[E_{\pkd}[y_{\pi(i)}]])_{i\in[\tn]}$ in (\ref{view:4}) with $(E_{\pks}[E_{\pkd}[\bot]])_{i\in[\tn]}$, i.e., 
\begin{align}
    &(E_{\pkd}[h(a)])_{i\in[n]},~
    (E_{\pkd}[E_{\pks}[E_{\pkd}[a]]])_{i\in[n]},~\nonumber\\
    &
    \Lambda,~
    (E_{\pks}[E_{\pkd}[\bot]])_{i\in[\tn]},\label{view:5}
\end{align}
It follows from a similar argument that the elements in (\ref{view:4}) are computationally indistinguishable from those in (\ref{view:5}).
If there exists a distinguisher $\mathcal{A}$ between (\ref{view:4}) and (\ref{view:5}), then we can construct an attacker $\mathcal{A}'$ against the PKE scheme (with respect to the public key $\pkd$):
$\mathcal{A}'$ prepares 
\begin{align*}
    &(E_{\pkd}[h(a)])_{i\in[n]},~
    (E_{\pkd}[E_{\pks}[E_{\pkd}[a]]])_{i\in[n]},~
    \Lambda,
\end{align*}
and the $y_{\pi(i)}$'s, following the same process as (\ref{view:4}).
Then, $\mathcal{A}'$ chooses $\mathbf{m}_0=(y_{\pi(i)})_{i\in[\tn]}$ and $\mathbf{m}_1=(\bot)_{i\in[\tn]}$ as two tuples of challenge messages.
Upon receiving ciphertexts $\mathbf{c}_b=(c_{b,i})_{i\in[\tn]}$ encrypting $\mathbf{m}_b$, $\mathcal{A}'$ gives 
\begin{align*}
    &\mathbf{c}_b,~
    \Lambda,~
    (E_{\pks}[c_{b,i}])_{i\in[\tn]}
\end{align*}
to $\mathcal{A}$, where $E_{\pks}[c_{b,i}]$ is computed by $\mathcal{A}'$ itself.
$\mathcal{A}'$ finally outputs the bit that $\mathcal{A}$ outputs.
The reduction correctly works since if $b=0$, then $\mathcal{A}$ receives the elements in (\ref{view:4}) and if $b=1$, then it receives those in (\ref{view:5}).

By assumption, 
the view of the data collector 
provides 
$(\epsilon,\delta)$-CDP and is robust to collusion with users. 
Thus, 
a mechanism outputting (\ref{view:5}) also achieves 
$(\epsilon,\delta)$-CDP and is robust to collusion with users, as 
$\Lambda$ 
is included in the view of the data collector and the other elements are independent of users' inputs.
We therefore conclude that a mechanism outputting the elements in (\ref{view:3}) (i.e., the view of the shuffler) 
provides $(\epsilon,\delta)$-CDP and is robust to collusion with users. 
\qed

\section{Computational Cost of the FME Protocol}
\label{sec:FME_computational_cost}
Below, we analyze the computational cost of the FME protocol $\colorB{\calS_{\calD^*, \beta}^{\FME}}$. 
In our analysis, we use a hash function $h$ used in \cite{Luo_CCS22} and in our experiments. 
Specifically, we use $h$ defined by $h(x) = ((a_1 x + a_0) \bmod p) \bmod b$, where $p \in [d, 2d)$ is a prime, $a_1 \in [p-1]$, and $a_0 \in [p]$. 
Let $y = (a_1 x + a_0) \bmod p$. 

In $\colorB{\calS_{\calD^*, \beta}^{\FME}}$, we need to calculate the set $\Lambda$ of selected items corresponding to the set $\Lambda^H$ of selected hash values (Algorithm~\ref{alg:S_FME}, line 12). 
For each hash value $z \in \Lambda^H$, we find the corresponding items $x$ such that $z = h(x)$. 
This can be done as follows: 
(i) calculate all values of $y$ such that $z = y \bmod b$ (there are about $\frac{p}{b}$ such values), and 
(ii) for each value of $y$, calculate $x$ such that $x = a_1^{-1} y - a_0 \bmod p$. 
Assume that $a_1^{-1}$ is precomputed, e.g., by using the extended Euclidean algorithm. 
Then, the steps (i) and (ii) can be computed in $O(\frac{d}{b})$ time. 
Thus, $\Lambda$ can be calculated from $\Lambda^H$ in $O(\frac{ld}{b})$ time. 
The remaining process of $\colorB{\calS_{\calD^*, \beta}^{\FME}}$ can be calculated in $O(n + b + |\Lambda|)$ time. 
Thus, the computational cost of $\colorB{\calS_{\calD^*, \beta}^{\FME}}$ can be expressed as $O(n + b + |\Lambda| + \frac{ld}{b})$ in total. 

For example, 
when $l < \beta n$, the optimal value of $b$ can be expressed as $b = O(\sqrt{ld})$ (see Section~\ref{sub:FME_optimization}). 
In addition, $\E[|\Lambda|] \leq \frac{ld}{b}$ in this case. 
Thus, the computational cost of $\colorB{\calS_{\calD^*, \beta}^{\FME}}$ is $O(n + \sqrt{ld})$. 

\section{Proofs of Statements in Section~\ref{sec:key-value}}
\label{sec:proofs_key-value}
\subsection{Proof of Theorem~\ref{thm:KV_DP}}
Let $\colorB{\calS_{\calD^*, \beta}^{\FME+}}$ be our FME protocol with the TKV-FK technique. 
$\colorB{\calS_{\calD^*, \beta}^{\FME+}}$ only differs from $\colorB{\calS_{\calD^*, \beta}^{\FME}}$ (Algorithm~\ref{alg:S_FME}) in that it filters data at a key level. 

Let $\colorB{\calG_{\calD^*, \beta}^{\FME\plain 1+}}$ be an algorithm that takes 
$(\langle k_1, v_1 \rangle, \ldots, \allowbreak \langle k_n, v_n \rangle)$ 
as input, applies a hash function $h: [d] \rightarrow [b]$ to keys $k_1, \ldots, k_n \in [d]$, and outputs shuffled hash values. 
As with the CH protocol, 
$\colorB{\calG_{\calD^*, \beta}^{\FME\plain 1+}}$ applies $h$ to input keys $k_1,\ldots,k_n$, 
runs lines 1-8 of Algorithm~\ref{alg:S_LNF}, and outputs the shuffled data. 
Thus, 
$\colorB{\calG_{\calD^*, \beta}^{\FME\plain 1+}}$ provides 
\colorB{$(\epsilon_1, \delta_1)$-DP} 
and is robust to collusion with users in the same way as $\colorB{\calG_{\calD^*, \beta}^{\FME\plain 1}}$ in the proof of Lemma~\ref{lem:FME_plain_DP}. 
The remaining part of $\colorB{\calS_{\calD^*, \beta}^{\FME+}}$ is the same as that of $\colorB{\calS_{\calD^*, \beta}^{\FME}}$. 
Therefore, $\colorB{\calS_{\calD^*, \beta}^{\FME+}}$ provides $(\epsilon, \delta)$-CDP and is robust to collusion with users in the same way as $\colorB{\calS_{\calD^*, \beta}^{\FME}}$. 

Let $\calS^{\PS}$ be the padding-and-sampling. 
Then, $\colorB{\calS_{\calD^*, \beta}^{\KV}}$ can be regarded as $\colorB{\calS_{\calD^*, \beta}^{\FME+}}$ preceded with $\calS^{\PS}$, i.e., $\colorB{\calS_{\calD^*, \beta}^{\KV}} = \colorB{\calS_{\calD^*, \beta}^{\FME+}} \circ \calS^{\PS}$. 
Let $D = (x_1, \ldots, x_n)$ and $D' = (x'_1, \ldots, x'_n)$ two neighboring databases. 
Let $D_S = (s_1, \ldots, s_n)$ and $D'_S = (s'_1, \ldots, s'_n)$ be the corresponding databases after applying $\calS^{\PS}$. 
The padding-and-sampling $\calS^{\PS}$ selects one KV pair $s_i$ from a set of KV pairs $x_i$ for each user $u_i$. 
Thus, $D_S$ and $D'_S$ remain neighboring after applying $\calS^{\PS}$; i.e., $D_S$ and $D'_S$ also differ on (at most) one entry. 

Since $\colorB{\calS_{\calD^*, \beta}^{\KV}} = \colorB{\calS_{\calD^*, \beta}^{\FME+}} \circ \calS^{\PS}$ and $D_S$ and $D'_S$ remain neighboring after applying $\calS^{\PS}$, $\colorB{\calS_{\calD^*, \beta}^{\KV}}$ also provides $(\epsilon, \delta)$-CDP and is robust to collusion with users. 
\qed

\subsection{Proof of Theorem~\ref{thm:KV_poisoning}}
\label{sub:proof_thm_KV_poisoning}
For $i\in[d]$, let $\tc_{i,1}'$ (resp.~$\tc_{i,-1}'$) $\in \nnints$ be the number of KV pairs $\la i, 1 \ra$ (resp.~$\la i, -1 \ra$) the data collector receives after data poisoning. 
Since $i \in \Lambda$ with probability $1 - \eta_i$ and 
$\E[\hPhi_i | \Lambda] = \Phi_i$ for any $i \in \Lambda$ (see Theorem~\ref{thm:KV_accuracy}), we have $\E[\hPhi_i] = (1 - \eta_i) \cdot \Phi_i + \eta_i \cdot 0$. 
Thus, the frequency gain $G_\Phi(\bmm)$ can be written as 
\begin{align}
G_\Phi(\bmm) 
&= \textstyle{\sum_{i \in \calT} (\E[\hPhi'_i] - \E[\hPhi_i])} \nonumber\\
&= \textstyle{\sum_{i \in \calT} ( \frac{\kappa}{(n+n')\beta} (\E[\tc_{i,1}'] + \E[\tc_{i,-1}'] - 2\colorB{\mu_2})} \nonumber\\
& \hspace{16mm} 
\textstyle{- (1 - \eta_i) \Phi_i)}. 
\label{eq:KV_G_Phi_bmm1}
\end{align}
Similarly, the mean gain $G_\Psi(\bmm)$ can be written as 
\begin{align}
G_\Psi(\bmm) 
&= \textstyle{\sum_{i \in \calT} (\E[\hPsi'_i] - \E[\hPsi_i])} \nonumber\\
&\approx \textstyle{\sum_{i \in \calT} \left( \frac{\kappa}{(n+n')\beta} \cdot \frac{\E[\tc_{i,1}'] - \E[\tc_{i,-1}']}{\E[\hPhi'_i]} - \E[\hPsi_i] \right)} \nonumber\\
&= \textstyle{\sum_{i \in \calT} \left( \frac{\E[\tc_{i,1}'] - \E[\tc_{i,-1}']}{\E[\tc_{i,1}'] + \E[\tc_{i,-1}'] - 2\colorB{\mu_2}} - \E[\hPsi_i] \right)}. \nonumber\\
&\lesssim \textstyle{\sum_{i \in \calT} \left( \frac{\E[\tc_{i,1}'] - \E[\tc_{i,-1}']}{\E[\tc_{i,1}'] + \E[\tc_{i,-1}'] - 2\colorB{\mu_2}} - \Psi_i \right)},
\label{eq:KV_G_Psi_bmm1}
\end{align}
where the last inequality is obtained from Taylor expansion $\E[\frac{X}{Y}] \approx \frac{\E[X]}{\E[Y]}$ for two random variables $X$ and $Y$. 
The last inequality holds because $\E[\hPsi_i | \Lambda] \approx \Psi_i$ for any $i \in \Lambda$ (see Theorem~\ref{thm:KV_accuracy}) and 
$\hPsi_i = 1$ for any $i \notin \Lambda$. 

Both (\ref{eq:KV_G_Phi_bmm1}) and (\ref{eq:KV_G_Psi_bmm1}) are maximized 
when (i) each of $n'$ fake users sends $\la i, 1 \ra$ (and its hash value) for a randomly selected target item $i \in \calT$ and (ii) all target users are selected in the filtering step (i.e., $\calT \subseteq \Lambda$). 
In this case, (on average) $\frac{n'}{|\calT|}$ fake users select item $\la i, 1 \ra$, and each of them is selected with probability $\beta$ after random sampling. 
Thus, for $i \in \calT$, we have 
\begin{align}
\E[\tc_{i,1}'] = \textstyle{\E[\tc_{i,1}] + \frac{\beta n'}{|\calT|}}, ~~ 
\E[\tc_{i,-1}'] = \textstyle{\E[\tc_{i,-1}]}. 
\label{eq:KV_E_tc_i_pm1}
\end{align}

Below, we calculate $\E[\tc_{i,1}]$ and $\E[\tc_{i,-1}]$. 
For each user $u_j \in \calU_i$, key $i$ is selected with probability $\frac{\beta}{\xi_j}$ after padding-and-sampling and random sampling, and the corresponding value becomes $1$ (resp.~$-1$) with probability $\frac{1 + \psi_{j,i}}{2}$ (resp.~$\frac{1 - \psi_{j,i}}{2}$). 
In addition, (on average) $\colorB{\mu_2}$ dummy values are added to each of $\la i, 1 \ra$ and $\la i, -1 \ra$. 
Thus, 
$\E[\tc_{i,1}]$ and $\E[\tc_{i,-1}]$ in (\ref{eq:KV_E_tc_i_pm1}) are: 
\begin{align}
\E[\tc_{i,1}] &= \textstyle{(\sum_{u_j \in \calU_i} \frac{\beta}{\xi_j} \cdot \frac{1 + \psi_{j,i}}{2}) + \colorB{\mu_2}} \label{eq:KV_E_tc_i_p1_general}\\
\E[\tc_{i,-1}] &= \textstyle{(\sum_{u_j \in \calU_i} \frac{\beta}{\xi_j} \cdot \frac{1 - \psi_{j,i}}{2}) + \colorB{\mu_2}}. \label{eq:KV_E_tc_i_m1_general}
\end{align}
By (\ref{eq:KV_G_Phi_bmm1}), (\ref{eq:KV_G_Psi_bmm1}), (\ref{eq:KV_E_tc_i_pm1}), (\ref{eq:KV_E_tc_i_p1_general}), and (\ref{eq:KV_E_tc_i_m1_general}), 
the maximum gains are: 
\begin{align}
\GMTGAPhi &
= \textstyle{\sum_{i \in \calT} \left( \frac{\kappa}{(n+n')\beta} ((\sum_{u_j \in \calU_i} \frac{\beta}{\xi_j}) + \frac{\beta n'}{|\calT|}) - (1 - \eta_i) \Phi_i \right)} \nonumber\\
&= \textstyle{\frac{\kappa}{n+n'} \left( (\sum_{i \in \calT} \sum_{u_j \in \calU_i} \frac{1}{\xi_j}) + n' \right) \hspace{-0.5mm} - \hspace{-0.5mm} \Phi_\calT \hspace{-0.5mm} + \hspace{-0.5mm} \sum_{i \in \calT} \eta_i \Phi_i} \nonumber
\end{align}
\begin{align}
\GMTGAPsi 
&\approx 
\sum_{i \in \calT} \left( \frac{(\sum_{u_j \in \calU_i} \frac{\beta \psi_{j,i}}{\xi_j}) + \frac{\beta n'}{|\calT|}}{(\sum_{u_j \in \calU_i} \frac{\beta}{\xi_j}) + \frac{\beta n'}{|\calT|}} - \Psi_i \right) \nonumber\\
&= 
\left( \sum_{i \in \calT} \frac{(\sum_{u_j \in \calU_i} \frac{\psi_{j,i}}{\xi_j}) + \frac{n'}{|\calT|}}{(\sum_{u_j \in \calU_i} \frac{1}{\xi_j}) + \frac{n'}{|\calT|}} \right) - \Psi_\calT, \nonumber
\end{align}
which proves (\ref{eq:KV_GMTGAPhi_general}) and (\ref{eq:KV_GMTGAPsi_general}). 
\qed

\smallskip{}
\noindent{\textbf{Comparison with the Existing KV Protocols.}}~~The prior work \cite{Wu_USENIX22} analyzes the robustness of the existing KV protocols when the number $|x_i|$ of KV pairs does not exceed the padding length $\kappa$ for any $i \in [n]$. 
In this case, $\xi_j = \max\{|x_j|, \kappa\} = \kappa$, and 
(\ref{eq:KV_GMTGAPhi_general}) and (\ref{eq:KV_GMTGAPsi_general}) can be simplified as follows: 
\begin{align}
\GMTGAPhi &
\hspace{-0.5mm}=\hspace{-0.5mm} \textstyle{\frac{\kappa}{n+n'} \left( (\sum_{i \in \calT} \sum_{u_j \in \calU_i} \frac{1}{\kappa}) + n' \right) \hspace{-0.5mm} - \hspace{-0.5mm} \Phi_\calT \hspace{-0.5mm} + \hspace{-0.5mm} \sum_{i \in \calT} \eta_i \Phi_i} \nonumber\\
&=\hspace{-0.5mm} \textstyle{\lambda (\kappa - \Phi_\calT) + \sum_{i \in \calT} \eta_i \Phi_i} \nonumber\\
& \text{(as $\textstyle{\lambda = \frac{n'}{n+n'}}$, $\textstyle{\Phi_\calT = \sum_{i \in \calT} \Phi_i}$, and $|\calU_i| = n \Phi_i$)}
\label{eq:GMTGAPhi_simple}
\end{align}
\begin{align}
\GMTGAPsi 
&\approx \left( \sum_{i \in \calT} \frac{(\sum_{u_j \in \calU_i} \frac{\psi_{j,i}}{\kappa}) + \frac{n'}{|\calT|}}{(\sum_{u_j \in \calU_i} \frac{1}{\kappa}) + \frac{n'}{|\calT|}} \right) - \Psi_\calT \nonumber\\
&= 
\textstyle{\left( \sum_{i \in \calT} \frac{n \Phi_i \Psi_i |\calT| + n' \kappa}{n \Phi_i |\calT| + n' \kappa} \right) - \Psi_\calT} \nonumber\\
& \hspace{5mm} \text{(as $\textstyle{\Psi_i = \frac{1}{|\calU_i|} \sum_{u_j \in \calU_i} \psi_{j,i}}$ and $|\calU_i| = n \Phi_i$)} \nonumber\\
&= \textstyle{\left(\sum_{i \in \calT} \frac{(1-\lambda)\Phi_i \Psi_i |\calT| + \lambda \kappa}{(1-\lambda)\Phi_i |\calT| + \lambda \kappa} \right) - \Psi_\calT}. 
\label{eq:GMTGAPsi_simple}
\end{align}
The overall gains $\GMTGAPhi$ and $\GMTGAPsi$ in (\ref{eq:GMTGAPhi_simple}) and (\ref{eq:GMTGAPsi_simple}) are the same as the overall gains of RKVA (Random Key-Value Pair Attack) against PCKV-GRR and PCKV-UE \cite{Wu_USENIX22}. 
In RKVA, each fake user randomly selects a target key $i \in \calT$, applies the LDP mechanism to $\la i, 1 \ra$, and sends the result. 
Thus, RKVA can be regarded as an \textit{input poisoning attack}
\cite{Li_USENIX23}, which injects fake input to the LDP mechanism, and is much weaker than M2GA, the optimal \textit{output poisoning attack}. 
In other words, our protocol $\colorB{\calS_{\calD^*, \beta}^{\KV}}$ is much more robust than the existing protocols, such as PCKV-GRR and PCKV-UE. 
This is consistent with our experimental results.

\subsection{Proof of Theorem~\ref{thm:KV_accuracy}}

\noindent{\textbf{Expectation and Variance of $\hPhi_i$.}}~~For $i \in \Lambda$, there are $n \Phi_i$ users who have key $i$. 
For each of them, key $i$ is selected with probability $\frac{\beta}{\kappa}$ after padding-and-sampling and random sampling.  
In addition, (on average) $\colorB{\mu_2}$ dummy values are added to both $\la i, 1 \ra$ and $\la i, -1 \ra$. 
Thus, 
\begin{align*}
\E[\tc_{i,1} + \tc_{i,-1} | \Lambda] = \textstyle{\frac{n \beta \Phi_i}{\kappa} + 2 \colorB{\mu_2}}, 
\end{align*}
and 
\begin{align*}
\E[\hPhi_i | \Lambda] = \textstyle{\frac{\kappa}{n\beta}(\E[\tc_{i,1} + \tc_{i,-1} | \Lambda] - 2\colorB{\mu_2})} = \Phi_i, 
\end{align*}
which proves (\ref{eq:KV_Phi_bias}). 
In addition, we have 
\begin{align}
\V[\hPhi_i | \Lambda] 
&= \textstyle{\frac{\kappa}{n^2 \beta^2}(\V[\tc_{i,1} + \tc_{i,-1} | \Lambda])} \nonumber\\
&= \textstyle{\frac{\kappa^2}{n^2 \beta^2}(n \Phi_i \frac{\beta}{\kappa} (1 - \frac{\beta}{\kappa}) + 2 \colorB{\sigma_2^2})} \nonumber\\
&= \textstyle{\frac{\Phi_i(\kappa - \beta)}{n \beta} + \frac{2 \kappa^2 \colorB{\sigma_2^2}}{n^2 \beta^2}},
\label{eq:V_tc_i_1_m1_Lambda}
\end{align}
as each of $n \Phi_i$ users increases $\tc_{i,1} + \tc_{i,-1}$ by $1$ with probability $\frac{\beta}{\kappa}$ and dummy values of variance $\colorB{\sigma_2^2}$ are added to both $\la i, 1 \ra$ and $\la i, -1 \ra$. 
Therefore, (\ref{eq:KV_Phi_variance}) also holds.

\smallskip{}
\noindent{\textbf{Expectation and Variance of $\hPsi_i$.}}~By Taylor expansion, 
\begin{align}
\E[\hPsi_i | \Lambda] 
\approx \textstyle{\frac{\kappa}{n\beta} \frac{\E[\tc_{i,1} - \tc_{i,-1} | \Lambda]}{\E[\hPhi_i | \Lambda]}} 
= \textstyle{\frac{\kappa}{n\beta} \frac{\E[\tc_{i,1} - \tc_{i,-1} | \Lambda]}{\Phi_i}}
\label{eq:E_hps_i_Lambda}
\end{align}
for $i \in \Lambda$. 
Let $\calU_i$ be the set of users who have key $i$ ($|\calU_i| = n \Phi_i$), and  
let $\psi_{j,i} \in [-1,1]$ be the value of key $i$ held by user $u_j \in \calU_i$. 
For each user $u_j \in \calU_i$, key $i$ is selected with probability $\frac{\beta}{\kappa}$ after padding-and-sampling and random sampling, and the corresponding value becomes $1$ (resp.~$-1$) with probability $\frac{1 + \psi_{j,i}}{2}$ (resp.~$\frac{1 - \psi_{j,i}}{2}$). 
Moreover, the average of $\psi_{j,i}$ over $\calU_i$ is $\Psi_i$, and (on average) $\colorB{\mu_2}$ dummy values are added to both $\la i, 1 \ra$ and $\la i, -1 \ra$. 
Thus, we have 
\begin{align}
\E[\tc_{i,1} - \tc_{i,-1} | \Lambda] 
= \textstyle{\frac{\beta n \Phi_i}{\kappa} \cdot \Psi_i + \colorB{\mu_2} - \colorB{\mu_2}}
= \textstyle{\frac{\beta n \Phi_i}{\kappa} \cdot \Psi_i}.
\label{eq:E_tc_i1_tc_im1_Lambda}
\end{align}
By (\ref{eq:E_hps_i_Lambda}) and (\ref{eq:E_tc_i1_tc_im1_Lambda}), we have
\begin{align*}
\E[\hPsi_i | \Lambda] \approx \Psi_i, 
\end{align*}
which proves (\ref{eq:KV_Psi_bias}). 
In addition, by Taylor expansion, 
\begin{align}
\V[\hPsi_i | \Lambda] \approx \textstyle{\frac{\kappa^2}{n^2 \beta^2} \frac{\V[\tc_{i,1} - \tc_{i,-1} | \Lambda]}{\E[\hPhi_i | \Lambda]}} 
= \textstyle{\frac{\kappa^2}{n^2 \beta^2} \frac{\V[\tc_{i,1} - \tc_{i,-1} | \Lambda]}{\Phi_i}}. 
\label{eq:V_hps_i_Lambda}
\end{align}
Since 
$\V[\tc_{i,1} \pm \tc_{i,-1} | \Lambda] = \V[\tc_{i,1} | \Lambda] + \V[\tc_{i,-1} | \Lambda] \pm \text{Cov}(\tc_{i,1}, \allowbreak \tc_{i,-1} | \Lambda)$, we have 
\begin{align}
&\V[\tc_{i,1} - \tc_{i,-1} | \Lambda] \nonumber\\
&= 2( \V[\tc_{i,1} | \Lambda] + \V[\tc_{i,-1} | \Lambda]) - \V[\tc_{i,1} + \tc_{i,-1} | \Lambda].
\label{eq:V_tc_i_1_minus_V_tc_i_m1}
\end{align}
As with (\ref{eq:V_tc_i_1_m1_Lambda}), $\V[\tc_{i,1} + \tc_{i,-1} | \Lambda]$ can be written as 
\begin{align}
\V[\tc_{i,1} + \tc_{i,-1} | \Lambda] = \textstyle{n \Phi_i \frac{\beta}{\kappa} (1 - \frac{\beta}{\kappa}) + 2 \colorB{\sigma_2^2}}.
\label{eq:V_tc_i_1_m1_Lambda2}
\end{align}
Below, we calculate $\V[\tc_{i,1} | \Lambda]$. 
For $j \in \calU_i$, let $q_{j,i} \in [0,1]$ be the probability that user $u_j$ adds $\tc_{i,1}$ by one. 
Then, we have 
\begin{align}
q_{j,i} &= \textstyle{\frac{\beta}{\kappa} \cdot \frac{1 + \psi_{j,i}}{2}} \label{eq:V_tc_i_q_ji}\\
\Psi_i &= \textstyle{\frac{1}{|\calU_i|} \sum_{u_j \in \calU_i} \psi_{j,i} = \frac{1}{n\Phi_i} \sum_{u_j \in \calU_i} \psi_{j,i}}. \label{eq:V_tc_i_Psi_i}
\end{align}
Since dummy values of variance $\colorB{\sigma_2^2}$ are added to $\la i, 1 \ra$, 
\begin{align}
\V[\tc_{i,1} | \Lambda] 
&= \textstyle{\sum_{u_j \in \calU_i} q_{j,i}(1 - q_{j,i}) + \colorB{\sigma_2^2}} \nonumber\\
&= \textstyle{\sum_{u_j \in \calU_i} q_{j,i} - \sum_{u_j \in \calU_i} q_{j,i}^2 + \colorB{\sigma_2^2}} \nonumber\\
&\leq \textstyle{\sum_{u_j \in \calU_i} q_{j,i} - \frac{1}{|\calU_i|} (\sum_{u_j \in \calU_i} q_{j,i})^2 + \colorB{\sigma_2^2}} \nonumber\\
&\hspace{4.8mm} \text{(by the Cauchy–Schwarz inequality)} \nonumber\\
&= \textstyle{\frac{\beta}{\kappa} \cdot \frac{n\Phi_i (1+\Psi_i)}{2} - \frac{1}{n \Phi_i} \cdot (\frac{\beta}{\kappa} \cdot \frac{n\Phi_i (1+\Psi_i)}{2})^2 + \colorB{\sigma_2^2}} \nonumber\\
&\hspace{4.8mm} \text{(by (\ref{eq:V_tc_i_q_ji}) and (\ref{eq:V_tc_i_Psi_i})} \nonumber\\
&= n \Phi_i (q_i - q_i^2) + \colorB{\sigma_2^2}, 
\label{V_tc_i_nPhi_i_q_i}
\end{align}
where $q_i = \frac{\beta(1 + \Psi_i)}{2 \kappa}$. 
Similarly, $\V[\tc_{i,-1} | \Lambda]$ can be written as 
\begin{align}
\V[\tc_{i,-1} | \Lambda] \leq n \Phi_i (r_i - r_i^2) + \colorB{\sigma_2^2},
\label{V_tc_i_nPhi_i_r_i}
\end{align}
where $r_i = \frac{\beta(1 - \Psi_i)}{2 \kappa}$. 
By (\ref{eq:V_tc_i_1_minus_V_tc_i_m1}), (\ref{eq:V_tc_i_1_m1_Lambda2}), (\ref{V_tc_i_nPhi_i_q_i}), and (\ref{V_tc_i_nPhi_i_r_i}), 
\begin{align*}
\V[\tc_{i,1} - \tc_{i,-1} | \Lambda] 
\leq \textstyle{2 n \Phi_i (q_i - q_i^2 + r_i - r_i^2) - n \Phi_i \frac{\beta}{\kappa} (1 - \frac{\beta}{\kappa})}. 
\end{align*}
By (\ref{eq:V_hps_i_Lambda}), 
\begin{align*}
\V[\hPsi_i | \Lambda] \lesssim \textstyle{\frac{\kappa^2}{n \beta^2} ( 2(q_{i} - q_{i}^2 + r_{i} - r_{i}^2)- \frac{\beta}{\kappa}(1-\frac{\beta}{\kappa}))}, 
\end{align*}
which proves (\ref{eq:KV_Psi_variance}). 

\smallskip{}
\noindent{\textbf{Expected Squared Error.}}~~Item $i$ is not selected in the filtering step (i.e., $i \notin \Lambda$) with probability $\eta_i$ and that $\hPhi_i = 0$ and $\hPsi_i = 1$ in this case. 
Thus, we have 
\begin{align*}
\textstyle{\E[(\hPhi_i - \Phi_i)^2]} 
&= \textstyle{(1 - \eta_i)\V[\hPhi_i | \Lambda] + \eta_i \Phi_i^2 }, \\
\textstyle{\E[(\hPsi_i - \Psi_i)^2]} 
&= \textstyle{(1 - \eta_i)\V[\hPsi_i | \Lambda] + \eta_i (1 - \Psi_i)^2 }. 
\end{align*}
\qed
}

\end{document}